\DeclareMathOperator*{\argmin}{arg\,min}
\newcommand\independent{\protect\mathpalette{\protect\independenT}{\perp}}
\def\independenT#1#2{\mathrel{\rlap{$#1#2$}\mkern2mu{#1#2}}}
\newcommand{\Db}{\boldsymbol{D}}
\newcommand{\Gb}{\boldsymbol{G}}
\newcommand{\Hb}{\boldsymbol{H}}
\newcommand{\Mb}{\boldsymbol{M}}
\newcommand{\Sb}{\boldsymbol{S}}
\newcommand{\Ub}{\boldsymbol{U}}
\newcommand{\Wb}{\boldsymbol{W}}
\newcommand{\Xb}{\boldsymbol{X}}
\newcommand{\Zb}{\boldsymbol{Z}}
\newcommand{\gb}{\boldsymbol{g}}
\newcommand{\ub}{\boldsymbol{u}}
\newcommand{\xb}{\boldsymbol{x}}
\newcommand{\epsb}{\boldsymbol{\epsilon}}
\begin{document}

\title{Causal Aggregation: Estimation and Inference of Causal Effects by Constraint-Based Data Fusion}

\author{\name Jaime Roquero Gimenez \email roquero@stanford.edu \\
       \addr Department of Statistics\\
       Stanford University\\
       Stanford, CA 94305, USA
       \AND
       \name Dominik Rothenh\"ausler \email rdominik@stanford.edu \\
       \addr Department of Statistics\\
       Stanford University\\
       Stanford, CA 94305, USA}

\editor{David Jensen}

\maketitle

\begin{abstract}%
In causal inference, it is common to estimate the causal effect of a single treatment variable on an outcome. However, practitioners may also be interested in the effect of simultaneous interventions on multiple covariates of a fixed target variable. We propose a novel method that allows to estimate the effect of joint interventions using data from different experiments in which only very few variables are manipulated. If there is only little randomized data or no randomized data at all, one can use observational data sets if certain parental sets are known or instrumental variables are available. 
If the joint causal effect is linear, the proposed method can be used for estimation and inference of joint causal effects, and we characterize conditions for identifiability. In the overidentified case, we indicate how to leverage all the available causal information across multiple data sets to efficiently estimate the causal effects. If the dimension of the covariate vector is large, we may only have a few samples in each data set. Under a sparsity assumption, we derive an estimator of the causal effects in this high-dimensional scenario. In addition, we show how to deal with the case where a lack of experimental constraints prevents direct estimation of the causal effects. When the joint causal effects are non-linear, we characterize conditions under which identifiability holds, and propose a non-linear causal aggregation methodology for experimental data sets similar to the gradient boosting algorithm where in each iteration we combine weak learners trained on different datasets using only unconfounded samples. We demonstrate the effectiveness of the proposed method on simulated and semi-synthetic data.
\end{abstract}

\begin{keywords}
  causal inference, structural equation models, data fusion, randomized experiments.
\end{keywords}

\section{Introduction}\label{section:introduction}
Causal inference is a centerpiece of scientific research, with applications ranging from the social sciences to biology. Often, the goal in causal inference is to estimate the effect of one single variable on one outcome: randomizing that variable and evaluating its effect on the outcome is one way to do so. Randomization provides a gold standard procedure for identifying causal effects related to that variable, as the intervention removes any spurious association with the response due to unmeasured factors. Sometimes it is also of interest to estimate the effect of joint interventions on multiple variables on an outcome. Ideally, if all the variables are jointly randomized, one can precisely reconstruct the global causal mechanism, including potential interactions between covariates. In practice, however, we only have access to multiple individual experiments---also called environments---where just a few variables are simultaneously manipulated, which only provide partial information about such mechanism. We formulate a procedure for aggregating the knowledge obtained from several experiments that reconstructs a complex global causal model, capturing the causal effect of multiple covariates on the response as if they were all simultaneously manipulated. Our framework for aggregating causal information also allows the use of Instrumental Variables (IV) and covariate adjustment as building blocks. The approach is motivated by the following: during the last decade, internet companies have massively adopted a new experimental framework to improve their web-based products: the WebLab. Any website has a myriad of design choices built in it that affect the customer behavior, such as the ranking of articles in a newsfeed, the location of an ad within a webpage, etc. These companies have the possibility of running randomized A/B experiments where customers are redirected to slightly different versions of the website. Actionable insights may be obtained by evaluating downstream metrics that reflect the effect of a particular change in the website. Estimating the combined effect of several changes would require simultaneous randomization of many parameters, which may not be feasible in practice as the user experience would vary too much. Therefore it may be of interest to understand how insights from different individual experiments, each manipulating a small number of parameters, may be aggregated into a single causal model. The following examples illustrate the purpose of our method.

\begin{example}\label{example:LinearSEM}
Assume that variables are related via the structural causal model \citep{wright1921correlation,bollen1989structural} presented in Figure~\ref{fig:Ex1-1}.
\begin{figure*}[ht]
\centering
\includegraphics[trim={1cm 7.5cm 1cm 7.5cm},clip,width=.8\linewidth]{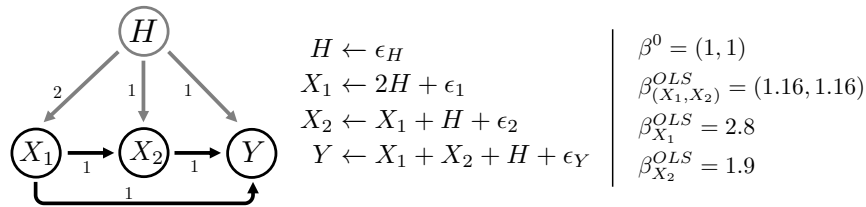}
\caption{Simple Linear Structural Equation Model: observational setting. Graphical representation of the linear SEM, along with the corresponding set of equations and regression coefficients. The disturbance variables $\epsilon_i$ are jointly independent standard Gaussian. }
\label{fig:Ex1-1}
\end{figure*}
\begin{figure*}[ht]
\centering
\includegraphics[trim={1cm 4.2cm 1cm 4.2cm},clip,width=.8\linewidth]{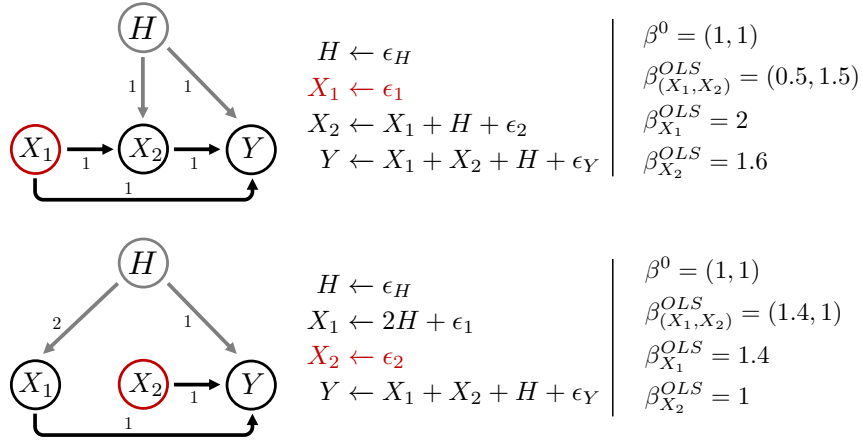}
\caption{Experimental environments.  Top: environment where $X_1$ is randomized. Graphically, the dependency on $H$ is removed. Only the structural equation defining $X_1$ is modified, and the regression coefficients are still biased. Bottom: environment where $X_2$ is randomized. In this case, regressing on $X_2$ leads to the direct causal effect. In each case, the disturbance variables $\epsilon_i$ are jointly independent standard Gaussian.}
\label{fig:Ex1-2}
\end{figure*}
The disturbance variables $\epsilon_i$ are jointly independent standard Gaussian. We observe samples from the covariates $X_1, X_2$ and the response $Y$.  $H$ is an unobserved variable that jointly affects $X_1,X_2$ and $Y$. We assume we do not know such structure, and we want to identify the vector $\beta^0 = (\beta^0_1, \beta_2^0)$ that defines the linear structural equation of $Y$ with respect to $(X_1,X_2)$. This vector defines the causal effect on $Y$ of a joint intervention that acts on both $X_1$ and $X_2$ \citep{pearl2009causal}. Estimating $\beta^0$ from observational samples alone is not possible. A quick computation shows that the coefficient $\beta^{OLS}_{(X_1,X_2)}$ from ordinary least squares (OLS) regression of $Y$ on $\{X_1, X_2\}$ based on observational data is biased (cf. Figure~\ref{fig:Ex1-1}). Individually regressing the response on any of the covariates $X_1$ or $X_2$ (that we denote $\beta^{OLS}_{X_1},\beta^{OLS}_{X_2}$ respectively) also produces biased results. This is expected, as the latent variable $H$ is simultaneously affecting the covariates and the response. $\beta^0$ can be estimated by OLS in an environment where we fully randomize all covariates. However, suppose we can at most randomize one variable at a time, leading to two different environments. We represent those environments in Figure~\ref{fig:Ex1-2}, along with the regression coefficients. Graphically, randomization removes the effect of the latent variable on the randomized covariate, but unfortunately the regression coefficients $\beta^{OLS}_{(X_1,X_2)}$ are still biased in both environments. Regressing $Y$ on individual covariates does not necessarily lead to the corresponding coordinate of $\beta^0$ either, even when those covariates are the ones randomized. The ``total causal effect'' of $X_1$ on $Y$ \citep{pearl2009causal}, evaluated by regressing $Y$ only on $X_1$ in the environment where $X_1$ is randomized, is equal to $2$ because it takes into account the effect of $X_1$ on $Y$ mediated by $X_2$. This situation does not occur in the environment where $X_2$ is randomized, where $\beta^{OLS}_{X_2} = \beta^0_2$. However we assume we do not know the graphical structure of the causal model so we do not know a priori whether the effect of $X_1$ on $Y$ is mediated by $X_2$. Without such knowledge, can we aggregate information from experimentation on individual covariates in different environments? Based on partial experimentation on covariates and other types of causal information, we will discuss how to reconstruct $\beta^0$---the ``causal'' coefficient vector, as if we were simultaneously randomizing all covariates.

\end{example} 

The problem of causal discovery focuses on learning the causal structure, such as identifying the subset of covariates with non-zero coefficients in $\beta^0$. Aggregating ``causal information'' from different environments requires additional knowledge about the environment, besides the data samples. In the above example, we know which covariates are randomized in each environment. But other situations may provide us with other types of ``causal information'' that we can aggregate to the experimental data. Instrumental Variable (IV) methods is one such situation where causal information is embedded in the requirements for a variable to be an instrument. More generally, knowledge about the structural causal model can also be aggregated to the previously mentioned methods. In particular, knowing the parental set of a covariate is another type of causal knowledge that we can leverage.

\begin{example}
We observe an additional variable $I$ within the linear SEM from the observational environment. Several assumptions are needed for $I$ to be an instrument. Most importantly, $I$ must be exogenous (has no parents in the graph) and does not directly affect the response $Y$ (exclusion restriction). The last condition needed asks that $I$ is relevant, that is, it affects the covariates in the model. However, we may not know in practice which covariates are direct descendants of $I$. Also, in usual IV regression we need at least as many instruments as there are confounded covariates. Can we leverage and combine incomplete causal knowledge derived from IV methods with experimental data? Our causal aggregation methodology does not require a priori to know which of $X_1$ or $X_2$ are affected by $I$. In particular, our method recovers $\beta^0$ by combining the IV based information from the model in Figure~(\ref{fig:Ex1-3}) and the information from the environment where $X_2$ is randomized.
\begin{figure*}[ht]
\centering
\includegraphics[trim={5cm 7.5cm 5cm 7.5cm},clip,width=0.6\linewidth]{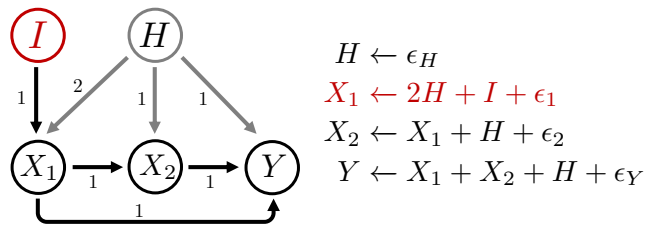}
\caption{Instrumental variables as causal information. A new observed variable is added to the graph: it corresponds to an instrument that leads to the second orthogonality constraint \eqref{equation:IVconstraint}. The noise contribution $\epsilon_i$ are jointly independent standard Gaussian.}
\label{fig:Ex1-3}
\end{figure*}
\end{example}

\textbf{Leveraging causal constraints arising from environments.} Example~\ref{example:LinearSEM} shows that recovering the causal model that defines $Y$ is a non-trivial task if the ability to intervene on the system is limited to manipulating small subsets of covariates, even in simple linear models. Fortunately, it is possible to extract from each experimental data set a set of constraints that partially identify $\beta^0$. Combining several of such constraints helps us identify $\beta^0$ and derive a procedure for constructing a consistent estimator. The intuition is that randomizing a covariate introduces exogenous randomness that is independent from all the other remaining elements in the system. For example, randomizing $X_1$ in Example~\ref{example:LinearSEM} removes the confounding effect induced by the latent variable $H$ when regressing $Y$ on $X_1$. This external manipulation of the covariate modifies the structural equation defining it (cf. top Figure~\ref{fig:Ex1-2}). Denoting by $(Y^e, \Xb^e)$ for $e\in \{1,2\}$ the random variables generated from the above model where $X_1^1$ (resp. $X_2^2$) have been randomized, we get the following system of equations in $\beta$ that is solved by $\beta^0$:
\begin{align}\label{eq:example-constraint}
\begin{cases}
& \mathbb{E}\big[X_1^1(Y^1- \Xb^{1,T}\beta)\big] = 0
\\ & \mathbb{E}\big[X_2^2(Y^2- \Xb^{2,T}\beta)\big] = 0
\end{cases}
\end{align}
under the assumption of a linear model for $Y|\Xb$ parametrized by $\beta = (\beta_1, \beta_2)$, where $\Xb = (X_1, X_2)$. These constraints reflect the independence between the randomized covariate and the residual term of the regression under the correct parameter value. Each equation imposes a ``causal constraint'' on the vector $\beta$. Given enough such constraints, $\beta^0$ is identified and the estimator $\hat{\beta}$ obtained as the solution to the empirical counterpart of the above system of equations is a consistent estimator of $\beta^0$. Causal constraints may originate from other assumptions on the data. If $I$ is an instrument for $X_1$ in the example above, then the following orthogonality constraint 
\begin{equation}\label{equation:IVconstraint}
    \mathbb{E}\big[I(Y-\Xb^T\beta)\big] = 0
\end{equation}
is satisfied whenever $\beta = \beta^0$. Other constraints can be constructed based on additional knowledge of the structural equations, in particular whenever we know the parental set of a given variable in the context of graphical representation of causal models. More generally, the structural equation of $Y$ may include non-linear terms in $\Xb$ that include interaction terms between the covariates, represented by a function $f^0(\Xb)$. Randomizing different subsets of covariates across different environments also leads to a joint system of equations similar to the system \eqref{eq:example-constraint} over the several environments: we estimate $f^0$ by constraining an estimator $\hat{f}$ so that the residuals $Y-\hat{f}(\Xb)$ and the randomized covariates are orthogonal.

\textbf{Intervening on subsets of $X$.} In this paper, we are interested in quantifying the total causal effect of \emph{joint} interventions on $X_1,\ldots,X_p$, which means that the variables $X_1,\ldots,X_p$ are set to a value at the same time. This is different from intervening just on one single variable. For example, in Figure~\ref{fig:Ex1-2}, intervening on $X_1$ changes the distribution of $X_2$ which then subsequently changes the distribution of $Y$. If on the other hand one intervenes on both $X_1$ and $X_2$ simultaneously, then randomization of $X_2$ destroys the causal pathway $X_1 \rightarrow X_2 \rightarrow Y$ and thus changes propagate through the system differently as in the case where only $X_1$ is randomized. If one is interested in the effect of an intervention on a subset $X_S \subseteq \{X_1,\ldots,X_p \}$, the methods of this paper still applies since one can simply set $\tilde X = X_S$ and apply the methods below for the subset of variables. 

\textbf{Our Contribution.} We define a general procedure for aggregating causal information from different environments where we leverage, in addition to the data samples, our knowledge of how the environment $e$ is generated. We present in Section~\ref{section:setting} a formal description of environments and how they relate via experimental manipulations. Starting with a linearity assumption in Section~\ref{section:linear}, we select \emph{constraint-inducing} variables $R$ based on such knowledge that define constraints of the form:
\begin{align*}
    0 = \mathbb{E}[R(Y - \Xb^T \beta^0 )]
\end{align*}
that must be satisfied by the true structural parameter $\beta^0$. In practice, constraints originate from randomized experiments, from the existence of an instrumental variable, but also if we have additional knowledge of the structure of the generative process. For example, if we know the parental set of a variable we can obtain another constraint via regression adjustment. These constraints are obtained with data from different sources that partially share some structure, in particular the structural parameter $\beta^0$. The general idea then is to construct estimators that simultaneously satisfy all the available constraints. We propose simple estimators that are asymptotically unbiased and normally distributed, provided that we have as many constraints as covariates that are not linearly dependent, under mild additional regularity assumptions such as finite moments. We also provide conditions for identifiability of $\beta^0$ based on these constraints. Finally, we analyze the case where the linear system is over-determined. Our solution relies on the method of moments (MM) theory, and we show how to optimally weight the constraints to obtain an asymptotically efficient estimator of $\beta^0$. Aggregating additional constraints obtained from new environments always reduces the asymptotic variance of our estimator.

A major challenge arises in the high-dimensional case, where there are not enough constraints or samples to accurately use the main methodology from the low-dimensional linear case. This is important whenever there is a very large number of experiments, but within each data set only a few samples are available. Another use case is when the system is under-determined, in the sense that we have fewer constraints than the number of covariates: multiple values for the estimator can simultaneously satisfy all constraints, and $\beta^0$ is not identifiable. Additional assumptions such as sparsity in $\beta^0$ can help: we propose in Section~\ref{section:high-dimensional} a regularized estimator similar to Causal Dantzig \citep{rothenhausler2019causal} that provably recovers $\beta^0$ under additional regularity assumptions. Whenever we have a large number of experiments, and the sample size in each experiment is large compared to the logarithm of the number of covariates, our method leads to a consistent and data-efficient procedure to estimate the causal parameter. In the purely under-determined case, additional constraints are required on the structure of the causal model for our method to work. Since these restrictions are relatively strong, in this high-dimensional case we recommend reducing the number of covariates through a pre-screening step before using regularized causal aggregation.

Building on the linear aggregation framework, we develop in Section~\ref{section:non-linear} a non-linear causal aggregation procedure following a boosting approach to construct estimators from a simple class of base learners. This procedure opens the possibility to learn complex non-linear causal models with potential interactions between covariates. We assume that the noise is additive but not necessarily independent of covariates. Using unconfounded (randomized) covariates, we construct individual base learners within each environment, and then combine these using a linear aggregation step as a sub-routine in the general boosting update. We assume that the randomized covariates are known in each environment. In addition to this flexible procedure, we characterize necessary conditions on the environments to identify the true non-linear causal structural equation of the response.

\section{Related Work}\label{section:related_work}
Learning a causal structure from multiple data sources has a long history in the literature. \citet{cooper1997simple} developed an algorithm for causal discovery based on independence tests. \citet{tian2001causal} combine a constraint based approach with background knowledge inferred from analyzing interventional data. \citet{sachs2005causal} use a score-based algorithm that searches through the space of directed acyclic graphs. \citet{eaton2007exact} use Bayesian inference to learn the causal structure from interventions with unknown effects.  \citet{eberhardt2010combining} describe a method to learn both the graph and the causal relations between a set of variables in presence of confounding based on experiments by first estimating a total effect matrix and then inferring direct effects from it. \citet{hauser2012characterization} present a modification of greedy equivalence search to learn the causal structure from interventional data. Similarly, \citet{hyttinen2012learning} discuss identifiability and present a search algorithm for learning linear cyclic models in presence of latent variables.  \citet{mooijheskes2013} learn cyclic causal models from equilibrium data collected under different experimental and observational contexts. \citet{mooij2016joint} propose an approach that allows different types of interventions and can be seen as a unification of several existing discovery methods. In contrast to these methods, we do not aim to reconstruct the causal graph or the overall structure but leverage constraints on causal effects.

Relatively recently, invariance principles have been exploited to estimate causal effects based on several data sets \citep{peters2016causal,heinze2018invariant,magliacane2018domain,pfister2019invariant,rothenhausler2019causal}. Our approach is similar in the sense that we can use data from different environments. In some of the work, it is possible to add background knowledge in the form of logical constraints to the optimization procedure. \citep{hyttinen2014constraint,magliacane2018domain}. In contrast, we allow to incorporate information about parental sets and experimental data not as logical constraints but as gradient information. 

Related to our work is do-calculus and the data-fusion framework \citep{bareinboim2013general,bareinboim2013meta,pearl2014external,bareinboim2016causal}. In this line of work, the authors present a powerful nonparametric framework to combine data sets to estimate a causal query, given the directed acyclic graph. Our approach is different in the sense that we do not assume that the graph is known but restrict the functional complexity of the structural equations. This distinction can be important in practice, since the graph is unknown in many cases. While the graph can be estimated using structure learning algorithms, such algorithms will usually make some errors, which propagate to the estimation step. In general, it is challenging to provide confidence intervals that take into account both the uncertainty in graph estimation and the uncertainty due to the estimation step. The proposed estimator allows to skip the graph estimation step which results in straightforward uncertainty quantification. In cases where the graph is known or can be estimated with high accuracy, the output of do-calculus and the data fusion framework can often be used in conjunction with the proposed method. This is explored in Section~\ref{sec:do-calc-sim}, where we demonstrate that causal constraints from the data fusion framework and other background knowledge can be used to improve precision.

In the potential outcome framework, there has been some recent work to combine data sets for causal inference. \citet{athey2016estimating} use surrogates to estimate long-term outcomes. \citet{kallus2018removing} use limited experimental data to remove the confounding on larger observational data under a linearity assumption. \citet{yang2020combining} consider a setting where the researcher is given a small unconfounded validation data set and a large confounded data set.

In applications, it is common to use two-stage least squares with multiple instruments, see for example \citet{mogstad2019identification}. We add to this literature by allowing for a wider range of causal constraints not necessarily based on linear models. Non-linear instrumental variable methods \citep{newey2003instrumental, carrasco2007linear, darolles2011nonparametric, singh2019}  provide a flexible framework for estimating non-linear causal effects based on exogenous instruments. Our non-linear aggregation method instead focuses on combining causal information based on different experimental datasets. A boosting version of non-linear instrumental variables has been recently proposed \citep{bakhitov2021causal} which is closer to our work. Again, these methods are devised for observational datasets with instruments, as opposed to considering the more general problem of leveraging different causal identification strategies across multiple data sets.

\section{Setting and Notation}\label{section:setting}

We assume that samples originate from different sources that we call environments, each characterized by their own data generating distribution. Let $\mathcal{E}$ denote the set of environments. We collect data from $E = |\mathcal{E}| \geq 2$ environments, where for each $e \in \mathcal{E}$ we have $n_e$ i.i.d. samples $(\Xb_i^e, Y_i^e)_{1\leq i \leq n_e}$, and let $n = \sum_e n_e$ the total number of samples. The $p$-dimensional random vector $\Xb^e \in \mathbb{R}^p$ (we denote by bold letters multivariate random variables) corresponds to the covariate vector in environment $e$, and $Y^e$ corresponds to the response variable. When needed, for simplicity we denote the response via a $p+1$-th indexed covariate $X_{p+1}^e := Y^e$. The environment $e$ is therefore characterized by $\mathbb{P}^e = \mathcal{D}(\Xb^e, Y^e)$ where $\mathcal{D}(U)$ denotes the distribution of $U$. In particular, we assume that we collect across environments the same real-valued covariates indexed by $[p] := \{1, \dots, p\}$, so that $\{\mathbb{P}^e, e \in \mathcal{E}\} \subset \mathcal{P}(\mathbb{R}^p\times \mathbb{R})$ where $\mathcal{P}(\mathcal{X})$ denotes the set of probability distributions over the space $\mathcal{X}$. Additionally, we assume that there are $p'\geq 1$ unobserved variables $\Hb= (H_l)_{l\in [p']}$ that jointly affect the covariates and the response. In order to define how the different $\mathbb{P}^e$ are related, we start with an observational base distribution $\mathbb{P}^0 \in \mathcal{P}(\mathbb{R}^p\times \mathbb{R})$. We assume that samples from $\mathbb{P}^0$ are generated by an acyclic linear Structural Equation Model (SEM) $\mathcal{M}^0$ with latent variables \citep{bollen1989structural,pearl2009causal}:
\begin{equation}\label{def:linearSEM}
\mathcal{M}^0 : \begin{cases}
    X_j \longleftarrow \sum_{k \in pa_0(j)}a_{jk}X_k + \sum_{l\in [p']}c_{jl}H_l + \epsilon_j \qquad \text{for all}\; 1\leq j \leq p
    \\ Y \longleftarrow \sum_{k \in [p]}\beta^0_{k}X_k + \sum_{l \in [p']}d_l H_l + \epsilon_{Y}
\end{cases}
\end{equation}
where $(a_{jk})_{1\leq j\leq p,1\leq k\leq p+1} \in \mathbb{R}^{p\times (p+1)}$, $\beta^0 \in \mathbb{R}^p$ and $\text{pa}_0(j):= \{k: a_{jk\neq 0}\} \subset [p+1]$. For concreteness, the effect of latent variables is assumed to be linear with coefficients $c_{jl}\in \mathbb{R}$, $d_l\in \mathbb{R}$, although our theory actually does not require linear latent effects on the observed variables. We associate to $\mathcal{M}^0$ a directed graph $G^{0} = (V^0, E^0)$ where the set of vertices $V^0 = [p+1]$ corresponds to the observed variables in the SEM, and a directed edge $(j,k) \in E^0$ iff $k\in \text{pa}_0(j)$. The subset $\text{pa}_0(j)$ corresponds to the \emph{parent nodes} of $j$ in $G^0$, which we assume is a Directed Acyclic Graph (DAG). This translates into constraints on the sets of coefficients $\{a_{jk}\}_{jk}, \beta^0$ that define equation~\ref{def:linearSEM}. We allow the response variable $Y$ to be a parent node of any covariate. The random variables $\epsb := \{\epsilon_j\}_{j \in [p]}\cup \{\epsilon_Y\}$---also referred as disturbance terms---are assumed to be centered, and have finite second moments. Furthermore, the disturbance terms are assumed to be independent of $\Hb$ and jointly independent. For simplicity, we may also denote $\epsilon_{p+1}:=\epsilon_Y$ whenever we refer to sets $\{\epsilon_j\}_{j\in J}$ with $J\subset[p+1]$. 
We can complete the graph $G^0$ by adding the vertices $\Hb = (H_l)_{ 1 \le l \le p'}$ to the set of nodes and define an extended graph $\bar{G}^0 := (\bar{V}^0, \bar{E}^0)$, where $\bar{V}^0 = V^0 \cup \{\Hb\}$ and $\bar{E}^0$ contains all edges in $E^0$ plus edges between a component $H_l$ and nodes of variables in $(\Xb, Y)$ whenever any latent variable $H_l$ has an effect on that variable. 
That is, unless we have explicit indication of the contrary, we assume that all covariates are potentially affected by the latent factors (i.e. $c_{jl}\neq 0, d_l \neq 0$). The graph $G^0$ encodes the input-output relations between observed variables given by the structural equations in $\mathcal{M}^0$. The structural model implicitly assumes that latent variables are not affected by observed variables. Figure~\ref{fig:Def-1} is an example of a most general $\bar{G}^0$ under our model. The joint distribution $\mathbb{P}^0$ over $(\Xb, Y)$ is properly defined given distributions $\mathcal{D}(\epsb), \mathcal{D}(\Hb)$ and the structural equations: we reformulate the model $\mathcal{M}^0$ as a linear system of equations.
\begin{equation}
    \mathcal{M}^0 : 
    \begin{pmatrix}
    \Xb \\ Y
    \end{pmatrix} \longleftarrow M^0\begin{pmatrix}
    \Xb \\ Y
    \end{pmatrix} + M_H\Hb + \epsb
\end{equation}
where $M^0\in \mathbb{R}^{(p+1)\times(p+1)}$ is a matrix that contains the structural parameters $a_{jk}, \beta^0_k$ between observed covariates and $M_H\in \mathbb{R}^{(p+1)\times p'}$ is the matrix containing coefficients $c_{jl}, d_l$. As $G^0$ is acyclic, the matrix $(I_{p+1} - M^0)$ is invertible. This guarantees that the distribution over the observed variables is well-defined.
\begin{figure*}[ht]
\centering
\includegraphics[trim={1cm 6.5cm 1cm 7cm},clip,width=.8\linewidth]{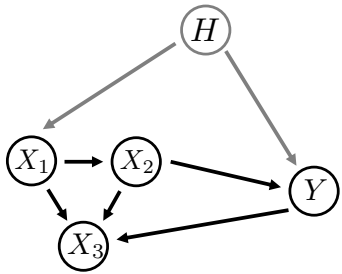}
\caption{Example of a SEM model: The SEM contains latent variables that have a confounding effect on the covariate-response relationship, and the response is potentially a parent of other covariates.}
\label{fig:Def-1}
\end{figure*}
As discussed in Section~\ref{section:introduction}, estimation of $\beta^0$ based only on samples from $\mathbb{P}^0$---for example, using least squares regression---might be subject to bias due to the presence of the latent variables. We want to impose the weakest possible assumptions on how the distributions $\mathbb{P}^e$ are generated, but that still allow us to recover $\beta^0$, which we assume remains invariant across environments. We now introduce causal models that encode the different types of interventions that lead to different environments. Importantly, in practice we only need to know which covariates receive the interventions. Knowledge of the graph structure of $\bar{G}^0$ is not needed, although we will show later how such information may be helpful in certain cases. However, we will in general estimate parameters that are related to those variables that are perturbed: having a flexible model for representing perturbations for as many variables as possible is therefore crucial in our framework.

New environments arise when subsets of covariates are manipulated. We use superscripts $e\in \mathcal{E}$ to indicate the elements of the structural equation that are environment-dependent. For environment $e$, we index by $\phi(e) \subset [p]$ the subset of randomized covariates, which may be an empty set. We can then write the corresponding structural equations that define the distribution $\mathbb{P}^e$:
\begin{align}\label{equation:perturbedSEM}
\mathcal{M}^e_{\phi(e)} : \begin{cases}
    X_j^e \longleftarrow \sum_{k \in pa_e(j)}a^e_{jk}X_k^e + \sum_{l\in [p']}c^e_{jl}H_l^e + \epsilon^e_j & \forall j \notin \phi(e)
    \\ X_j^e \longleftarrow \epsilon_j^e & \forall j \in \phi(e)
    \\ Y^e \longleftarrow \sum_{k \in [p]}\beta^0_{k}X_k^e + \sum_{l \in [p']}d_l^e H_l^e + \epsilon^e_Y
\end{cases}
\end{align}
In particular $(p+1) \notin \phi(e)$ for all $e\in \mathcal{E}$, i.e. no interventions allowed on the response variable. We now define a graph $G^e = (V^e, E^e)$ as above that encodes the model $\mathcal{M}^e$: we have that $V^e = V^0$, and we assume that the edges are such that $\text{pa}_e(j) \subset \text{pa}_0(j)$ for all $j\in [p+1]$, which implies that $G^e$ is also a DAG. The randomization intervention on covariates $\Xb_{\phi(e)}^e$ implies that the corresponding nodes have no incoming edges. That is, $E^e = E^0\setminus\{(j,k) \in E^0 | k\in \phi(e)\}$. The set of coefficients $\{a_{jk}^e\}_{jk}$ may change provided that no new dependencies between covariates are created in the structural equations. Variables $\epsb^e$ satisfy the same assumptions as in the observational environment: they are centered, have finite second moments, are independent of $\Hb^e$ and jointly independent. We also allow for arbitrary changes in the coefficients $c_{jl}^e$ and $d_l^e$ that are non-zero, as well as the latent variable distribution $\mathcal{D}(\Hb^e)$ unless otherwise indicated. The completed graph $\bar{G}^e := (\bar{V}^e, \bar{E}^e)$ still has $\bar{V}^e = \bar{V}^0$, but $\bar{E}^e = E^e \setminus \{(H_l,j)\}_{j \in \phi(e)}$: randomization removes edges connecting the latent confounders with the randomized covariates.

In conclusion, the assumptions on how this intervention mechanism acts on the system can be summarized as constraints on the extended graphs $\bar{G}^e$. We additionally assume that the different distributions $\{\mathbb{P}^e\}_{e\in \mathcal{E}}$ have the same support. 

This model for generating environment distributions $\mathbb{P}^e$ is closely related to ``surgical interventions'' in \cite{pearl2009causal}, also called ``ideal interventions'' \citep{spirtes2000causation} or ``structural interventions'' \citep{eberhardt2007interventions}. Such interventions replace those structural equations of the intervened variables $j\in \phi(e)$ by an independently generated $\epsilon^e_j$, leaving the rest unchanged. That is a strong assumption: manipulating variables could lead to ``spill-over'' effects on other variables whose disturbance distribution is shifted or structural equations perturbed. In contrast, our definition of $\mathbb{P}^e$ allows for changes in the structural equations of the other covariates as well as the joint distribution of the latent variables $\Hb$ and disturbance variables $\epsb$: we only impose invariance assumptions on $\beta^0$ and on the parental sets $pa_e(j)$ for all $e$ that are contained in $pa_0(j)$.

As long as we assume that $\beta^0$ is invariant, we can define additional models for generating environments. We introduce in the Appendix~\ref{section:additive_interventions} a model where environments are generated by additive shifts in the distribution $\mathcal{D}(\epsb)$ of the disturbance terms. These are a special case of so-called ``parametric'' interventions \citep{eberhardt2007interventions} or ``soft interventions'' \citep{eaton2007exact}. Randomizing covariates is a direct experimental intervention: the validity of the resulting constraints is thus verified by the scientist's intervention on the data generating procedure. On the other hand, the effects of soft interventions are sometimes not as easily verifiable, as shifts in latent factors are often not under the direct control of the scientist. Thus we need to trade-off this ``weaker'' causal knowledge with stronger assumptions. In particular we require the stability of the structural equations (i.e. the coefficients $a_{jk}$) across environments, as well as the base distributions $\mathcal{D}(\Hb)$, $\mathcal{D}(\epsb)$. To summarize, we can aggregate causal information originating from a flexible collection of models that represent environment heterogeneity, where side knowledge on the ``causal'' mechanism that generates the environment trades-off with our assumption on how stable such mechanism is across environments.

We additionally present in Section~\ref{section:non-linear} a non-linear extension of the above SEM. This increased flexibility restricts the types of causal information that we can aggregate to those arising from a randomized experiment only, as we can no longer leverage instrumental variables or covariate adjustment whenever the parental set of a variable is known. This model not only allows non-linear response structural equations, but can also include interaction terms between covariates. Identifying these effects is possible when an environment simultaneously randomizes the interacting covariates: the causal information contained in environments arising from other types of interventions---such as additive shifts---is harder to aggregate.

We use the following notation throughout the paper. For a vector $\ub \in \mathbb{R}^p$, we write for any subset $J \subset [p]$, $\ub_J := (u_j)_{j\in J}$. We denote by $|J|$ the cardinality of set $J$. For any $q\in [1,\infty]$, let $\Vert \ub\Vert_{q}$ be the $q$-norm of $\ub$, and given a positive semi-definite symmetric matrix $\Mb$, let $\Vert\ub\Vert_{\Mb}^2 := \ub^T\Mb\ub$. We denote by $e_j\in \mathbb{R}^p$ the $j$-th coordinate vector, a one-hot vector where the $j$-th coordinate is equal to $1$.

\section{Causal Aggregation in the Linear Case}\label{section:linear}

Assuming a linear structural equation model leads to an intuitive method for aggregating causal information where constraints are derived from each environment and we build an estimator to simultaneously satisfy these constraints. We start by characterizing constraints in Section~\ref{section:subsection:linear_constraints} and then formulate our aggregation procedure in Section~\ref{section:subsection:linear_aggregation}.

\subsection{Linear Constraints}\label{section:subsection:linear_constraints}

We can identify the true vector $\beta^0$ in our linear SEM model $\mathcal{M}^0$ by aggregating multiple sources of information about the causal structure. As illustrated in Example~\ref{example:LinearSEM}, randomization of a covariate leads to a linear constraint that should be satisfied by any estimator $\hat{\beta}$ of $\beta^0$. Linear constraints are also obtained if instrumental variable (IV) assumptions hold for some specific variable. Broadly speaking, under the linear SEM model the residual term $Y - \beta^T\Xb$ is equal to $\epsilon_Y$ only for the true value of $\beta = \beta^0$. Based on the assumption at hand, we formulate a linear equality that must be satisfied by the true parameter $\beta^0$, which is generally a consequence of the independence between $\epsilon_Y$ and other variables in the model. 

\subsubsection{Instrumental Variables}
We obtain an orthogonality constraint whenever we have access to an instrument $I$. Instrumental Variables (IV) methods \citep{wright1928tariff,heckman1990varieties,angrist1995identification, angrist1996identification} are based on several assumptions that we translate in graphical terms within our linear SEM. For those environments $e$ where $I$ is available, $\bar{G}^e$ has an additional node $I$. The instrument is \emph{exogenous}, meaning that the node has no parents: there is no directed edge from $H$ nor any other node to $I$. It is \emph{relevant} and satisfies the \emph{exclusion restriction}: these assumptions correspond to $I$ having only covariate nodes as potential children but not the response $Y$ - and at least one such child. For the true $\beta^0$, the following constraint holds:
\begin{equation}\label{eq:orthogonality_IV}
    0 = \mathbb{E}[I^e(Y^e - \beta^{0,T}\Xb^e)]
\end{equation}
The IV constraint is a direct consequence of the independence between the instrument and the response disturbance term. An instrument does not need to be measured in every single environment: the causal information does not rely on comparing different environments unlike other constraints later described. In practice, we use the sample average to build the constraint that must be satisfied by the estimator:
\begin{equation*}
    0 = \frac{1}{n_e}\sum_{i\in [n_e]}I^e_iY^e_i - \big(\frac{1}{n_e}\sum_{i\in [n_e]}I^e_i\Xb^e_i\big) \beta
\end{equation*}
Estimating $\beta^0$ via IV methods is usually done by solving a generalized method of moments (MM) problem based on constraints originating from several instruments. A general treatment of IV methods can be found in reference textbooks \citep{hall2005generalized} which provide necessary conditions for identification of $\beta^0$. In particular, there must be at least as many instruments as covariates to build an estimator $\hat{\beta}$ by solving the potentially overidentified system of equations. 

\subsubsection{Experimental Data from Randomization}

A more direct method for obtaining an orthogonality constraint is via randomization of a covariate. Following our definition of a model $\mathcal{M}_{\phi(e)}^e$ for environment $e$ where the subset $\phi(e)\subset [p]$ indexes the covariates that are randomized, the following equation holds:
\begin{equation}\label{eq:orthogonality_intervention_linear}
    0 = \mathbb{E}[X^e_j(Y^e - \beta^{0,T}\Xb^e)] \qquad \forall j\in \phi(e)
\end{equation}
Again, the equation above is a direct consequence of the independence between the randomized covariate and $\epsilon_Y^e$. Although data from experimental sources is expensive and limited, randomization provides a solid guarantee that the assumption leading to the orthogonality condition for identifying $\beta^0$ holds. Again, we use sample averages when building the constraint:
\begin{equation*}
    0 = \frac{1}{n_e}\sum_{i\in [n_e]}X^e_{j,i}Y^e_i - \big(\frac{1}{n_e}\sum_{i\in [n_e]}X^e_{j,i}\Xb^e_i\big)^T \beta
\end{equation*}
As indicated in Example~\ref{example:LinearSEM}, full simultaneous randomization of the covariates leads to a linear system of equations with $\beta^0$ as the unique solution. Our proposed method allows to individually treat the constraint from each randomized covariate and thus be able to aggregate them across environments.

\subsubsection{Regression Adjustment}\label{section:regression_adjustment}
Causal inference based on graphical models heavily relies on conditional independence statements between variables that are encoded by the DAG. Pearl's do-calculus identifies causal effects in a causal DAG by transforming conditional statements based on intervened distributions into conditional statements on the observational distribution \citep{pearl2009causal}. A fundamental assumption is the knowledge of the DAG, which often times needs to be estimated in practice. Errors in estimating the graph can drastically change the conclusions on causal effects, which is one motivation for developing this aggregation framework which circumvents estimating $\bar{G}^0$. However, we may have partial knowledge of the graph structure, which can be incorporated into our causal aggregation methodology in the form of constraints. If, for a given variable $X_j$, we assume that we know its parental set in $\bar{G}^0$ and that the latent confounders are not in such parental set, then based on the ``adjustment for directed causes'' property   \cite[Theorem 3.2.2]{pearl2009causal} we have the following constraint for variable $j$ :
\begin{equation}\label{eq:orthogonality_adjustment}
    0 = \mathbb{E}[X_j^e(Y^e - \beta^{0,T}\Xb^e)| \Xb_{pa_0(j)}^e]
\end{equation}
In essence, we use a conditional independence property based on the fact that the distribution $\mathbb{P}^e$ factorizes in $\bar{G}^0$, but as we can not condition on the unobserved confounder we need to additionally assume that it is not in the parental set of the node in $\bar{G}^0$, which is thus the same as the parental set in $G^0$. In practice, we consider the residual variable derived from regressing $X_j^e$ on its parents to derive the linear constraint:
\begin{align*}
    0 = \frac{1}{n_e}\sum_{i\in [n_e]}(X^e_{j,i}-\hat{X}^e_{j,i})Y^e_i - \big(\frac{1}{n_e}\sum_{i\in [n_e]}(X^e_{j,i}-\hat{X}^e_{j,i})\Xb^e_i\big)^T \beta,
\end{align*}
where $\hat{X}^e_{j,i} := \sum_{k\in pa_e(j)}\hat{a}_{jk} \Xb^e_{k,i}$ and $\hat{\gamma}_j := (\hat{a}_{jk})_{k\in pa_e(j)}$ is the regression coefficient of $X_j^e$ on $\Xb^e_{pa(j),i}$. Importantly, to simplify asymptotic deductions below, we assume that the estimator $\hat{\gamma}$ is computed on a different data set than the one used for constructing the orthogonality constraint (for notation simplicity we will use $\gamma$ to refer to the regression coefficient of a covariate on its parental set). In our framework, this is not very stringent, as different environments may provide enough samples to do this. Any two environments where we know that the structural equation of $X_j$ has not changed---this precludes any environment with $X_j$ randomized---can be used for estimating $\gamma$ and the orthogonality constraint independently (hence the notation without environment subscript).

\subsubsection{Additive interventions across environments}
We can derive constraints based on how different environments \emph{relate to each other}. In contrast with all previously mentioned examples of ``causal information'', we can derive orthogonality constraints based on the \emph{inner product invariance} \citep{rothenhausler2019causal} under $\beta^0$ for pairs of environments that are generated via additive interventions with respect to the observational base distribution $\mathbb{P}^0$. Assume that in environment $e$ covariates $X_j$ for $j\in \psi(e)$ have an additive intervention given by the model $\mathcal{M}_{\psi(e)}^e$ as defined in Appendix~\ref{section:additive_interventions}. The distributional shift induced by the additive intervention leaves the expression $\mathbb{E}[X^e_jY^e - X_j^e\Xb^{e,T} \beta^0]$ invariant across environments. The intuition behind this approach is that the covariance between $X^e$ and the residuals $Y^e - \Xb^{e,T} \beta^0$ is a measure of the strength of confounding. Under certain assumptions, the strength of confounding is invariant across settings, which can be leveraged for statistical inference. We thus have the following orthogonality constraint for variable $X_j$ for $j\in \psi(e)$ by merging data from the base distribution and the intervened one: 
\begin{equation}\label{eq:orthogonality_dantzig}
    0 = \mathbb{E}[X^e_j Y^e - X_j^0 Y^0 - (X_j^e\Xb^{e,T} - X_j^0\Xb^{0,T})\beta^{0}]
\end{equation}
This orthogonality constraint is crucial for the Causal Dantzig~\citep{rothenhausler2019causal}. We leave it as an additional way of constraining the parameter vector that leads to consistent estimators of $\beta^0$, but for readability it will not be included in finer analysis of the asymptotic behavior. We summarize in the following proposition the set of orthogonality constraints that we derived in the previous sections.
\begin{proposition}\label{proposition:linear_constraints}
The causal vector $\beta^0$ satisfies the linear constraints as defined above via either instrumental variables in eq.~\eqref{eq:orthogonality_IV}, randomization in eq.~\eqref{eq:orthogonality_intervention_linear}, regression adjustment in eq.~\eqref{eq:orthogonality_adjustment}, or inner product invariance in eq.~\eqref{eq:orthogonality_dantzig}. In all these cases the constraint in $\beta$ can be summarized 
\begin{equation}\label{eq:simple-linear-constraint}
    \gb^T \beta = z
\end{equation}
where $\gb \in \mathbb{R}^p$, $z\in \mathbb{R}$ are some specific transformations of population-level cross-covariances obtained at each environment.
\end{proposition}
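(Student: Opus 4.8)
The plan is to observe that every one of the four constraints listed in Proposition~\ref{proposition:linear_constraints} is \emph{affine} in $\beta$, so the entire content of the statement is just to apply linearity of the expectation operator and collect terms: whatever multiplies $\beta$ becomes $\gb$, and whatever is left becomes the scalar $z$. First I would treat the three cases that are already written as unconditional expectations, since for those the claim is immediate. For the instrumental-variable constraint \eqref{eq:orthogonality_IV} I would expand
\begin{equation*}
0 = \mathbb{E}[I^e(Y^e - \beta^{0,T}\Xb^e)] = \mathbb{E}[I^e Y^e] - \mathbb{E}[I^e \Xb^{e,T}]\,\beta^0,
\end{equation*}
and read off $\gb := \mathbb{E}[I^e \Xb^e] \in \mathbb{R}^p$ and $z := \mathbb{E}[I^e Y^e] \in \mathbb{R}$. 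The randomization constraint \eqref{eq:orthogonality_intervention_linear} is handled identically with $I^e$ replaced by $X_j^e$, giving $\gb = \mathbb{E}[X_j^e \Xb^e]$ and $z = \mathbb{E}[X_j^e Y^e]$ for each $j \in \phi(e)$. For the inner-product-invariance constraint \eqref{eq:orthogonality_dantzig} the same expansion across the pair of environments yields $\gb = \mathbb{E}[X_j^e \Xb^e] - \mathbb{E}[X_j^0 \Xb^0]$ and $z = \mathbb{E}[X_j^e Y^e] - \mathbb{E}[X_j^0 Y^0]$. In all three cases $\gb$ and $z$ are, as claimed, specific transformations of population-level cross-covariances, so \eqref{eq:simple-linear-constraint} holds.

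The one case that needs a genuine argument rather than bookkeeping is the regression-adjustment constraint \eqref{eq:orthogonality_adjustment}, because it is stated as a \emph{conditional} expectation and therefore is not yet of the scalar form $\gb^T\beta = z$. Here I would pass from the conditional statement to an unconditional one by residualizing: writing $\gamma_j$ for the population regression coefficient of $X_j^e$ on $\Xb_{pa_0(j)}^e$ and setting the residual $\tilde{X}_j^e := X_j^e - \gamma_j^T \Xb_{pa_0(j)}^e$, I would note that $\tilde{X}_j^e$ is a deterministic (measurable) function of $(X_j^e, \Xb_{pa_0(j)}^e)$. Multiplying \eqref{eq:orthogonality_adjustment} by the $\Xb_{pa_0(j)}^e$-measurable factor coming from the residualization and applying the tower property of conditional expectation collapses the conditional constraint to
\begin{equation*}
0 = \mathbb{E}\big[\tilde{X}_j^e (Y^e - \beta^{0,T}\Xb^e)\big] = \mathbb{E}[\tilde{X}_j^e Y^e] - \mathbb{E}[\tilde{X}_j^e \Xb^{e,T}]\,\beta^0,
\end{equation*}
which is again affine in $\beta$ with $\gb = \mathbb{E}[\tilde{X}_j^e \Xb^e]$ and $z = \mathbb{E}[\tilde{X}_j^e Y^e]$. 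I expect this residualization step to be the main (and essentially only) obstacle: one must justify that conditioning on the parental set lets the unobserved confounder drop out—this is precisely where the assumption that the latent variables lie outside $pa_0(j)$ is used, via the factorization of $\mathbb{P}^e$ along $\bar{G}^0$ invoked in Section~\ref{section:regression_adjustment}—and that the sample-level fit $\hat{\gamma}_j$, computed on an independent data set, converges to the population $\gamma_j$ so that the empirical constraint targets the correct population quantity.

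Finally I would remark that the empirical constraints displayed after each of \eqref{eq:orthogonality_IV}--\eqref{eq:orthogonality_dantzig} are exactly the plug-in sample analogues of the affine forms derived above, so that the estimating equations inherit the same linear structure $\hat{\gb}^T\beta = \hat{z}$ with $\hat{\gb},\hat{z}$ consistent for $\gb,z$. This unified representation is what the later aggregation procedure in Section~\ref{section:subsection:linear_aggregation} stacks into a linear system, so no further structure beyond \eqref{eq:simple-linear-constraint} is needed here.
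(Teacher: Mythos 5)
Your reduction of each constraint to the affine form $\gb^T\beta = z$ is correct but it only addresses the second, purely mechanical half of the proposition. The substantive claim---and essentially the entire content of the paper's proof---is that $\beta^0$ \emph{satisfies} these constraints, which the paper establishes by noting that the residual $Y-\Xb^T\beta^0 = \sum_l d_l H_l + \epsilon_Y$ is independent of the constraint-inducing variable (the instrument, the randomized covariate, or the adjustment residual $\epsilon_j$) and that the latter is centered. You take this for granted in the IV and randomization cases (tolerable, since there independence holds by definition), and you do not address it at all for inner-product invariance, where the paper has to cite Proposition 1 of \cite{rothenhausler2019causal}.

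The genuine gap is in the regression-adjustment case. You propose to derive the unconditional constraint $0=\mathbb{E}[\tilde X_j^e(Y^e-\beta^{0,T}\Xb^e)]$ from the conditional statement \eqref{eq:orthogonality_adjustment} by multiplying by a $\Xb_{pa_0(j)}^e$-measurable factor and invoking the tower property. That step fails: writing $R:=Y^e-\beta^{0,T}\Xb^e$, the tower property applied to \eqref{eq:orthogonality_adjustment} yields $\mathbb{E}[X_j^e R]=0$, but $\mathbb{E}[\tilde X_j^e R]=\mathbb{E}[X_j^e R]-\gamma_j^T\mathbb{E}[\Xb_{pa_0(j)}^e R]$, and the second term need not vanish because the parents of $X_j$ can themselves be correlated with $R$ through the latent variables $\Hb$ (generically $\mathbb{E}[R\mid \Xb_{pa_0(j)}^e]\neq 0$). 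So the conditional display does not imply the unconditional residualized constraint that is actually used. The paper's proof avoids this entirely by arguing on the structural equations: under the assumption that the latent variables do not enter the equation for $X_j$, the population adjustment residual equals the disturbance $\epsilon_j$, which is centered and independent of $\sum_l d_l H_l+\epsilon_Y$, giving the orthogonality directly. The paper also handles the edge case $Y\in pa_0(j)$, where acyclicity forces $\beta^0_j=0$; your argument omits this.
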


As we see, different orthogonality constraints are derived from different types of prior causal information. We can potentially have more constraints than strictly necessary to estimate $\beta^0$, and we may be willing to discard those that rest on weaker foundations. In practice, the assumptions leading to constraints are not on an equal footing. Data obtained via covariate manipulation and the subsequent orthogonality constraint has a better standing than a constraint generated by assumptions on additive shifts in the covariance structure, which are less verifiable in practice. Conditioning on the parental set assumes an accurate knowledge of the (potentially estimated) graph, whereas there may be settings where one can be confident that the exogeneity and exclusion restriction assumptions in IV hold.

\subsection{Aggregating Linear Constraints under Just-Identification}\label{section:subsection:linear_aggregation}

One can recover $\beta^0$ by exclusively using one type of the orthogonality constraints previously defined. For example, if all the covariates in the model are randomized, then OLS is unbiased. Whenever there are as many instruments as covariates (and a full-rank condition holds) then usual IV methods apply. Finally, Causal Dantzig \citep{rothenhausler2019causal} provably recovers $\beta^0$ whenever for every covariate there is an environment where the given covariate has an additive intervention. The objective is now to combine constraints arising from multiple data sets into one single estimator. Let $\mathcal{C}$ be the set of constraints, denoted by $\gb^{c,T} \beta = z^c$ for each $c\in \mathcal{C}$ as in equation \eqref{eq:simple-linear-constraint}. We aggregate these via a linear matrix equality by first defining the vector $\Zb$ and matrix $\Gb$ as follows:
\begin{align*}
    & \Zb := \big( z^c\big)_{c\in \mathcal{C}} \in \mathbb{R}^{|\mathcal{C}|} \qquad \text{and} \qquad \Gb := \big(\gb^{c,T}\big)_{c\in \mathcal{C}} \in \mathbb{R}^{|\mathcal{C}|\times p}
\end{align*}
The vector $\beta^0$ is then a solution to the linear system:
\begin{equation}\label{equation:population-causal-aggregation-estimator}
\Zb = \Gb\beta
\end{equation}
Without any prior assumptions on the SEM $\mathcal{M}^0$, a necessary condition for identifying $\beta^0$ is to have at least as many constraints as covariates, i.e. $|\mathcal{C}| \geq p$, otherwise the system \eqref{equation:population-causal-aggregation-estimator} has multiple solutions. A sufficient condition so that $\beta^0$ is the unique solution to equation~\eqref{equation:population-causal-aggregation-estimator} above can be stated in purely mathematical terms. This is similar to the identifiability result in \cite{hyttinen2012learning} which we extend to any type of constraint, not only those based on randomization, although we constrain $G^0$ to be a DAG. With additional information about the constraints, we can formulate more practical necessary conditions. Consider instrumental variables in equation~\eqref{eq:orthogonality_IV}, experimental data through randomization from an interventional environment $e$ that follows the causal model $\mathcal{M}^e_{\phi(e)}$ in equation~\eqref{eq:orthogonality_intervention_linear}, or constraints from regression adjustment via the population regression vector of a covariate on its parental set in eq.~\eqref{eq:orthogonality_adjustment}. All these constraints feature a known specific random variable denoted $R^c$ that captures the prior knowledge about the causal structure of the data, which we refer to as the \emph{constraint-inducing} variable. We have respectively that $R^{c} = X^{e_c}_j$ for a randomized covariate $j\in \phi(e_c)$, or $R^{c} = I^{e_c}$ for an instrument $I^{e_c}$ whenever available, and $R^c$ is the residual term of regressing a covariate on its known parental set (using a different data set for estimating the regression adjustment and for estimating the orthogonality constraint). Additionally, each constraint-inducing variable relates to a covariate: either the randomized covariate itself, any covariate that is correlated to the instrument, or the covariate that is regressed on its parental set. A sufficient condition for identifiability of $\beta^0$ is then that each covariate has a distinct constraint related to it. This is again similar to \cite{hyttinen2012learning} where a necessary condition is to have an environment where each variable is randomized, and $Y$ is never intervened on.

\begin{proposition}\label{proposition:linear_identifiability}
$\beta^0$ is identified if there is a subset of $p$ linearly independent constraints in $\Gb$ in \eqref{equation:population-causal-aggregation-estimator}. If we only consider constraints derived from IV, randomization or regression adjustment, then it suffices to have at least one distinct constraint related to each covariate.
\end{proposition}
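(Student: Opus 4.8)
The first claim is pure linear algebra. By Proposition~\ref{proposition:linear_constraints} the true vector $\beta^0$ satisfies every constraint, so $\Zb = \Gb\beta^0$; that is, $\beta^0$ is \emph{a} solution of \eqref{equation:population-causal-aggregation-estimator}. Suppose $\Gb$ contains a subset of $p$ linearly independent rows, and let $\Gb' \in \mathbb{R}^{p\times p}$ be the corresponding square submatrix with right-hand side $\Zb'$. Being square with independent rows, $\Gb'$ is invertible, hence the subsystem $\Gb'\beta = \Zb'$ has the unique solution $(\Gb')^{-1}\Zb'$. Since $\beta^0$ solves this subsystem, it is that unique solution, and a fortiori the unique solution of the full system $\Zb = \Gb\beta$. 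Thus $\beta^0$ is identified.

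\textbf{Part 2: structure of the constraint vectors.} For the second claim the plan is to reduce to Part 1 by showing that the $p$ rows attached to the distinct per-covariate constraints are linearly independent. Each such constraint reads $0 = \mathbb{E}[R^{c}(Y^e - \beta^{0,T}\Xb^e)]$, so its row is $\gb^{c} = \mathbb{E}[R^{c}\Xb^e]$. I would read off the support and diagonal of these vectors from the reduced form $\Xb^e = \big[(I_{p+1} - M^e)^{-1}(M_H\Hb^e + \epsb^e)\big]_{[p]}$, using acyclicity so that $\big[(I_{p+1}-M^e)^{-1}\big]_{kj}$ is a sum over directed paths and vanishes unless $X_j$ reaches $X_k$. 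For a randomized covariate $j$ we have $R^{c_j} = X_j^e = \epsilon_j^e$, and independence of the disturbances gives $\gb^{c_j}_k = \mathbb{E}[\epsilon_j^e X_k^e]$, which vanishes unless $k=j$ or $X_k$ is a descendant of $X_j$, with diagonal entry $\mathbb{E}[(\epsilon_j^e)^2] > 0$. For a regression-adjustment constraint the population residual is $R^{c_j} = \epsilon_j$, since the latent confounder is assumed not to lie in $pa_0(j)$; the identical computation yields the same support pattern and $\gb^{c_j}_j = \mathbb{E}[\epsilon_j^2] > 0$.

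\textbf{Main step and obstacle.} If every constraint is of the randomization or regression-adjustment type, I would order $[p]$ by a topological order of $G^0$ and index both the rows (by their associated covariate $j$) and the columns by this order. Because each $\gb^{c_j}$ is supported on $\{j\}$ together with descendants of $j$, all of which appear no earlier than $j$, the matrix is triangular with nonzero diagonal, hence invertible; combined with Part 1 this proves identification. The main obstacle is the instrumental-variable case: there $R^{c} = I^e$ and $\gb^{c}_k = \mathbb{E}[I^e X_k^e]$ is supported on the entire descendant set of $I$, not merely on $\{j\}$ and the descendants of the related covariate $j$, so these rows are no longer triangular and nonzero diagonals alone do not force independence. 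I would isolate the IV rows, invoke the classical IV relevance (rank) condition on the instrument--covariate cross-covariance block, and assemble a block argument pairing the triangular randomization/adjustment block with a full-rank IV block to get an invertible $\Gb'$. The delicate point is precisely that, as for ordinary instruments, the order condition ``one distinct constraint per covariate'' is only \emph{necessary} and upgrades to the rank condition ensuring invertibility of $\Gb'$ once non-degeneracy of the relevant cross-covariances is assumed.
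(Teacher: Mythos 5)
Your Part~1 and your handling of the randomization and regression-adjustment rows coincide with the paper's argument: order the covariates by a topological order of $G^0$, note that the constraint-inducing variable attached to covariate $j$ reduces to $\epsilon_j^e$ (randomization) or $\epsilon_j$ (adjustment, using that the latent factors do not enter $X_j$'s equation), hence is independent of every non-descendant of $X_j$; row $j$ therefore vanishes in columns $1,\dots,j-1$ and has diagonal entry $\mathbb{E}[\epsilon_j^2]>0$, so the stacked matrix is triangular and your Part~1 finishes the proof. The genuine gap is the instrumental-variable case, which you leave as an acknowledged obstacle, and your diagnosis of why it is delicate is not the right one. Upper triangularity constrains only the entries \emph{before} the diagonal; it says nothing about the support of row $j$ in columns $k>j$. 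Since $I$ is exogenous, it is independent of every covariate that is not a descendant of $I$, and every covariate correlated with $I$ is such a descendant. Reading the hypothesis as the paper does---the IV constraint is ``related to'' the topologically first covariate $X_j$ with $\mathbb{E}[I X_j]\neq 0$---all entries $\mathbb{E}[I X_k]$ with $k<j$ vanish and the diagonal entry is nonzero by definition of $j$. The fact that the row is also supported on descendants of $I$ that are not descendants of $X_j$ is harmless: those covariates are correlated with $I$ and hence carry indices strictly larger than $j$. So the single triangular argument covers all three constraint types uniformly, with no block decomposition and no extra rank condition.

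Your proposed repair---segregating the IV rows and invoking a relevance rank condition on the instrument--covariate cross-covariance block---is only a sketch and does not assemble into a proof as stated: $\Gb'$ is not block triangular with respect to the covariates covered by the two kinds of rows, so invertibility of each block separately does not give invertibility of $\Gb'$; and the rank condition is an added hypothesis not present in the proposition. That said, your unease does point at a real imprecision in the statement (and in the paper's one-line treatment of the IV case): if an instrument is ``related'' to a correlated covariate other than its topologically first one, triangularity genuinely fails and identification can fail---e.g.\ two instruments both entering only through $X_1$, with $X_1\to X_2\to Y$, yield proportional rows $\big(\mathbb{E}[I_i X_1],\,\mathbb{E}[I_i X_2]\big)$ even though one can nominally relate one instrument to each covariate. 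The proposition holds under the first-correlated-covariate assignment; to allow arbitrary assignments you would indeed need a rank condition, but then you are proving a different statement.
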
 
Based on the empirical counterparts $\hat{\Zb}, \hat{\Gb}$ of $\Zb, \Gb$, we look for estimators $\hat{\beta}$ of $\beta^0$ that make the two sides of the equation \eqref{equation:population-causal-aggregation-estimator} as close as possible:
\begin{equation}\label{equation:empirical-causal-aggregation-estimator}
\hat{\Zb} \approx \hat{\Gb}\hat{\beta}
\end{equation}
This naïve approach only works in very limited situations, and we describe here the just-identified case. Whenever $|\mathcal{C}| = p$ and the corresponding square matrix $\Gb$ is invertible, $\beta^0$ is identified. Additionally, if $\hat{\Gb}$ is invertible, then the unique solution to equation~\eqref{equation:empirical-causal-aggregation-estimator} is given by the estimator 
\begin{equation}\label{def:estimator_just_identified}
\hat{\beta} := \hat{\Gb}^{-1} \hat{\Zb}
\end{equation}
which is consistent if $\hat{\Gb} \rightarrow \Gb$ and $\hat{\Zb} \rightarrow \Zb$ by continuity of the matrix inverse and product. Consistency of $\hat{\Gb}, \hat{\Zb}$ holds by the law of large numbers as soon as $n_e \xrightarrow[]{}\infty$ in every environment. In the following we analyze the asymptotic behavior of this estimator. We assume that sample sizes grow at the same rate, i.e. $n_e/n \rightarrow \rho_e$ for some $\rho_e \in (0,1)$. We discard constraints based on inner-product invariance for readability. Denote by $e_c\in \mathcal{E}$ the environment where constraint $c$ is generated, we can write: 
\begin{align*}
    & \Zb := \big( \mathbb{E}\big[R^{c}Y^{e_c}\big]\big)_{c\in \mathcal{C}} \in \mathbb{R}^{|\mathcal{C}|} \qquad \text{and} \qquad \Gb := \big(\mathbb{E}\big[R^{c}\Xb^{e_c,T}\big]\big)_{c\in \mathcal{C}} \in \mathbb{R}^{|\mathcal{C}|\times p}
\end{align*}
and the corresponding empirical counterparts
\begin{align}\label{equation:empirical-GZ}
\begin{split}
    & \hat{\Zb} := \big( (1/n_{e_c})\sum_{i\in [n_{e_c}]}R^{c}_iY_i^{e_c}\big)_{c\in \mathcal{C}} \in \mathbb{R}^{|\mathcal{C}|}
    \\ & \hat{\Gb} := \big( (1/n_{e_c})\sum_{i\in [n_{e_c}]}R^{c}_i\Xb_i^{e_c,T}\big)_{c\in \mathcal{C}} \in \mathbb{R}^{|\mathcal{C}|\times p}
\end{split}
\end{align}
Standard regularity conditions on the moments of the variables $R, \Xb, Y$ also lead to asymptotically valid confidence intervals for $\hat{\beta}$. Whenever a constraint is obtained by adjusting for the parental set we need that the estimates of $\gamma$ are obtained from a data set independent that the one used for constructing the constraints. Under these assumptions, we show that the estimator is asymptotically normally distributed:
\begin{equation*}
    \sqrt{n}\big(\hat{\beta} - \beta^0\big) \xrightarrow[n \to +\infty]{d} \mathcal{N}(\textbf{0}; \Sigma)
\end{equation*}
where $\Sigma$ is a positive definite matrix that we can consistently estimate by some $\hat{\Sigma}$. Therefore we can form asymptotically valid confidence intervals for $\beta^0_j$ by 
\begin{equation}\label{equation:confidence_interval}
    I_j = [\hat{\beta}_j \;\pm\; q_{\mathcal{N}}^{1-\alpha/2}\sqrt{\hat{\Sigma}_{jj}}]
\end{equation}
where $q_{\mathcal{N}}^{\alpha}$ is the $\alpha$ quantile of the standard Gaussian distribution, which has exact asymptotic coverage.
\begin{equation*}
    \mathbb{P}\big(\beta_j^0 \in I_j\big) \xrightarrow[]{+\infty} 1-\alpha
\end{equation*}
Let $\sigma_e^2 := \text{Var}(\epsilon_Y^e)$. We summarize these statements and give an expression for the asymptotic covariance in the following proposition.
\begin{proposition}\label{proposition:asymptotic_normality}
Consider the setting described above. Assume that for all $c$, all variables $R^c, \Xb^{e_c}, Y^{e_c}$ have finite fourth moments. We then have that
\begin{equation}
    \sqrt{n}\big(\hat{\beta} - \beta^0\big) \xrightarrow[n \to +\infty]{d} \mathcal{N}(\textbf{0}; \Sigma)
\end{equation}
where $\Sigma = \Gb^{-1}\text{Diag}\Big(\frac{\sigma_{e_c}^2}{\rho_{e_c}}\text{Var}(R^{c})\Big)_{c\in [p]}\Gb^{-1,T}$. A consistent estimator $\hat{\Sigma}$ is obtained via
\begin{equation*}
    \hat{\Sigma} = \hat{\Gb}^{-1}\text{Diag}\Big(\frac{n\hat{\sigma}^{e_c,2}}{n_{e_c}}\widehat{\text{Var}}(R^c)\Big)_{c\in [p]}\hat{\Gb}^{-1,T}
\end{equation*}
where $\hat{\sigma}^{e_c,2} = \frac{1}{n_{e_c}}\sum_{i\in [n_{e_c}]}\big(Y^{e_c}_i - \hat{\beta}^T\Xb^{e_c}_i\big)^2$.
\end{proposition}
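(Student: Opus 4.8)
The plan is to treat $\hat\beta=\hat\Gb^{-1}\hat\Zb$ as a just-identified method-of-moments estimator and run a central-limit-theorem-plus-Slutsky argument. The starting point is the exact algebraic identity, valid whenever $\hat\Gb$ is invertible,
\[
\sqrt{n}\big(\hat\beta-\beta^0\big)=\hat\Gb^{-1}\sqrt{n}\big(\hat\Zb-\hat\Gb\beta^0\big).
\]
Reading off coordinate $c$ from the definitions in \eqref{equation:empirical-GZ}, the $c$-th entry of $\hat\Zb-\hat\Gb\beta^0$ is exactly the centered average $\tfrac{1}{n_{e_c}}\sum_{i\in[n_{e_c}]}\psi^c_i$ of the per-sample scores $\psi^c_i:=R^c_i(Y^{e_c}_i-\Xb^{e_c,T}_i\beta^0)$. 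The orthogonality relations collected in Proposition~\ref{proposition:linear_constraints} are precisely the statement $\mathbb{E}[\psi^c]=0$, so each coordinate is an average of i.i.d. mean-zero terms to which the CLT applies.

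First I would establish the joint limit of the stacked, $\sqrt{n}$-scaled score vector $\sqrt{n}(\hat\Zb-\hat\Gb\beta^0)$. Writing its $c$-th entry as $\sqrt{n/n_{e_c}}\cdot n_{e_c}^{-1/2}\sum_{i\in[n_{e_c}]}\psi^c_i$ and using $n/n_{e_c}\to\rho_{e_c}^{-1}$, a multivariate CLT yields a Gaussian limit $\mathcal{N}(\mathbf{0},V)$, with each $\text{Var}(\psi^c)$ finite by the finite-fourth-moment hypothesis via Cauchy--Schwarz. To reach the diagonal form claimed for $\Sigma$ I would use that constraints from different environments rest on independent samples, so their score averages are independent and contribute no cross-covariance, while within a single environment the per-sample cross-covariance $\mathbb{E}[\psi^c\psi^{c'}]$ vanishes when the constraint-inducing variables are mutually independent of each other and of the residual, as holds for jointly independent randomized disturbances. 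The diagonal entries come from the independence between $R^c$ and the residual $Y^{e_c}-\Xb^{e_c,T}\beta^0$, which factorizes $\text{Var}(\psi^c)=\text{Var}(R^c)\,\sigma_{e_c}^2$ with $\sigma_{e_c}^2$ the residual variance, together with the $\rho_{e_c}^{-1}$ scaling; hence $V=\text{Diag}(\rho_{e_c}^{-1}\sigma_{e_c}^2\text{Var}(R^c))$. Since $\hat\Gb\xrightarrow{P}\Gb$ by the law of large numbers and $\Gb$ is invertible, the continuous mapping theorem and Slutsky's lemma applied to the identity above give $\sqrt{n}(\hat\beta-\beta^0)\xrightarrow{d}\mathcal{N}(\mathbf{0},\Gb^{-1}V\Gb^{-1,T})=\mathcal{N}(\mathbf{0},\Sigma)$. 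I would flag here that the strictly diagonal form requires the uncorrelatedness just described; for arbitrary collections (e.g. two correlated instruments in one environment) only the sandwich $\Gb^{-1}V\Gb^{-1,T}$ with a full $V$ survives, and the stated $\Sigma$ is its specialization.

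For consistency of $\hat\Sigma$ I would argue entrywise by the law of large numbers: $\hat\Gb\to\Gb$, the sample variances $\widehat{\text{Var}}(R^c)\to\text{Var}(R^c)$, and $\hat\sigma^{e_c,2}=n_{e_c}^{-1}\sum_i(Y^{e_c}_i-\hat\beta^T\Xb^{e_c}_i)^2\to\sigma_{e_c}^2$, the last because $\hat\beta\xrightarrow{P}\beta^0$ and the residual is mean zero, so the plug-in error is lower order under the moment hypothesis; combined with $n/n_{e_c}\to\rho_{e_c}^{-1}$ and the continuity of inversion and multiplication this gives $\hat\Sigma\xrightarrow{P}\Sigma$, and the coverage statement \eqref{equation:confidence_interval} follows from the marginal convergence of each $\hat\beta_j$. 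The step I expect to be most delicate is the regression-adjustment constraint, where $R^c=X^e_j-\hat\gamma^T\Xb^e_{pa_e(j)}$ is itself built from an \emph{estimated} nuisance $\hat\gamma$: the moment condition is orthogonal only at the true $\gamma$, so a naive plug-in injects a bias of order $\|\hat\gamma-\gamma\|$ and extra sampling variability into the score. The assumption imposed above---that $\hat\gamma$ comes from a data set independent of the one used for the constraint---is what neutralizes this, since conditioning on the nuisance sample lets me treat $R^c$ as having fixed coefficients so the conditional CLT applies, and letting the nuisance sample grow drives $\hat\gamma\to\gamma$ so that the induced bias and added variance are asymptotically negligible at the $n^{-1/2}$ rate. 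Verifying this negligibility carefully, rather than the otherwise routine CLT bookkeeping, is the main technical obstacle.
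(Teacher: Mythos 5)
Your proposal is correct and follows essentially the same route as the paper: both linearize $\sqrt{n}(\hat{\beta}-\beta^0)$ into a sandwich of $\Gb^{-1}$ against the $\sqrt{n}$-scaled score averages $R^c_i(Y^{e_c}_i-\Xb^{e_c,T}_i\beta^0)=-R^c_i\epsilon^{e_c}_{Y,i}$, apply the CLT environment-by-environment with the $\rho_{e_c}^{-1}$ rescaling, use independence of $R^c$ from the residual and mutual independence of constraint-inducing variables to obtain the diagonal middle matrix, and finish with Slutsky. The only cosmetic difference is that you use the exact identity $\hat{\beta}-\beta^0=\hat{\Gb}^{-1}(\hat{\Zb}-\hat{\Gb}\beta^0)$ where the paper runs a delta-method expansion of the map $(u,v)\mapsto\Gb(u)^{-1}\Zb(v)$; your caveats about the diagonal form and the estimated nuisance $\hat{\gamma}$ in the regression-adjustment constraint match the assumptions the paper imposes.
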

This is a first asymptotic result for analyzing the convergence of an aggregation estimator to the true parameter $\beta^0$ based on multiple environments. For completeness, we now develop linear aggregation extensions to the over-identified case. Provided the model is correctly specified, additional constraints improve the efficiency of the estimator that we propose, based on a method of moments framework. In practice, some constraints potentially arise in several environments, but only one is kept in the just-identified case---if we assume we know the parental set of a covariate that is invariant across environments, then we have access to several regression adjustment constraints.  
\subsection{Linear Aggregation in the Over-Identified Case with More Constraints than Covariates}\label{section:overidentified}

The number of available constraints may surpass the dimension of the covariate vector. Provided that the model is well specified in the sense that the constraints are compatible with $\beta^0$ as solution, additional constraints improve the efficiency of the estimator by decreasing the asymptotic variance. In the ideal setting where in one environment all covariates are randomized, it may be tempting to discard data from other environments as direct least squares regression generates an unbiased estimator relying only on that environment. Other methods can also provide consistent estimators of the causal effects based on data from a single environment. The do-calculus is a method that, provided the underlying causal graph is properly estimated, is capable of assessing whether in a given environment the causal effects are identifiable. In such case, if identifiability holds, then the causal vector $\beta^0$ can be estimated via a regression over a judiciously chosen set of covariates. We show that, in any of these scenarios, we can still benefit from aggregating additional constraints from other environments into a single estimator. In this overidentified setting, $\hat{\Gb}$ is no longer a potentially invertible square matrix, so $\hat{\beta}$ can not be simply derived by inverting a matrix. We instead estimate it via the method of moments estimator (MM), we refer to \cite{hall2005generalized} for a general treatment of the subject. The standard MM conditions hold in our setting, where $\beta^0$ is identified if $\Gb$ is of rank $p$, and the estimators that we propose are consistent and asymptotically normal. The challenge is to construct one with minimal asymptotic variance, which requires adapting the usual MM framework to the multiple environment setting. In particular, to obtain the efficient two-step efficient estimator \citep{hansen1982large} we need to compute the covariance of the vector of constraints based on samples from different environments. Finally we show that incorporating additional constraints always lead to an improvement of the asymptotic variance.

We reformulate the multiple environment framework (cf. equation~\ref{equation:perturbedSEM}) to represent individual observations across environments as elements of a same space, sampled from a unique common distribution. We use an environment-indicator variable $E$ and then aggregate samples across all environments, the following collection of variables represents an individual observation:
\begin{align*}
    \big(E_i, Y_i^{E_i}, \Xb_i^{E_i}, \{\tilde{R}_i^{c}\}_{c\in \mathcal{C}} \big)
\end{align*}
The $i$-th observation originates from environment $E_i$, where $E_i \sim \text{Multinomial}\big((\rho_e)_{e\in \mathcal{E}}\big)$. Conditionally on the environment $E_i$, the values of $Y_i^{E_i}, \Xb_i^{E_i}$ correspond to the actual observations obtained in environment $E_i$. Additionally, each constraint in $c \in \mathcal{C}$ is obtained in a specific environment $e_c$. The aggregated $i$-th observation concatenates all $\tilde{R}_i^c = R_i^{c}\mathbb{1}_{E_i = e_c}$ for each $c\in \mathcal{C}$, which are equal to 0 whenever the constraint $c$ is not based on environment $E_i$. We again exclude the constraints arising from shift interventions. Samples across environments are thus merged into one data set such that samples can be considered as i.i.d. for $i\in [n]$. The $c$-th moment condition now becomes
\begin{equation}
    0 = \mathbb{E}\big[\tilde{R}_i^{c}(Y_i^{E_i} - \Xb_i^{E_i}\beta)\big] = \mathbb{E}\big[R_i^{c}(Y_i^{E_i} - \Xb_i^{E_i}\beta)\mathbb{1}_{E_i = e_c}\big]
\end{equation}
and therefore whenever $E_i$ is the environment where $R^c_i$ is obtained, we get that the above equality holds for $\beta^0$ by construction of our orthogonality constraints. In particular, after stacking the $|\mathcal{C}|=q$ constraints (for simplicity $\mathcal{C} = \{c_1,\dots,c_{q}\}$), we can reformulate the system of equations in terms of $\Zb$ and $\Gb$ as defined in the previous section:
\begin{align*}
    0 = \mathbb{E}\begin{bmatrix}
    R_i^{c_1}(Y_i^{E_i} - \Xb_i^{E_i}\beta)\mathbb{1}_{E_i = e_{c_1}}
    \\ \vdots
    \\ R_i^{c_{q}}(Y_i^{E_i} - \Xb_i^{E_i}\beta)\mathbb{1}_{E_i = e_{c_{q}}}
    \end{bmatrix} = \Db\big(\Zb - \Gb\beta\big)
\end{align*}
where $\Db = \text{Diag}\big((\rho_{e_c})_{c\in \mathcal{C}}\big)$. This shows that if $\Gb$ has rank $p$ then the solution to the system above is unique, equal to $\beta^0$. Given a positive definite weighting matrix $\Wb\in \mathbb{R}^{q\times q}$, the method of moments estimator of $\beta^0$ is the minimizer of the following loss with a closed-form solution:
\begin{align}\label{def:MM_optimization}
    \hat{\beta}^{MM}(\Wb) := & \argmin_{\beta\in \Theta} \big\Vert \hat{\Db}\big(\hat{\Zb} - \hat{\Gb}\beta\big)\big\Vert_{\Wb}^2
\end{align}
for the empirical values of $\hat{\Zb}$, $\hat{\Gb}$ and $\hat{\Db} := \text{Diag}\big((\frac{n_{e_c}}{n})_{c\in \mathcal{C}}\big)$. Under standard regularity assumptions on the moments of the variables, provided that $\Gb$ is of rank $p$, the estimator $\hat{\beta}^{MM}$ is consistent for any choice of weighting matrix. The choice of $\Wb$ characterizes the asymptotic covariance of the estimator. \cite{hansen1982large} proves that, to obtain the estimator with minimal asymptotic variance among all choices of weighting matrices, the optimal choice is given by $\Wb = \Sb^{-1}$, where
\begin{equation}
    \Sb := \text{Cov}\left(\big(\tilde{R}^{c}(Y^{E} - \Xb^{E}\beta^0)\big)_{c\in \mathcal{C}}\right) \in \mathbb{R}^{q\times q}
\end{equation}
Given that the residual term and the constraint-inducing variable $R^c$ are independent and that any two constraint-inducing variables are pairwise independent, we get that the constraints are uncorrelated and thus obtain the following simplification of $\Sb$ into a diagonal matrix:
\begin{equation}
    \Sb = \text{Diag}\Big(\rho_{e_c}\text{Var}(R^{c})\sigma_{e_c}^2\Big)_{c\in \mathcal{C}}
\end{equation}
$\Sb$ can only be estimated from data, but in order to do so one needs to estimate the residual variances. These, in turn, are given by $\sigma_{e_c}^2=\text{Var}(Y^{e_c} - \Xb^{e_c}\beta^0)$ which depend on the true unknown parameter $\beta^0$. \cite{hansen1982large} proposes a two-step estimator where an inefficient, consistent MM estimator $\hat{\beta}^{MM}(I_q)$ is obtained by setting $\Wb = I_q$ in equation~\eqref{def:MM_optimization}. $\hat{\Sb}$ is then derived based on $\hat{\beta}^{MM}(I_q)$, which then is used to construct a new weighting matrix for the final efficient MM estimator:
\begin{itemize}
    \item Compute a first consistent estimator $\tilde{\beta} := \hat{\beta}^{MM}(I_{q})$.
    \item Compute a consistent estimator $\hat{\Sb} :=  \text{Diag}\Big(\rho_{e_c}\widehat{\text{Var}}(R^{c})\hat{\sigma}^{e_c,2}\Big)_{c\in \mathcal{C}}$ of the weighting matrix, where \mbox{$\hat{\sigma}^{e_c,2} = \frac{1}{n_{e_c}}\sum_{i\in [n_{e_c}]}\big(Y^{e_c}_i - \tilde{\beta}^T\Xb^{e_c}_i\big)^2$}. 
    \item Return the two-step estimator $\hat{\beta}^{MM*} = \hat{\beta}^{MM}(\hat{\Sb}^{-1})$
\end{itemize}
If the data are i.i.d.\ within each environment with finite second moments, and $\tilde{\beta}$ is consistent, then $\hat{\sigma}^{e_c,2}$ and $\widehat{\text{Var}}(R^{c})$ are consistent. Thus, $\hat \Sb$ is consistent for $\Sb$ and $\hat{\beta}^{MM*}$ has optimal asymptotic variance. We summarize the above statements in the following proposition:
\begin{proposition}[\cite{hall2005generalized} Chapter 2]\label{proposition:hall2005generalized}
Assume that $\beta^0$ satisfies the constraints defined by equation~\eqref{equation:population-causal-aggregation-estimator}, that $\Gb$ is rank $p$, and that data samples are i.i.d. within each environment, with finite second moments so that any MM estimator is consistent. Additionally, assume that the plug-in estimator $\hat{\Sb}$ of $\Sb$ is consistent. The two-step MM estimator $\hat{\beta}^{MM*}$ based on our set of orthogonality constraints satisfies the following asymptotic limit
\begin{align*}
    \sqrt{n}\big(\hat{\beta}^{MM*} - \beta^0\big) \xrightarrow[]{} \mathcal{N}\Big(\mathbf{0}; \Sigma\Big)
\end{align*}
where convergence is in distribution, and where we get the optimal asymptotic covariance among all choices of weighting matrices: $\Sigma = \big(\Gb^T\text{Diag}\Big(\frac{\rho_{e_c}}{\text{Var}(R^{c})\sigma_{e_c}^2}\Big)_{c\in \mathcal{C}}\Gb\big)^{-1}$ 
\end{proposition}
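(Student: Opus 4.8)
The statement is the specialization of classical generalized method of moments asymptotics \citep{hansen1982large,hall2005generalized} to our constraint system, so the plan is to cast the estimator in the method of moments framework, check that the abstract regularity conditions hold here, and then show that the generic sandwich formula collapses to the stated $\Sigma$. First I would introduce the per-observation moment function $g_i(\beta) := \big(\tilde{R}_i^{c}(Y_i^{E_i} - \Xb_i^{E_i,T}\beta)\big)_{c\in\mathcal{C}} \in \mathbb{R}^{q}$. After the environment-indicator reformulation the $g_i$ are i.i.d.\ across $i \in [n]$, and a direct computation gives $\mathbb{E}[g_i(\beta)] = \Db(\Zb - \Gb\beta)$, so $\mathbb{E}[g_i(\beta^0)] = 0$ by construction of the orthogonality constraints. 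The empirical average is exactly $\hat{g}_n(\beta) := \tfrac{1}{n}\sum_i g_i(\beta) = \hat{\Db}(\hat{\Zb} - \hat{\Gb}\beta)$, so the loss in \eqref{def:MM_optimization} is the usual quadratic form $\hat{g}_n(\beta)^{T}\Wb\hat{g}_n(\beta)$. Identification holds because $\Db(\Zb - \Gb\beta) = 0$ forces $\beta = \beta^0$ whenever $\Gb$ has rank $p$ and $\Db$ is invertible; this, with the hypotheses that any method of moments estimator is consistent and that $\hat{\Sb}$ is consistent, makes the first-step $\tilde{\beta} = \hat{\beta}^{MM}(I_q)$ and the data-driven weighting matrix $\hat{\Sb}^{-1}$ both consistent.

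The second step is to exploit linearity. Since $g_i$ is affine in $\beta$, the minimizer has the closed form
\[
\hat{\beta}^{MM}(\Wb) = \big(\hat{\Gb}^{T}\hat{\Db}\Wb\hat{\Db}\hat{\Gb}\big)^{-1}\hat{\Gb}^{T}\hat{\Db}\Wb\hat{\Db}\hat{\Zb},
\]
and subtracting $\beta^0$ while using $\hat{\Db}(\hat{\Zb} - \hat{\Gb}\beta^0) = \hat{g}_n(\beta^0)$ yields the \emph{exact} identity
\[
\sqrt{n}\big(\hat{\beta}^{MM}(\Wb) - \beta^0\big) = \big(\hat{\Gb}^{T}\hat{\Db}\Wb\hat{\Db}\hat{\Gb}\big)^{-1}\hat{\Gb}^{T}\hat{\Db}\Wb\,\sqrt{n}\,\hat{g}_n(\beta^0),
\]
with no Taylor remainder. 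The multivariate CLT gives $\sqrt{n}\,\hat{g}_n(\beta^0) \xrightarrow{d} \mathcal{N}(\mathbf{0}, \Sb)$ with $\Sb = \text{Cov}(g_i(\beta^0))$; finite second moments of the data, combined with the independence of $R^{c}$ from the residual, make $g_i(\beta^0)$ square-integrable, so the CLT applies. Together with $\hat{\Gb}\to\Gb$, $\hat{\Db}\to\Db$ by the law of large numbers and $\Wb = \hat{\Sb}^{-1}\to\Sb^{-1}$, Slutsky's theorem delivers the asymptotic normality.

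It then remains to evaluate $\Sb$ and simplify the limit. Writing the environment-$e_c$ residual as $Y^{e_c} - \Xb^{e_c,T}\beta^0 = \epsilon_Y^{e_c}$, the $c$-th coordinate of $g_i(\beta^0)$ equals $R_i^{c}\epsilon^{e_c}_{Y,i}\mathbb{1}_{E_i = e_c}$. Conditioning on $E_i$ and using independence of $R^{c}$ from the residual gives the diagonal entries $\rho_{e_c}\text{Var}(R^{c})\sigma_{e_c}^2$; the off-diagonal entries vanish for two distinct reasons — constraints from different environments have mutually exclusive indicators, $\mathbb{1}_{E_i=e_c}\mathbb{1}_{E_i=e_{c'}}=0$, while constraints sharing an environment are annihilated by pairwise independence and centering of the $R^{c}$. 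Hence $\Sb = \text{Diag}\big(\rho_{e_c}\text{Var}(R^{c})\sigma_{e_c}^2\big)_{c\in\mathcal{C}}$. With Jacobian $G = -\Db\Gb$, the generic sandwich $(G^{T}\Wb G)^{-1}G^{T}\Wb\Sb\Wb G(G^{T}\Wb G)^{-1}$ collapses at $\Wb\to\Sb^{-1}$ to $(\Gb^{T}\Db\Sb^{-1}\Db\Gb)^{-1}$, and the diagonal algebra $\Db\Sb^{-1}\Db = \text{Diag}\big(\rho_{e_c}/(\text{Var}(R^{c})\sigma_{e_c}^2)\big)_{c\in\mathcal{C}}$ reproduces exactly
\[
\Sigma = \big(\Gb^{T}\,\text{Diag}\big(\tfrac{\rho_{e_c}}{\text{Var}(R^{c})\sigma_{e_c}^2}\big)_{c\in\mathcal{C}}\,\Gb\big)^{-1}.
\]

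The \emph{main obstacle} is not the generic method of moments machinery, which is standard, but the covariance computation, since this is where the causal structure does the work: the diagonal form of $\Sb$ — hence the clean separable efficient variance — hinges entirely on the two independence facts (residual $\perp$ constraint-inducing variable, and pairwise independence of the $R^{c}$) that are themselves consequences of randomization, the exclusion restriction for instruments, or sample-splitting for regression adjustment. The only other delicate point is the two-step argument: because the weighting matrix enters the influence representation above only through its probability limit, consistency of $\tilde{\beta}$ (and thus of $\hat{\Sb}$) suffices for $\hat{\beta}^{MM*}=\hat{\beta}^{MM}(\hat{\Sb}^{-1})$ to attain the same limiting law as the infeasible $\hat{\beta}^{MM}(\Sb^{-1})$, which Slutsky's theorem handles without changing the asymptotic variance.
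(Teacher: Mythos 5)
Your proposal is correct and follows essentially the same route as the paper, which does not give a standalone proof but instead derives the diagonal form of $\Sb$ in the text and delegates the generic two-step efficiency argument to \cite{hansen1982large} and \cite{hall2005generalized}. Your write-up simply makes that delegation explicit — the exact linear influence identity, the CLT/Slutsky step, and the collapse of the sandwich at $\Wb=\Sb^{-1}$ — and your covariance computation (diagonal entries $\rho_{e_c}\mathrm{Var}(R^c)\sigma_{e_c}^2$, off-diagonals killed by mutually exclusive environment indicators or by independence and centering of the $R^c$) matches the paper's.
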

We recover here the asymptotic covariance from Proposition~\ref{proposition:asymptotic_normality} whenever $\Gb$ is a square, invertible matrix. Also, this result automatically shows that adding constraints can not hurt the performance of the estimator. Given any subset of $p$ constraints, we can construct a MM estimator of the form \eqref{def:estimator_just_identified} by setting to $0$ the appropriate elements of $\Wb$ in \eqref{def:MM_optimization}. Therefore the asymptotic efficiency of $\hat{\beta}^{MM*}$ implies that the new constraints, properly weighted by $\Sb$, improve over the just-identified case. Incorporating all such information leads to more efficient estimators: all causal information helps. Conversely, in this over-identified setting one can wonder whether certain inconsistent constraints may be hurting the estimator performance: we refer to \cite{hall2005generalized} for further discussions on tests to detect such issues.

The following two sections address the high-dimensional and non-linear cases respectively, building upon the notation and theory presented thus far. We first present in the following section the high-dimensional case where additional assumptions are required to identify $\beta^0$, and we propose an estimator that favors sparse solutions $\hat{\beta}$ and converges to $\beta^0$. Within this framework, under additional assumptions, we can recover $\beta^0$ even in the under-identified case. Unfortunately these are not always satisfied, in which case a pre-screening step may be used to bypass this issue. The linear aggregation procedure in the just-identified case becomes a central sub-routine in the non-linear causal aggregation framework, which is addressed in the subsequent section. The underlying intuition is the same: based on the available additional knowledge about how environments are generated, we build estimators by enforcing constraints that represent the orthogonality between an unconfounded variable and the residuals.

\section{High-Dimensional Linear Aggregation under Sparsity Assumptions}\label{section:high-dimensional}

In high-dimensional settings we may not have enough constraints to construct an estimator via the methods presented above. Even with enough samples per environment, $\beta^0$ may not be identifiable based only on the orthogonality constraints. Conversely, we may have access to a large number of environments, enough for identifying $\beta^0$ at the population level, but containing very few samples in each. In the following section, we develop an estimator for such settings. Under additional assumptions on the structure of $\beta^0$, it may be possible to aggregate such causal information and obtain a reasonable estimator of $\beta^0$. Regularization methods based on leveraging the geometry induced by the $\ell_1$-norm are helpful to overcome this issue under the assumption that the actual vector $\beta^0$ is sparse. Among these very popular techniques we mention the Lasso penalty for linear regression \citep{tibshirani1996regression} and basis pursuit \citep{chen2001atomic}. These methods were subsequently adapted to address multiple other problems based on sparsity assumptions (eg. precision matrix estimation in \cite{friedman2008sparse}), deal with additional structure in the regressors (eg. when the order of the covariates matters as in the fused lasso in \citet{tibshirani2005sparsity}, where differences between consecutive coefficients are penalized) or expanded with additional penalties (eg. \citet{zou2005regularization} combine Lasso and $\ell_2$-norm penalties). Whenever we have experimental data for every covariate---even if very few samples per experiment---, we derive in Section~\ref{section:subsection:dantzig} an estimator based on these regularization techniques by directly solving a constrained risk minimization problem. We discuss this in more detail after Proposition~\ref{proposition:support_recovery}. Alternatively, in Section~\ref{section:subsection:pre-screening} we propose running a two-step procedure where a subset of covariates is first selected based on the Lasso regression, and then the proposed aggregation procedure is run on the subset of covariates. 

\subsection{Estimation by Constrained Optimization}\label{section:subsection:dantzig}
We present an estimator that mirrors the formulation of the Dantzig Selector \citep{candes2007dantzig} designed for the problem of high-dimensional linear regression $y = \xb^T\beta^0 + \epsilon$, where the dimension of $\xb$ is larger than the number of available observations. The Dantzig Selector is the solution to the following problem:
\begin{align*}
    \text{minimize}\; \Vert\beta\Vert_1 \qquad \text{subject to} \; \frac{1}{n}\Vert\Xb^T(Y - \Xb\beta)\Vert_{\infty}\leq \lambda
\end{align*}
where $Y = (y_1,\dots, y_n)$, and $\Xb = (x_1,\dots,x_n)^T$ is the design matrix. \cite{candes2007dantzig} derived a probabilistic upper bound for the $\ell_2$ error of the estimator under some conditions on $\Xb$. Additional work by \cite{bickel2009simultaneous}, \cite{ye2010rate}, among others lead to sharper bounds. We now present our $\ell_1$-norm minimization based causal aggregation estimator, and we then follow \cite{ye2010rate} to derive upper bounds on the $\ell_q$ loss. We follow the exposition in \cite{rothenhausler2019causal} and adapt it to our setting, leaving all proofs to the Appendix. Our alternative formulation promotes sparsity by solving an $\ell_1$-norm minimization problem. We denote it by $\hat{\beta}(\lambda)$ where $\lambda>0$ is a hyper-parameter, and the definition mirrors the above minimization problem:
\begin{align}\label{equation:high-dim}
    \text{minimize}\; \Vert\beta\Vert_1 \qquad \text{subject to} \; \Vert\hat{\Zb} - \hat{\Gb}\beta\Vert_{\infty}\leq \lambda
\end{align}
where $\hat{\Zb}$ and $\hat{\Gb}$ are defined as in equation~\eqref{equation:empirical-GZ}. This is a convex problem, in particular a linear programming problem: Let $B^\lambda$ be the feasible set to \eqref{equation:high-dim} and define $\Gamma^\lambda$ as the feasible set to the linear programming problem
\begin{align*}
    &\text{minimize } \mathbf{c}^T \gamma \\
   & \text{subject to } \mathbf{A} \gamma \le \mathbf{b} \text{ and } \gamma \ge 0, \text{ where } \\
    &\mathbf{A} = \begin{pmatrix}
    - \hat{\Gb} & \hat{\Gb} \\
    \hat{\Gb} & - \hat{\Gb}
    \end{pmatrix};  \mathbf{b} = \begin{pmatrix}
    - \hat{\Zb} \\ \hat{\Zb}
    \end{pmatrix} + \begin{pmatrix}
    \lambda \\
    \lambda \\
    \vdots \\
    \lambda
    \end{pmatrix}; \text{ and } \mathbf{c} = \begin{pmatrix}
    1 \\
    1 \\
    \vdots \\
    1
    \end{pmatrix}.
\end{align*}
Then, as shown in Lemma 5 in \cite{rothenhausler2019causal}, $B^\lambda = \{  \gamma_{1:p} - \gamma_{(p+1):2p} : \gamma \in \Gamma^\lambda \}$. Assimilating $\gamma_{1:p}$ to the vector $\beta_+$, the positive part of the $\beta$ coefficients --analogously for $\gamma_{p+1:2p}$ and $\beta_-$, we get the equivalence.

Higher values of $\lambda$ relax the constraint based on the causal orthogonality constraints, leading to solutions $\hat{\beta}(\lambda)$ with smaller $\ell_1$ norm. Because of the geometry of the $\ell_1$-norm unit ball, sparser solutions are favored. We can prove bounds on the $\ell_q$ loss of our estimator with high probability. To this end, we will impose assumptions on $\hat{\Gb}$. Following \cite{ye2010rate}, we define the cone invertibility factor (CIF) as follows: for $1\leq q \leq +\infty$, $J\subset[p]$, and $J^{c} := [p]\setminus J$, define
\begin{align*}
    \text{CIF}_q(J, \Mb) := \inf\Big\{ \frac{|J|^{1/q}\Vert\Mb \ub\Vert_{\infty}}{\Vert \ub\Vert_q} ; \;\ub\in \mathbb{R}^p\setminus \{0\}; \; \Vert \ub_{J^c}\Vert_1 \leq \Vert \ub_{J}\Vert_1\Big\}
\end{align*}
where $\Mb \in \mathbb{R}^{\tilde{p}\times p}$ is not necessarily a square matrix, and we set by convention $|J|^{1/\infty} = 1$. \cite{ye2010rate} show how this quantity plays a similar role and relates to the sparse eigenvalue condition, where generally the matrix $\Mb$ is an estimate of the covariance matrix of $\Xb$. Intuitively, estimation in high dimensions is difficult as $\hat \Gb$ is not invertible. However, under a sparsity assumption on the coefficients, one only needs $\hat \Gb$ to be invertible on the set of sparse vectors. The cone invertibility factor (CIF) captures whether $\hat \Gb$ is non-invertible on the set of sparse vectors. The proof technique then proceeds by showing that the cone invertibility factor for $\hat{\mathbf{G}}$ is close to the cone invertibility factor for the population matrix $\mathbf{G}$, which is assumed to be invertible. As we will show, estimation of $\beta^0$ is possible in this setting and we can control the $\ell_q$ error of the estimator $\hat{\beta}(\lambda)$. This derives from an upper bound for $\Vert\hat{\beta}(\lambda) - \beta^0\Vert_q$ by a ratio with the CIF in the denominator as a critical quantity that must be positive. 

One may ask whether under a sparsity assumption it is actually necessary to have one constraint per covariate, or whether it is sufficient to have far fewer constraints than covariates. The following example gives a negative answer to this question.  Intuitively speaking, sparsity assumptions allows us to get away with few observations per constraint, but we still need at least as many constraints as covariates. This issue might be mitigated by pre-screening, which will be discussed further below. 
More specifically, Figure~\ref{fig:Ex3} provides an example where $\ell_1$-regularization leads to the wrong solution, even for $n \rightarrow \infty$. Consider two separate models where the only difference is in the structural equation of $X_2$ which has no effect on $Y$. The true causal parameter is $\beta^0=(1,0)$. Assuming $X_1$ is intervened on both settings, we obtain an estimator solving equation~\eqref{equation:high-dim} on each separate environment, both satisfying their corresponding orthogonality constraints. Even though $\beta^0$ also satisfies the orthogonality constraints in both cases, the estimator on the right-hand side model is a solution with smaller $\ell_1$ norm than $\beta^0$, and thus is not consistent. The minimal $\ell_1$ norm solution to the linear constraint for the left-hand side model is however the true $\beta^0$.
\begin{figure*}[ht]
\centering
\includegraphics[trim={1cm 9cm 9cm 6cm},clip,width=0.6\linewidth]{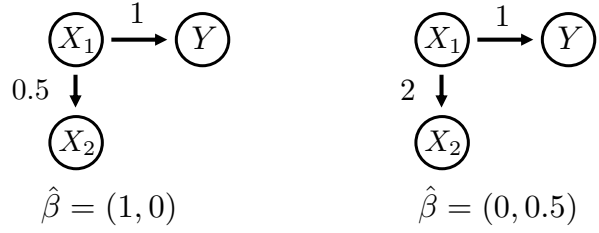}
\caption{Regularization by shrinking the $\ell_1$ norm of the coefficient vector requires additional structural assumptions. In this example, both $\beta^{(1)} = (1,0)$ and $\beta^{(2)}=(0,.5)$ satisfy $\Gb \beta = \Zb$ in their respective environments. Since $ \| \beta^{(1)} \|_1 > \| \beta^{(2)} \|_1$, the proposed method would estimate $\beta = \beta^{(2)}$ in the population case for the right-hand model, which is not the correct solution. This issue can be mitigated by pre-screeing, as we will discuss below.}
\label{fig:Ex3}
\end{figure*}

We now refer back to the two scenarios briefly described at the beginning of this section. Consider environments generated via randomization. The population level matrix $\Gb$ is invertible following Proposition~\ref{proposition:linear_identifiability} whenever we have access to a very large number of environments, such that for any covariate there is an environment where it is randomized. In this case, even if we only have few samples per environment, our high-dimensional causal aggregation procedure will be of practical use as the CIF is positive. On the other hand, whenever the number of constraints is small,  
we will assume that the entries in the connectivity matrix between covariates are small enough so that the matrix $\Gb$ is invertible on the set of sparse vectors with same support as $\beta^0$, leading to a positive CIF. This assumption is not verifiable in practice, and therefore we recommend pre-screening in these scenarios which we discuss in Section~\ref{section:subsection:pre-screening}.

We thus now assume that the CIF value is positive, and in particular does not decrease too fast in the high-dimensional regime where $p, n_e$ simultaneously grow. We formalize our main result in the following proposition.

\begin{proposition}\label{proposition:lq-bound}
Denote by $S^0 := \{j: \beta^0_j\neq 0\}$ the active set of covariates. Assume that $X^{e}_j$ are $\sigma_X^2$ sub-Gaussian, $\epsilon_Y^e$ are $\sigma_E^2$ sub-Gaussian, and that $R^c$ are $\sigma^{2}_C$ sub-Gaussian for all $e\in \mathcal{E}, j \in [p], c\in \mathcal{C}$ and some fixed $\sigma_X^2, \sigma_E^2, \sigma_C^2>0$. Additionally, assume that 
\begin{align*}
    \frac{1}{\text{CIF}_q(S^0, \Gb)}\sqrt{\frac{\log p}{\min_{e\in \mathcal{E}}n_e}} \xrightarrow[\{n_e\}_e,p \rightarrow +\infty]{} 0
\end{align*}
so that, in particular, $\text{CIF}_q(S^0, \Gb) > 0$. There exists a constant $K>0$, that depends only on $\sigma_X^2, \sigma_E^2, \sigma_C^2>0$ and $K_0$, another universal constant, such that, for the following choice of $\lambda$:
\begin{equation*} 
\lambda := K\sqrt{\frac{\log p}{\min_{e\in \mathcal{E}}n_e}}
\end{equation*}
we get
\begin{align*}
    \mathbb{P}\Bigg(\Vert\hat{\beta}(\lambda) - \beta^0 \Vert_{q} \leq \frac{K|S^0|^{1/q}}{\text{CIF}_q(S^0, \Gb)}\sqrt{\frac{\log p}{\min_{e\in \mathcal{E}}n_e}}\Bigg) \xrightarrow[\{n_e\}_e,p \rightarrow +\infty]{} 1
\end{align*}
\end{proposition}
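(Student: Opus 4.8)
The plan is to follow the Dantzig-type analysis of \cite{ye2010rate}, adapted to the rectangular matrix $\hat{\Gb}$ in place of a sample Gram matrix. Write $\deltb := \hat{\beta}(\lambda) - \beta^0$ for the estimation error. The whole argument rests on two high-probability events---feasibility of $\beta^0$ and a uniform control of $\hat{\Gb} - \Gb$---after which everything is deterministic.

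\textbf{Step 1 (feasibility of $\beta^0$).} I would first show that $\beta^0$ lies in the feasible set of \eqref{equation:high-dim} with high probability. By construction of the orthogonality constraints, the $c$-th entry of $\hat{\Zb} - \hat{\Gb}\beta^0$ equals $\frac{1}{n_{e_c}}\sum_{i} R^c_i(Y^{e_c}_i - \Xb^{e_c,T}_i\beta^0)$, which is a centered average since $\mathbb{E}[R^c(Y^{e_c}-\Xb^{e_c,T}\beta^0)] = 0$. Each summand is a product of the sub-Gaussian variable $R^c$ and the residual $Y^{e_c} - \Xb^{e_c,T}\beta^0$, which is also sub-Gaussian under the stated moment assumptions; hence the summands are sub-exponential. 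A Bernstein inequality for sub-exponential averages followed by a union bound over the $|\mathcal{C}|$ constraints (with $|\mathcal{C}|$ polynomial in $p$) yields $\|\hat{\Zb} - \hat{\Gb}\beta^0\|_\infty \le K\sqrt{\log p / \min_e n_e} = \lambda$ with probability tending to one, so that $\beta^0$ is feasible on this event.

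\textbf{Step 2 (cone condition and residual bound).} On the feasibility event, the $\ell_1$-minimality of $\hat{\beta}(\lambda)$ gives $\|\hat{\beta}(\lambda)\|_1 \le \|\beta^0\|_1$. Splitting the $\ell_1$-norm over $S^0$ and its complement and using $\beta^0_{(S^0)^c} = 0$ produces the cone condition $\|\deltb_{(S^0)^c}\|_1 \le \|\deltb_{S^0}\|_1$, so $\deltb$ lies exactly in the cone over which $\text{CIF}_q(S^0,\cdot)$ is defined. Moreover, since both $\hat{\beta}(\lambda)$ and $\beta^0$ are feasible, the triangle inequality gives $\|\hat{\Gb}\deltb\|_\infty \le \|\hat{\Zb} - \hat{\Gb}\hat{\beta}(\lambda)\|_\infty + \|\hat{\Zb} - \hat{\Gb}\beta^0\|_\infty \le 2\lambda$. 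Applying the definition of the cone invertibility factor to $\deltb$ then gives $\|\deltb\|_q \le |S^0|^{1/q}\|\hat{\Gb}\deltb\|_\infty / \text{CIF}_q(S^0, \hat{\Gb}) \le 2\lambda\,|S^0|^{1/q}/\text{CIF}_q(S^0,\hat{\Gb})$.

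\textbf{Step 3 (transfer from the empirical to the population CIF).} The bound above features the empirical CIF, whereas the proposition controls the population CIF, and bridging the two is the crux of the argument. I would show $\text{CIF}_q(S^0,\hat{\Gb}) \ge \tfrac12\,\text{CIF}_q(S^0,\Gb)$ with high probability. For any $\ub$ in the cone, $\|(\hat{\Gb}-\Gb)\ub\|_\infty \le \|\hat{\Gb}-\Gb\|_{\max}\|\ub\|_1$, and the cone condition together with Hölder's inequality gives $\|\ub\|_1 \le 2|S^0|^{1-1/q}\|\ub\|_q$; taking the infimum over the cone yields $\text{CIF}_q(S^0,\hat{\Gb}) \ge \text{CIF}_q(S^0,\Gb) - 2|S^0|\,\|\hat{\Gb}-\Gb\|_{\max}$. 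The entries of $\hat{\Gb}-\Gb$ are again centered sub-exponential averages, so a Bernstein bound and a union bound over the $|\mathcal{C}|\times p$ entries give $\|\hat{\Gb}-\Gb\|_{\max} \le K'\sqrt{\log p/\min_e n_e}$ with high probability. The main obstacle is precisely here: controlling the perturbation of the CIF costs a factor $|S^0|$, so one genuinely needs $|S^0|\sqrt{\log p/\min_e n_e}$ negligible against $\text{CIF}_q(S^0,\Gb)$; under the stated scaling condition (with $|S^0|$ treated as slowly growing and folded into the assumption) the perturbation term vanishes and $\text{CIF}_q(S^0,\hat{\Gb}) \ge \tfrac12\,\text{CIF}_q(S^0,\Gb)$. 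Substituting into the Step 2 bound with $\lambda = K\sqrt{\log p/\min_e n_e}$ yields, on the intersection of the two events, $\|\deltb\|_q \le \frac{4K|S^0|^{1/q}}{\text{CIF}_q(S^0,\Gb)}\sqrt{\frac{\log p}{\min_e n_e}}$, which is the claimed inequality after adjusting the universal constant $K$.
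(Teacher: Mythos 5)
Your proposal is correct and follows essentially the same route as the paper: feasibility of $\beta^0$ on the event $\{\Vert\hat{\Zb}-\hat{\Gb}\beta^0\Vert_\infty\le\lambda\}$, the cone condition and the bound $\Vert\hat{\Gb}(\hat\beta(\lambda)-\beta^0)\Vert_\infty\le 2\lambda$, the CIF definition, a perturbation bound $|\text{CIF}_q(S^0,\Gb)-\text{CIF}_q(S^0,\hat{\Gb})|\le 2|S^0|\Vert\Gb-\hat{\Gb}\Vert_\infty$ (which you reprove inline where the paper cites Lemma 3 of \cite{rothenhausler2019causal}), and Bernstein-plus-union-bound controls of the two events. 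Your explicit remark that the $|S^0|$ factor must be absorbed into the scaling assumption is a fair (and slightly more candid) reading of the same step the paper takes.
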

Our assumptions require in particular that the dimension $p$ does not grow too fast compared to the number of samples in the environments:
\begin{equation*}
    \frac{\log p}{\min_{e\in \mathcal{E}}n_e} \xrightarrow[\{n_e\}_e,p \rightarrow +\infty]{} 0
\end{equation*}
The estimator $\hat{\beta}(\lambda)$ is sparse and we denote $\hat{S}(\lambda):= \{j: \hat{\beta}(\lambda)\neq 0\}$ its active set. Our estimation procedure leads to a feature selection procedure as the following result holds under an additional assumption on the minimum value of the non-zero coordinates of $\beta^0$, called beta-min assumption \cite{buhlmann2013statistical}. This condition is required for support recovery as it provides the required separation between the non-zero coordinates of $\beta^0$ and the null vector with respect to the CIF-derived upper bound of the $\ell_q$ loss. Note that support recovery is different from model selection consistency, which usually needs much stronger assumptions \citep{zhao2006model}.
\begin{proposition}\label{proposition:support_recovery}
Assume that the conditions of the result above hold with $q = \infty$, and that 
\begin{align*}
    \min_{j\in S^0} |\beta^0_j| > \frac{K}{\text{CIF}_{\infty}(S^0, \Gb)}\sqrt{\frac{\log p}{\min_{e\in \mathcal{E}}n_e}}
\end{align*}
We then have
\begin{align*}
    \mathbb{P}\big(S^0\subset \hat{S}(\lambda)\big)\xrightarrow[\{n_e\}_e,p \rightarrow +\infty]{} 1
\end{align*}
\end{proposition}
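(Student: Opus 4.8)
The plan is to obtain this as a short corollary of Proposition~\ref{proposition:lq-bound}, since essentially all of the probabilistic work has already been done there; the remaining argument is deterministic and relies only on the beta-min separation. First I would specialize Proposition~\ref{proposition:lq-bound} to $q=\infty$. Using the stated convention $|S^0|^{1/\infty}=1$, that proposition guarantees that, for the same tuning $\lambda = K\sqrt{\log p / \min_{e\in\mathcal{E}} n_e}$ and the same constant $K$, the event
\begin{equation*}
    \mathcal{A} := \Big\{ \Vert\hat{\beta}(\lambda) - \beta^0\Vert_{\infty} \leq \tfrac{K}{\text{CIF}_{\infty}(S^0, \Gb)}\sqrt{\tfrac{\log p}{\min_{e\in\mathcal{E}}n_e}} \Big\}
\end{equation*}
has probability tending to one. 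I would then argue entirely on $\mathcal{A}$.

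On $\mathcal{A}$, fix an arbitrary index $j\in S^0$. Bounding the coordinate error by the $\ell_\infty$ norm gives $|\hat{\beta}_j(\lambda) - \beta^0_j| \leq \Vert\hat{\beta}(\lambda) - \beta^0\Vert_\infty$, and combining this with the definition of $\mathcal{A}$ and the beta-min assumption yields the strict inequality $|\hat{\beta}_j(\lambda) - \beta^0_j| \leq \tfrac{K}{\text{CIF}_{\infty}(S^0, \Gb)}\sqrt{\tfrac{\log p}{\min_e n_e}} < \min_{k\in S^0}|\beta^0_k| \leq |\beta^0_j|$. A reverse triangle inequality then gives $|\hat{\beta}_j(\lambda)| \geq |\beta^0_j| - |\hat{\beta}_j(\lambda)-\beta^0_j| > 0$, so $j \in \hat{S}(\lambda)$. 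Since $j$ was arbitrary, $S^0 \subset \hat{S}(\lambda)$ holds on $\mathcal{A}$, and because $\mathbb{P}(\mathcal{A}) \to 1$ the conclusion follows.

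The step I would be most careful about is not a genuine obstacle but a bookkeeping point: the error bound and the beta-min threshold must be matched exactly. Since both statements use the same constant $K$, the same tuning $\lambda$, and the same $\text{CIF}_\infty(S^0,\Gb)$, the right-hand side of the $\ell_\infty$ bound coincides with the beta-min threshold, so the strict inequality in the assumption is precisely what forces each active coordinate of $\hat{\beta}(\lambda)$ to stay bounded away from zero. All the hard analysis — controlling the sub-Gaussian fluctuations of $\hat{\Zb}$ and $\hat{\Gb}$ and transferring invertibility from $\Gb$ to $\hat{\Gb}$ on the cone of near-sparse vectors via the CIF — lives in Proposition~\ref{proposition:lq-bound}; support recovery itself adds no new probabilistic difficulty.
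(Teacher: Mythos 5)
Your proposal is correct and follows essentially the same route as the paper: restrict to the high-probability $\ell_\infty$-error event from Proposition~\ref{proposition:lq-bound} with $|S^0|^{1/\infty}=1$, then use the beta-min separation and the (reverse) triangle inequality to conclude $\min_{j\in S^0}|\hat{\beta}(\lambda)_j|>0$ on that event. The only cosmetic difference is that the paper writes the constant as $K\sigma_C\sigma_E$ inside the event while the proposition statement absorbs the sub-Gaussian parameters into $K$; your bookkeeping matches the statement and is fine.
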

In conclusion, our regularized estimator recovers the support of $\beta^0$ and converges in probability to $\beta^0$ under the $\ell_q$ norm under the assumption that the CIF value is positive. This holds whenever we have a very large number of environments where every covariate is randomized in at least one environment, even if we have few samples per environment. In cases where the number of randomized constraints is too small this assumption may not hold in practice. In this setting, we recommend the pre-screening estimation procedure described below.

\subsection{Estimation via Pre-Screening}\label{section:subsection:pre-screening}

Given the limitations of the direct shrinkage approach, which requires few interventional samples per environment, but a large number of experiments, we propose an alternative formulation for estimation of $\beta^0$ in the high-dimensional setting based on a two-step procedure. However, we instead assume that some observational data is available: we can start with a pre-screening step that chooses a subset of covariates by running a Lasso regression of $Y$ on $X$ on the observational data set. Under regularity assumptions, the set $\hat{S}$ of covariates selected by running a Lasso regression on observational data contains the Markov blanket of the response variable $Y$: in particular, $\mathbb{P}[\hat S \supseteq S^0 ] \rightarrow 1$ under some conditions on the non-zero regression coefficients \citep[Section 2.5]{buhlmann2011statistics}. This procedure assumes that we are able to choose covariates to intervene on: randomizing variables in $\hat{S}$ in (potentially) multiple environments generates orthogonality constraints that allow us to estimate $\beta^0$. The estimator is computed on a different dataset from the one used to pre-select covariates, which guarantees asymptotically valid confidence intervals if $\hat S$ contains the Markov blanket. We report in Appendix~\ref{section:exp-high-dimensional} simulations based on synthetic data to validate our procedure based on a pre-screening step.

\section{Non-Linear Causal Aggregation}\label{section:non-linear}

Here we extend our causal aggregation procedure beyond the linear case. This methodology allows us to recover non-linear interaction terms between covariates if we have access to environments where those covariates are simultaneously randomized. If we assume that the response variable model has no such interactions, environments where a single variable is randomized still allow for estimation of non-linear responses. Compared to previous sections, in this section we focus on causal constraints that arise through randomization. 

We first define in Section~\ref{section:subsection:nonlinear_definition} a non-linear SEM extension, and then characterize in Section~\ref{section:subsection:nonlinear_identification} sufficient conditions on the set of environments that allow us to identify the non-linear function of the response variable structural equation. Finally, in Section~\ref{section:subsection:causal_boosting} we propose a causal aggregation procedure inspired from the Boosting methodology.

\subsection{Non-linear Structural Equation Models}\label{section:subsection:nonlinear_definition}

Estimating causal effects is more challenging in presence of interactions, and linear approximations may not capture true causal relationships. We define a non-linear extension of the SEM in \eqref{def:linearSEM} via the following causal model $\mathcal{M}^0$:
\begin{equation}\label{equation:SEM}
\mathcal{M}^0 : \begin{cases}
    X_j \longleftarrow \mathfrak{f}_j(\Xb_{\text{pa}_0(j)}) + \epsilon_j \qquad \text{for all}\; 1\leq j \leq p
    \\ Y \longleftarrow f^0(\Xb_{pa_0(p+1)}) +  \epsilon_{Y}
\end{cases}
\end{equation}
 where we define structural equations as real-valued functions $\mathfrak{f}_j$ for $1\leq j \leq p$ over a subset of covariates indexed by $\text{pa}_0(j) \subset [p+1]$. The structural equations from the causal model $\mathcal{M}^0$ given by \eqref{equation:SEM} define a DAG $G^0=(V^0, E^0)$ over the $p+1$ nodes denoted by $V^0$. As in the linear case, we assume that $G^0$ is a DAG and that the response variable $Y$ can belong to the parental sets of covariate nodes. Unlike the linear case, we do not make explicit the confounding effect via latent variables, instead allowing disturbance terms $\epsb = \{\epsilon_i\}_{i\in [p+1]}$ to be dependent. Therefore, due to such confounding the observational distribution $\mathbb{P}^0$ factorizes in an extended graph $\bar{G}^0=(V^0, \bar{E}^0)$ with additional edges in $\bar{E}^0$. Conditional independence statements based only on the structure of $G^0$ may not hold if a latent variable simultaneously influences several nodes. The function $f^0$ represents the causal effect of the covariates on the response, which takes as argument a subset $S^0:=pa_0(p+1)\subset[p]$ of covariates and potentially contains non-linear effects of covariates on the response as well as interaction effects between covariates. We assume that all functions $(\mathfrak{f}_j)_j , f^0$ are square integrable under any environment distribution $\mathbb{P}^e$. The function $f^0$ represents the unconfounded relationship between covariates and response: under an interventional distribution $\mathbb{P}^e$ where $\Xb_{S^0}^e$ are randomized, we assume that $\mathbb{E}[\epsilon_Y^e | \Xb_{S^0}^e] = 0$. The model for generating distributions $\mathbb{P}^e$ from interventions on covariates indexed by $\phi\subset[p]$ is defined analogously to \eqref{equation:perturbedSEM} by:
\begin{align}\label{equation:perturbed-nonlinear-SEM}
\mathcal{M}^e_{\phi(e)} : \begin{cases}
    X_j^e \longleftarrow \mathfrak{f}_j^e(\Xb^e_{\text{pa}_e(j)}) + \epsilon^e_j & \forall j \notin \phi(e)
    \\ X_j^e \longleftarrow \epsilon_j^e & \forall j \in \phi(e)
    \\ Y^e \longleftarrow f^0(\Xb^e_{S^0}) + \epsilon^e_Y
\end{cases}
\end{align}
where randomized covariate disturbance terms $(\epsilon_j^e)_{j\in \phi}$ are jointly independent, independent of  non-randomized covariate ones $(\epsilon_j^e)_{j\notin \phi}$. As in the linear case, an interventional causal model $\mathcal{M}^e_{\phi(e)}$ still factors in a simplified DAG $\bar{G}^e=(V^0, \bar{E}^e)$ where incoming edges into $\Xb_{\phi(e)}$ nodes are deleted. Our proposed non-linear aggregation procedure relies on the structure of $\bar{G}^e$.

\subsection{Identification of the Response Structural Equation}\label{section:subsection:nonlinear_identification}

Our previous approach for aggregating information across environments does not carry on to this scenario as we can no longer stack vectors representing linear orthogonality constraints into one system of equations that aggregates the causal information derived from each environment. Similar to the linear case, estimating $f^0$ is possible whenever all the covariates in the parental set of $Y$ are randomized: in environment $e$ where $S^0\subset\phi(e)$, we have $\mathbb{E}[Y^e | \Xb_{\phi(e)}^e] = f^0(\Xb_{S^0}^e) + \mathbb{E}[\epsilon_Y^e | \Xb_{\phi(e)}^e] = f^0(\Xb_{S^0}^e)$. Any non-parametric regression method can be used to estimate $f^0$ in this setting, where non-randomized covariates are ignored. However, as we indicated in the introduction, simultaneous randomization may not be feasible in practice, so that instead of one fully randomized experiment we may only have access to several datasets where different subsets of covariates are randomized. Our method for constructing an estimator of $f^0$ still relies on orthogonality constraints that we obtain environment-wise and then aggregate into a single estimator $\hat{f}$. In the example above, denoting by $\sigma(U)$ the Borel $\sigma$-algebra generated by the random variable $U$, we get that $f^0(\Xb^e_{S^0})$ is characterized as the orthogonal projection of $Y^e$ on the subspace $L_2(\sigma(\Xb^e_{\phi(e)}))$ of square-integrable $\sigma(\Xb^e_{\phi(e)})$-measurable random variables: for any square-integrable Borel function $h$, we have
\begin{align}\label{eq:orthogonality-general}
    \mathbb{E}\big[h(\Xb^e_{\phi(e)})(Y^{e} - f^0(\Xb^e_{S^0}))\big] = 0
\end{align}
Conversely, this projection does not capture the true $f^0$ if some covariates in $S^0$ are confounded: the $L_2$ projection by conditioning over $\Xb_{S^0}$ leads to a biased estimate of $f^0$. However, under additional assumptions on $f^0$ and the set of causal models $(\mathcal{M}^e_{\phi(e)})_e$, exact recovery of $f^0$ is still possible based on multiple environments, where only a few covariates in $S^0$ are simultaneously randomized in each environment. Our first result shows that, given a collection of environments $(\mathcal{M}^e_{\phi(e)})_e$, assuming $f^0$ can be decomposed as follows:
\begin{equation*}
    f^0(\xb_{S^0}) := \sum_{e \in \mathcal{E}}f_e(\xb_{\phi(e)\cap S^0})
\end{equation*}
then we can identify $f^0$. We do not know $S^0$ in practice, and thus we assume that $f^0$ is a sum of functions representing each an interaction term between subsets of covariates in $S^0$ which are simultaneously randomized in a given environment: 
\begin{equation}\label{decomposition_function}
    f^0(\xb_{S^0}) = \sum_{e \in \mathcal{E}}f_e(\xb_{\phi(e)})
\end{equation}
Let $\mathcal{F}$ be the set of square-integrable Borel functions for all $\mathbb{P}^e$, and $\mathcal{F}_{\mathcal{E}} \subset \mathcal{F}$ the set of functions $f$ that can be decomposed as above \eqref{decomposition_function}, where $f_e: \mathbb{R}^{|\phi(e)|}\rightarrow\mathbb{R}$ and $f^e \in \mathcal{F}$ for all $e\in \mathcal{E}$.
\begin{proposition}\label{proposition:identifiability}
Assume that the distribution $\mathbb{P}^e$ defining environment $e\in \mathcal{E}$ is generated through interventions following model~\eqref{equation:perturbed-nonlinear-SEM}, and that all $\mathbb{P}^e$ have the same support. Then there exists at most one function $\bar{f} \in \mathcal{F}_{\mathcal{E}}$ such that, for all $e\in \mathcal{E}$, for all $h\in \mathcal{F}$, we have 
\begin{equation}\label{orthogonality_intervention}
    0 = \mathbb{E}[h(\Xb^e_{\phi(e)})(Y^e - \bar{f}(\Xb^e))]
\end{equation}
Additionally, $f^0$ satisfies equation~\eqref{orthogonality_intervention} above over the shared support of $(\mathbb{P}^e)_e$ whenever $f^0 \in \mathcal{F}_{\mathcal{E}}$.
\end{proposition}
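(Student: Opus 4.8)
The plan is to prove the two assertions separately: first the ``existence'' claim that $f^0$ itself satisfies \eqref{orthogonality_intervention}, which is direct, and then the ``at most one'' uniqueness claim, which carries the real content.

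For the existence claim, fix an environment $e$. Since $f^0$ depends only on the coordinates in $S^0$, the model~\eqref{equation:perturbed-nonlinear-SEM} gives $Y^e - f^0(\Xb^e) = Y^e - f^0(\Xb^e_{S^0}) = \epsilon^e_Y$, so for any square-integrable $h$,
\[
\mathbb{E}\big[h(\Xb^e_{\phi(e)})(Y^e - f^0(\Xb^e))\big] = \mathbb{E}\big[h(\Xb^e_{\phi(e)})\,\epsilon^e_Y\big].
\]
The randomized block $\Xb^e_{\phi(e)} = (\epsilon^e_j)_{j\in\phi(e)}$ carries only the fresh exogenous disturbances injected by the intervention; in $\bar{G}^e$ the nodes in $\phi(e)$ have no incoming edges, in particular none from the confounders that also act on $Y$, so $\Xb^e_{\phi(e)}$ is independent of $\epsilon^e_Y$. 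Since randomization renders $\epsilon^e_Y$ mean-zero conditionally on $\Xb^e_{\phi(e)}$, the right-hand side vanishes, which is exactly \eqref{orthogonality_intervention} for $\bar f = f^0$. The hypothesis $f^0\in\mathcal{F}_{\mathcal{E}}$ is used only to guarantee that $f^0$ is an admissible candidate, so that the uniqueness statement below applies to it.

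For uniqueness, suppose $\bar f_1,\bar f_2\in\mathcal{F}_{\mathcal{E}}$ both satisfy \eqref{orthogonality_intervention}. Since $\mathcal{F}_{\mathcal{E}}$ is a linear space, the difference $g := \bar f_1 - \bar f_2 = \sum_{e\in\mathcal{E}} g_e(\Xb_{\phi(e)})$ again lies in $\mathcal{F}_{\mathcal{E}}$, and subtracting the two constraint systems shows that $g$ satisfies the homogeneous relations $\mathbb{E}[h(\Xb^e_{\phi(e)})\,g(\Xb^e)]=0$ for every $e$ and every $h$; equivalently $\mathbb{E}^e[g(\Xb^e)\mid \Xb^e_{\phi(e)}]=0$ for each $e$. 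The goal is to deduce $g\equiv 0$ on the common support $\mathcal S$ of the $\mathbb{P}^e$. The natural first attempt is an energy argument: because $g_e$ is $\sigma(\Xb^e_{\phi(e)})$-measurable, choosing the test function $h=g_e$ in environment $e$ gives $\mathbb{E}^e[g\cdot g_e]=0$, and summing these identities over $e$ while expanding $g=\sum_{e'}g_{e'}$ is meant to reconstruct $\mathbb{E}[g^2]$ and force $g=0$.

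The main obstacle is precisely that these orthogonality identities live in different spaces $L_2(\mathbb{P}^e)$: the model permits the structural equations $\mathfrak{f}_j^e$, the disturbances $\epsilon^e_j$, and the confounding to vary arbitrarily across environments, so the cross terms $\mathbb{E}^e[g_e g_{e'}]$ and $\mathbb{E}^{e'}[g_e g_{e'}]$ need not agree and the sum does not collapse to a single quadratic form. This is where the same-support hypothesis is essential: since $\Xb^e_{\phi(e)}$ is randomized, its support is the product set $\pi_{\phi(e)}(\mathcal S)$, and the requirement that every $\mathbb{P}^e$ share the support $\mathcal S$ rigidly couples the fibers seen from different environments. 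I would therefore argue on the common support rather than through a single energy: the relations $\mathbb{E}^e[g\mid \Xb^e_{\phi(e)}]=0$ say that the conditional average of the additive function $g$ over every $\phi(e)$-fiber of $\mathcal S$ vanishes, and I would show, by an ANOVA-style telescoping over the environments (peeling off one component $g_e$ at a time and using that its fiber-averages are pinned down by the remaining components on the shared support), that no nonzero additive $g$ can have all such fiber-averages equal to zero. Carrying out this telescoping cleanly --- in particular handling the overlaps $\phi(e)\cap\phi(e')$ and the gauge freedom in the decomposition $g=\sum_e g_e$ (the components are not unique, only their sum is) --- is the technical heart of the proof.
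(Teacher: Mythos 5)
Your existence argument matches the paper's and is fine. For uniqueness, however, what you have written is a plan rather than a proof: you correctly reduce to showing that an additive $g=\sum_e g_e(\Xb_{\phi(e)})$ with $\mathbb{E}^e[g(\Xb^e)\mid\Xb^e_{\phi(e)}]=0$ for all $e$ must vanish, you correctly diagnose why the naive energy argument fails (the constraints live in different $L_2(\mathbb{P}^e)$ spaces, so cross terms do not collapse), and you then gesture at an ``ANOVA-style telescoping'' while explicitly conceding that handling the overlaps $\phi(e)\cap\phi(e')$ and the gauge freedom in the decomposition is ``the technical heart of the proof.'' That heart is exactly what is missing. The paper supplies three concrete ingredients that your sketch does not: (i) a well-founded ordering on intervention sets (reversed lexicographic relative to a topological order of the shared DAG) whose maximal element $\phi_0=\phi(e_0)$ is peeled off first, guaranteeing the recursion terminates; (ii) a specific class of test functions $h$ with $\mathbb{E}^{e_0}[h(\Xb_{\phi_0})\mid\Xb_{\phi_0\setminus\{i\}}]=0$ for every $i\in\phi_0$, together with a d-separation/moralization argument in $\bar{G}^{e_0}_X$ showing that $\mathbb{E}^{e_0}[h(\Xb_{\phi_0})\,g_{e'}(\Xb_{\phi(e')})]=0$ for all $e'\neq e_0$, which isolates the top component $g_{e_0}$; and (iii) an inclusion--exclusion lemma (the paper's Lemma~\ref{lemma:decomposition-function}) showing that a function of independent randomized coordinates orthogonal to all such $h$ must decompose into a sum of functions each omitting one coordinate, which is what strictly decreases the ordering and drives the induction.

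A further point: you place the burden of coupling the environments on the same-support hypothesis, but in the paper that hypothesis plays only the modest role of transferring the decomposition of $f_{e_0}$ (valid on the support of $\mathbb{P}^{e_0}$) to the other environments. The mechanism that actually kills the cross terms is graphical: all interventional DAGs share the topological ordering of $\bar{G}^0$, and randomized nodes have no incoming edges, so the relevant conditional independence holds under $\mathbb{P}^{e_0}$ regardless of how the structural equations $\mathfrak{f}_j^e$ and disturbances vary across environments. Without identifying this mechanism, the telescoping you propose has no reason to close, so the uniqueness claim remains unproved as written.
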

This effectively corresponds to projecting the response $Y$ on a smaller subspace of random variables. However, this projection is only on the unconfounded covariates within each environment: the objective function varies between environments when fitting any regression model to estimate such projections.

\subsection{Causal Aggregation via Boosting}\label{section:subsection:causal_boosting}

Based on the available environments, we propose a causal aggregation estimator $\hat{f} \in \mathcal{F}_{\mathcal{E}}$ of $f^0$ following decomposition in equation~\eqref{decomposition_function}:
\begin{equation*}
    \hat{f} = \sum_{e \in \mathcal{E}}\hat{f}_e(\xb_{\phi(e)})
\end{equation*}
by individually estimating each term $\hat{f}_e$ via samples from the corresponding environment---specifically, the randomized covariates of those samples. Our ability to recover $f^0$ is therefore limited by the overall ``causal information'' derived from each environment: the more covariates are simultaneously randomized in some experiment, the better our procedure will be at capturing potential interactions. The identification result can inspire a naive fitting procedure that closely resembles the backfitting algorithm for additive models \citep{breiman1985estimating}, but uses samples from environment $e$ to estimate the different additive terms $\hat{f}_e$. We first initialize all the additive term estimators $\hat{f}_e$ (for example, setting these to $0$). We then randomly pick an environment $e$, compute the residual term given by 
\begin{align*}
    R^e := Y^e - \sum_{\tilde{e} \neq e}\hat{f}_{\tilde{e}}(\Xb_{\phi(\tilde{e})}^e)
\end{align*}
and update $\hat{f}_e$ based on the orthogonality constraint for that environment where $h$ is any square-integrable random variable:
\begin{align}\label{equation:orthogonality-residual}
    0 = \mathbb{E}\Big[h(\Xb^e_{\phi(e)})\big(R^e - \hat{f}_e(\Xb_{\phi(e)}^e)\big)\Big]
\end{align}
This in turn corresponds to choosing $\hat{f}_e$ as the minimizer of 
\begin{align}\label{equation:orthogonality-loss}
    \hat{f}_e = \argmin_{f_e} \mathbb{E}\Big[\big(R^e - f_e(\Xb_{\phi(e)}^e)\big)^2\Big]
\end{align}
We then loop until some convergence criterion is attained, for example, taking the supremum over $e$ of an $\ell_{\infty}$ loss on the difference between the two last most recent updates of $\hat{f}_e$. We describe this procedure in Algorithm~\eqref{algorithm:backfitting}. The connection between this procedure and the proof of the identification result in Proposition~\ref{proposition:identifiability} is the following. Assume the sequential training of the component $\hat{f}_e$ in $\hat{f}$ starts with the environment index $e$ whose input variables are the most downstream in the DAG derived from $\mathcal{M}^0$. Then this estimator is set to the true population $f_e$ when fitted on the environment $e$ if the model is well specified---similar to the observation leading to \eqref{eq:orthogonality-general} in the fully randomized case. Iterating this procedure we recover $f^0$ by sequentially cancelling out the contribution of each $f_e$ in $Y^e$ when computing \eqref{equation:orthogonality-loss}. Unfortunately, this procedure would require knowing the graph $G^0$, and other challenges arise when using this method that showcase the differences with respect to the original backfitting algorithm. In particular, the loss~\eqref{equation:orthogonality-loss} is minimized on different datasets. Therefore, the sequence of loss values for subsequent iterations of our estimator on one given environment is not decreasing. After updating any given term in decomposition~\eqref{decomposition_function} based on minimizing the loss over the corresponding environment, the loss evaluated at another environment may increase instead.
\begin{algorithm}
    Initialize $\hat{f}_e = 0$, set $\delta_0>0$ convergence threshold, $\delta = 2*\delta_0$ current update gap.
    \\ \While{$\delta > \delta_0$}{
    Sample uniformly: $e \sim \mathcal{U}(|\mathcal{E}|)$
    \\Compute the residual: $R^e \xleftarrow{} Y^e - \sum_{\tilde{e} \neq e}\hat{f}_{\tilde{e}}(\Xb_{\phi(\tilde{e})}^e)$
    \\ Compute estimator over environment: $\hat{g}_e \xleftarrow{} \argmin_{f_e \in \mathcal{F}_{\mathcal{E}}} \mathbb{E}\Big[\big(R^e - f_e(\Xb_{\phi(e)}^e)\big)^2\Big]$
    \\ Update gap: $\delta \xleftarrow{} \sup_e \sup_{\Xb^e}|\hat{f}_e(\Xb^e) - \hat{g}_e(\Xb^e)|$
    \\ Update estimator component: $\hat{f}_e \xleftarrow{} \hat{g}_e$\label{eq:replace-component}
    }
    \Return $\hat{f} = \sum_{e\in \mathcal{E}}\hat{f}_e$
    \caption{Non-linear Causal Aggregation via Backfitting}\label{algorithm:backfitting}
\end{algorithm}

This algorithm then suffers from unstable training, in particular when fitting models with small sample sizes. These issues have the same underlying source: exactly enforcing the orthogonality constraint on one environment to update one component can in turn make the orthogonality conditions fail on the remaining components.

We propose a procedure that addresses this issue, called non-linear causal aggregation via boosting. Boosting is a greedy algorithm where weak learners are aggregated to form a regression estimator or a classifier. Weak learners correspond to individual simple estimators derived from some base procedure such as regression or decision trees. Initial boosting formulations for classification sequentially fit weak learners to samples re-weighted in such way that the mis-classified ones by previously generated weak learners get higher weights in the next iteration. The procedure thus increasingly focuses on the ``hard'' training samples  \citep{FREUND1995256,FREUND1997119, schapire1990strength}. The final output is then a linear weighted average of the weak learners. In \citet{friedman2000additive}, boosting is reformulated as a sequential additive modeling procedure where fitted weak learners correspond to the negative gradient of the classification loss in the function space. Weak learners are then sequentially added to the current function estimator. This characterization was first observed in \citet{breiman1998arcing, breiman1999prediction}, but using this additive modeling perspective boosting methods were subsequently developed beyond classification problems, in particular for regression problems \citep{friedman2001greedy}. At each iteration a weak learner is trained on top of the current iteration of the full estimator: using the square loss, this corresponds to training the weak learner on the residual term obtained based on the current regression model \citep{buhlmann2003boosting}. Within our framework, instead of replacing altogether the individual component $\hat{f}_e$ with an updated fitted model trained on environment $e$ alone, we incrementally update all components $\hat{f}_e$ simultaneously by adding weighted function estimates $\hat{h}_e$ defined over the same set of covariates $\xb_{\phi(e)}$ as the corresponding $\hat{f}_e$. This can also be understood as aggregating ``weakly informative datasets'', as each boosting step only takes into account for each sample those covariates that are unconfounded. Those $\hat{h}_e$ are first estimated independently: within each environment, we use the current aggregate function $\hat{f}$ to generate the environment residuals
\begin{align*}
    R^e := Y^e - \sum_{\tilde{e}}\hat{f}_{\tilde{e}}(\Xb_{\phi(\tilde{e})}^e) = R^e - \hat{f}(\Xb^e)
\end{align*}
Then, independently for each environment, we estimate $\hat{h}_e$ by solving a minimization problem as in \eqref{equation:orthogonality-loss} based on samples from the corresponding environment $e$. We can view this in light of the gradient boosting methodology \citep{friedman2001elements}. We similarly fit a set of candidate learners to the residuals derived from a quadratic loss function. However, individual components $\hat{h}_e$ use only information pertaining to the randomized covariates in each individual data set. Recent work has developed the boosting framework for non-linear instrumental variables regression \citep{bakhitov2021causal}. Instead of using instruments from a single data set, our procedure deals with heterogeneous datasets with different distributions generated via different experimental settings. The second step in this procedure scales the candidate $(\hat{h}_e)_{e\in \mathcal{E}}$ by weights constructed to simultaneously satisfy all orthogonality constraints. That is, we update each component as follows:
\begin{align*}
    \hat{f}_e \xleftarrow{} \hat{f}_e + \hat{\alpha}_e\hat{h}_e
\end{align*}
where $(\hat{\alpha}_e)_{e\in \mathcal{E}}$ are chosen to satisfy the constraints \eqref{equation:orthogonality-residual} for the corresponding aggregated $\sum_{e} \hat{f}_e + \hat{\alpha}_e\hat{h}_e$:
\begin{equation*}
    0 = \mathbb{E}\left[\hat{h}_e(\Xb^e_{\phi(e)})\Big(Y^e - \big(\sum_{\tilde{e}}\hat{f}_{\tilde{e}}(\Xb^e_{\phi(\tilde{e})})+\hat{\alpha}_e\hat{h}_{\tilde{e}}(\Xb^e_{\phi(\tilde{e})}\big)\Big)\right] \qquad \forall e\in \mathcal{E}
\end{equation*}
One choice of loss function to implement this weight parameter search leads to the convex minimization problem: 
\begin{align}\label{eq:linear-subroutine}
    (\hat{\alpha}_e)_{e\in \mathcal{E}} := \argmin_{\alpha_e \in \mathbb{R}} \sum_{e\in \mathcal{E}}\left|\mathbb{E}\left[\big(R^e-\sum_{\tilde{e}\in \mathcal{E}}\alpha_{\tilde{e}}\hat{h}_{\tilde{e}}(\Xb^e_{\phi(\tilde{e})})\big)\hat{h}_e(\Xb^e_{\phi(e)})\right]\right| + \nu\Vert\alpha\Vert_2^2
\end{align}
where we regularize the weight vector $\alpha$ with a quadratic penalty scaled by $\nu\geq 0$ that we empirically observed improves the training behavior. This step is key to guarantee the stability of the procedure. Although individual $\hat{h}_e$ are fitted through individual environment $e$ samples, the global re-weighting over the joint data across environments prevents that individual updates by a single $\hat{h}_e$ perturb the orthogonality constraints on environments $\tilde{e}\neq e$. Finally, we empirically observe that shrinking the updates by a learning rate $1>\eta>0$ improves convergence, so our final proposed update is given by:
\begin{equation*}
    \hat{f} \xleftarrow{} \hat{f} + \eta\sum_{\tilde{e}\in \mathcal{E}}\hat{\alpha}_{\tilde{e}}\hat{h}_{\tilde{e}}
\end{equation*}
We again stop the training procedure as soon as the new term $\sum_{\tilde{e}\in \mathcal{E}}\hat{\alpha}_{\tilde{e}}\hat{h}_{\tilde{e}}$ no longer substantially updates the aggregated $\hat{f}$, and hyper-parameters such as the learning rate and the penalty weight $\nu$ can be chosen by keeping a separate validation data set within each environment and evaluating the unpenalized orthogonality constraint loss in the minimization problem \eqref{eq:linear-subroutine}. 
\begin{algorithm}
    Initialize $\hat{f}_e = 0$, $\hat{f} = \sum_e \hat{f}_e$, set $\eta$ learning rate, $\nu$ penalty weight, set $\delta_0>0$ convergence threshold, $\delta = 2*\delta_0$ current update gap.
    \\ \While{$\delta > \delta_0$}{
    \For{$e \in \mathcal{E}$}{
    Compute residual: $R^e \xleftarrow{} Y^e - \hat{f}(\Xb^e)$
    \\ Compute individual component over environment: $\hat{h}_e \xleftarrow{} \argmin_{h_e \in \mathcal{F}_{\mathcal{E}}} \mathbb{E}\left[\big(R^e - h_e(\Xb_{\phi(e)}^e)\big)^2\right]$
        }
    Enforce orthogonality constraints: $(\hat{\alpha}_e)_{e\in \mathcal{E}} := \argmin_{\alpha_e \in \mathbb{R}} \sum_{e\in \mathcal{E}}\left|\mathbb{E}\left[\big(R^e-\sum_{\tilde{e}\in \mathcal{E}}\alpha_{\tilde{e}}\hat{h}_{\tilde{e}}(\Xb^e_{\phi(\tilde{e})})\big)\hat{h}_e(\Xb^e_{\phi(e)})\right]\right| + \nu\Vert\alpha\Vert_2^2$
    \\ Update gap: $\delta \xleftarrow{} \sup_{e\in\mathcal{E}}\sup_{\Xb^e}|\sum_{\tilde{e}\in \mathcal{E}}\hat{\alpha}_{\tilde{e}}\hat{h}_{\tilde{e}}(\Xb^e_{\phi(\tilde{e})})|$
    \\ Update model: $\hat{f} \xleftarrow{} \hat{f} + \eta\sum_{\tilde{e}\in \mathcal{E}}\hat{\alpha}_{\tilde{e}}\hat{h}_{\tilde{e}}$
    }
    \Return $\hat{f}$
    \caption{Causal Aggregation Boosting}\label{algorithm:boosting}
\end{algorithm}

We summarize the procedure in Algorithm~\eqref{algorithm:boosting}. The proposed procedure leverages the same causal aggregation principle as in the linear aggregation framework, but as a subroutine of the fitting procedure. Our procedure can be understood as gradient descent on the space of functions $\mathcal{F}_{\mathcal{E}}$, where at each step we linearize the space of functions by computing the environment-wise direction of maximal descent, and then fit the best global linear approximation before taking the gradient step. The crucial point is that the environment-wise components are fitted based on the unconfounded covariates of each sample, as the variables $\Xb_{\phi(e)}^e$ used to estimate $\hat{h}_e$ in environment $e$ are precisely those that are randomized.

\section{Numerical Simulations}\label{section:simulations}
We run simulations based on synthetic and semi-synthetic data. We start by validating in Section~\ref{section:subsection:exp-simulations-just} the main results of linear causal aggregation in the just-identified case, its extension to the over-identified case and compare it to do calculus. We then validate non-linear causal aggregation via boosting in Section~\ref{section:subsection:exp-non-linear}. Finally, we then generate in Section~\ref{section:subsection:exp-semi-synthetic} a semi-synthetic data set based on a gene perturbation experiment where we know the true causal relationship between $\Xb$ and $Y$ by design and use it to validate our causal aggregation method.

\subsection{Aggregation in the linear case}\label{section:subsection:exp-simulations-just}

\subsubsection{Aggregation in the just-identified case}

\begin{figure*}%
\centering
\includegraphics[trim={5cm 6.5cm 15cm 6.2cm},clip,width=.3\linewidth]{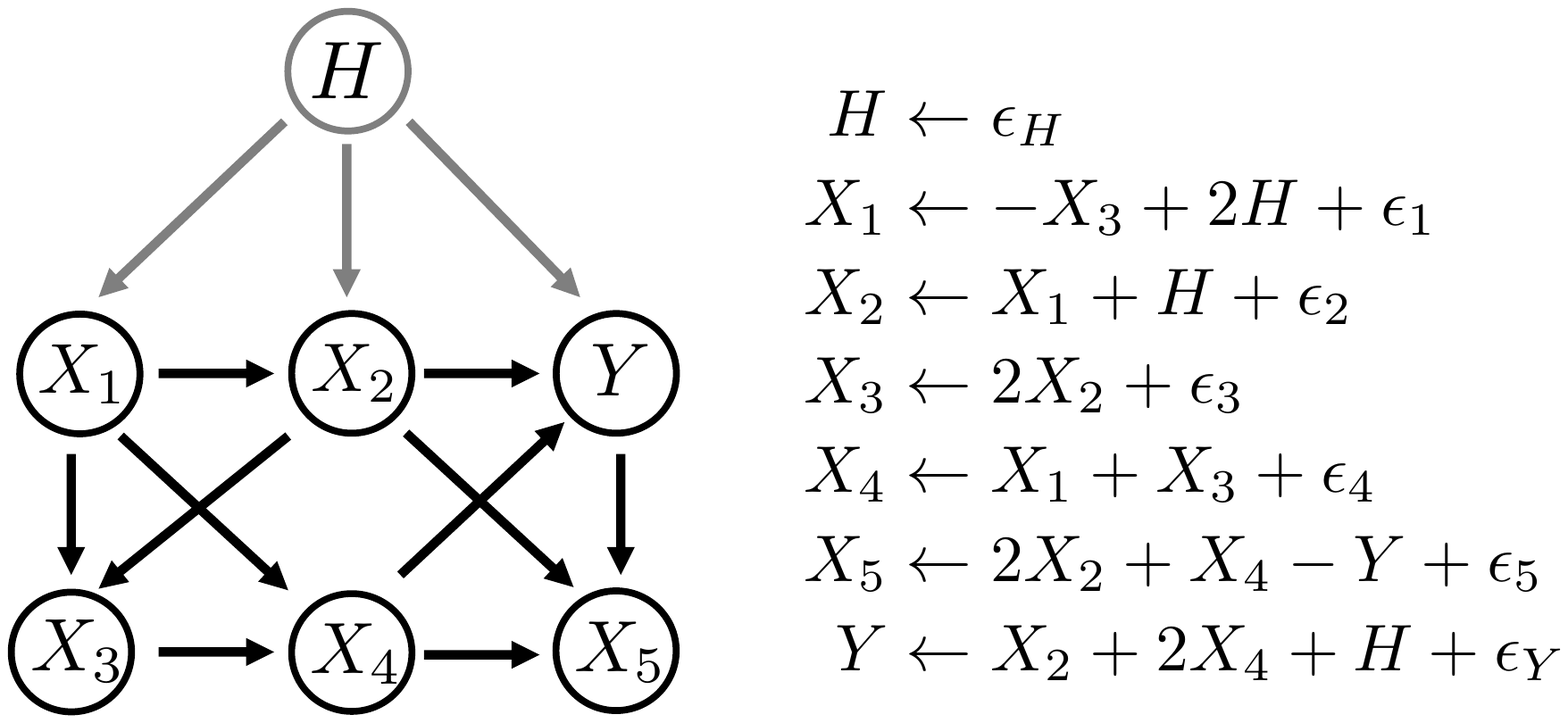}
\caption{SEM for observational environment: Experimental environments are obtained by randomizing covariates in the above SEM, for which their incoming edges are removed. 
}
\label{fig:Ex2-1}
\end{figure*}

We build on Example~\ref{example:LinearSEM} to illustrate a simple case where we aggregate causal information from a diverse set of environments, with different types of constraints built within each environment. We add new covariates as represented in Figure~\ref{fig:Ex2-1}, and generate samples according to a SEM $\mathcal{M}^0$ as in \eqref{def:linearSEM}. The structural equations defining the model are as follows: 
\begin{align}\label{eq:SEM-complex}
\begin{split}
    H & \xleftarrow[]{} \epsilon_H
\\[-0.5\jot] X_1 & \xleftarrow[]{} 2H + \epsilon_1
\\[-0.5\jot] X_2 & \xleftarrow[]{} X_1 + H + \epsilon_2
\\[-0.5\jot] X_3 & \xleftarrow[]{} -X_1 + 2X_2 + \epsilon_3
\\[-0.5\jot] X_4 & \xleftarrow[]{} X_1 + X_3 + \epsilon_4
\\[-0.5\jot] X_5 & \xleftarrow[]{} 2X_2 + X_4 - Y + \epsilon_5
\\[-0.5\jot] Y & \xleftarrow[]{} X_2 + 2X_4 + H + \epsilon_Y
\end{split}
\end{align}
This model contains latent factors that simultaneously affect the covariates and the response. The causal vector is given by \mbox{$\beta^0 = (0,1,0,2,0)$}, but the presence of a latent variable $H$ biases the least squares estimates of $Y$ on $\Xb$. Additionally, the vector $\epsb$ is sampled from a standard Gaussian distribution (where for simplicity we include $\epsilon_H$ in the vector $\epsb$ previously defined). We consider four different environments. First, samples are collected from an observational environment $e_1$ where an instrument $I$ is available for $X_1$, so that its structural equation becomes $X_1 \xleftarrow[]{} 2H + I + \epsilon_1$. An experimental environment $e_2$ is generated by intervening on $X_3$ and $X_5$. This intervention is translated into a SEM with the same equations above except for $X_3, X_5$ that follow $X_i \xleftarrow[]{} \epsilon_i$. We generate samples for the third environment $e_3$ from an interventional data set where we assume we intervene on $X_2$ and also we know the parental set of $X_4$. These interventional environments follow the the causal model $\{\mathcal{M}^e_{\phi(e)}\}_{e}$ where $\phi(e_2)= \{3,5\}$ and $\phi(e_3) = \{2\}$ (cf. equation~\ref{equation:perturbedSEM}). Randomization in the experimental datasets removes spurious correlations due to the latent variable $H$ and truncate the direct contribution of parental variables of intervened covariates. Last, we generate an environment $e_4$ where all covariates are randomized and thus the latent variable $H$ has no longer any confounding effect, i.e. $\phi(e_4)=\{1,2,3,4,5\}$.

\begin{figure*}[ht]
\centering
\includegraphics[trim={3cm 6.5cm 3cm 6.2cm},clip,width=.65\linewidth]{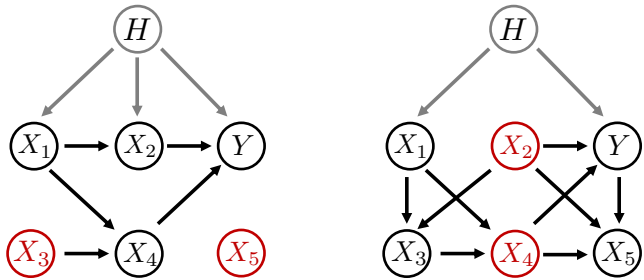}
\caption{Graphical representation of experimental environments: In environment $e_2$ (left), $X_3$ and $X_5$ are randomized. In environment $e_3$ (right) only $X_2$ is randomized. Nodes in red correspond to covariates that are associated to an orthogonality constraint.}
\label{fig:Ex2-2}
\end{figure*}

We run several experiments and evaluate the performance when using different subsets of constraints derived from the environments above. The first criterion we look at is whether the actual coverage from the confidence intervals matches the nominal pre-specified level. We compute confidence intervals with nominal coverage 0.95 for each coordinate and report actual coverage for increasing sample sizes $n \in \{50,100,200,500,1000\}$, averaging across 500 repetitions. In those cases where several methods attain the correct coverage, we compare them based on the mean length of the individual coordinate-wise confidence intervals. We begin comparing our causal aggregation estimator $\hat{\beta}$ presented above and simple OLS, and we later add other methods with competitive performance.

Our first experiment (referred to as experiment A) relies on constraints from environments $e_1$, $e_2$ and $e_3$ that are directly derived from the constraint inducing variables. Each variable has exactly one such constraint across all environments, this corresponds to the just-identified case. We report in Table~\ref{table:ex1:coverage} the coverage of the confidence intervals derived in equation~\eqref{equation:confidence_interval} for different values of sample size, based on 500 repetitions of the simulation. As expected, our causal aggregation estimator has proper coverage following our theoretical results, and OLS, being inconsistent, does not achieve the targeted coverage.

\begin{table}
\caption{We construct confidence intervals with .95 nominal coverage based on the SEM in Figure~\ref{fig:Ex2-1}. We report the actual coverage of those confidence intervals for increasing sample sizes, over 500 repetitions. Bold values indicate that the target coverage is achieved.}
\label{table:ex1:coverage}
\resizebox{\textwidth}{!}{
\begin{tabular}{@{}llrrrrc@{}}
\toprule
 & Estimator & \multicolumn{1}{c}{$n=50$}
& \multicolumn{1}{c}{$n=100$} & \multicolumn{1}{c}{$n = 200$}& \multicolumn{1}{c}{$n = 500$}
& \multicolumn{1}{c}{$n = 1000$} \\
\cmidrule{1-7}
\multirow{2}{7em}{Experiment A} & Causal Aggregation    &  $\boldsymbol{0.98\pm0.02}$  &  $\boldsymbol{0.97\pm0.02}$   &  $\boldsymbol{0.96\pm0.02}$   &  $\boldsymbol{0.95\pm0.02}$   &  $\boldsymbol{0.96\pm0.02}$  \\
\cmidrule{2-7}
& Pooled data OLS    &  $0.35\pm0.04$  &  $0.23\pm0.04$   &  $0.14\pm0.03$   &  $0.04\pm0.02$   & $0.01\pm0.01$ \\
\bottomrule
\end{tabular}
}
\end{table}

\subsubsection{Simulations in the over-identified case}
We define additional experiments built upon the previous example based on the SEM \eqref{eq:SEM-complex}. We already assumed the parental set for $X_4$ is known, but previously we only constructed the corresponding constraint in $e_3$, leaving information on the side. Based on our assumptions on the causal model $\mathcal{M}^e_{\phi(e)}$, the parental set for $X_4$ remains unchanged in those environments where $X_4$ is not intervened on, and thus we have access to two additional constraints. We define experiment B by including those, obtained in environments $e_1$ and $e_2$, on top of those already available in experiment A. Experiment C is defined though the randomization constraints from environment $e_4$ only. Last, experiment D combines all constraints from experiments B and C.

We first check the actual coverage of causal aggregation and OLS and report the results in Table~\ref{table:ex2:coverage} where each row corresponds to a different experiment. Again, causal aggregation achieves the target coverage in all experiments, but OLS only does so in experiment C, where full randomization removes any confounding due to the latent variable and thus the OLS estimator becomes consistent and the derived confidence intervals attain the nominal coverage. Actually, OLS and our causal aggregation method are the same in the just-identified setting of experiment C. We also notice a slight over-coverage in certain settings: the achieved coverage may be higher than the target. We conjecture that this is due to the two-step nature of the regression adjustment constraints: the estimated variance of the constraints may be slightly upwardly biased due to the adjustment step previous to the construction of the orthogonality constraint, which leads to wider-than-expected confidence intervals in finite samples.

\begin{table*}
\caption{Extending Table~\ref{table:ex1:coverage} to experiments B, C and D. As expected, the confidence intervals derived from our causal aggregation method always achieve the target coverage.}
\label{table:ex2:coverage}
\resizebox{\textwidth}{!}{
\begin{tabular}{@{}llrrrrc@{}}
\toprule
  & Estimator & \multicolumn{1}{c}{$n=50$}
& \multicolumn{1}{c}{$n=100$} & \multicolumn{1}{c}{$n = 200$}& \multicolumn{1}{c}{$n = 500$}
& \multicolumn{1}{c}{$n = 1000$} \\
\cmidrule{1-7}
\multirow{2}{7em}{Experiment B}
& Causal Aggregation    &  $\boldsymbol{0.97\pm0.02}$  &  $\boldsymbol{0.96\pm0.02}$   &  $\boldsymbol{0.96\pm0.02}$   &  $\boldsymbol{0.96\pm0.02}$   &  $\boldsymbol{0.96\pm0.02}$  \\
\cmidrule{3-7}
& Pooled Data OLS    &  $0.36\pm0.04$  &  $0.23\pm0.04$   &  $0.13\pm0.03$   &  $0.05\pm0.02$ &  $0.01\pm0.01$  \\
\cmidrule{1-7}
\multirow{2}{7em}{Experiment C}
& Causal Aggregation    &  $\boldsymbol{0.94\pm0.02}$  &  $\boldsymbol{0.94\pm0.02}$   &  $\boldsymbol{0.96\pm0.02}$   &  $\boldsymbol{0.95\pm0.02}$   &  $\boldsymbol{0.95\pm0.02}$ \\
\cmidrule{3-7}
& Pooled Data OLS    &  $\boldsymbol{0.95\pm0.02}$  &  $\boldsymbol{0.95\pm0.02}$   &  $\boldsymbol{0.95\pm0.02}$   &  $\boldsymbol{0.95\pm0.02}$   &  $\boldsymbol{0.95\pm0.02}$   \\
\cmidrule{1-7}
\multirow{2}{7em}{Experiment D}
& Causal Aggregation    &  $\boldsymbol{0.95\pm0.02}$  &  $\boldsymbol{0.94\pm0.02}$   &  $\boldsymbol{0.94\pm0.02}$   &  $\boldsymbol{0.95\pm0.02}$   &  $\boldsymbol{0.95\pm0.02}$   \\
\cmidrule{3-7}
& Pooled Data OLS    &  $0.39\pm0.04$  &  $0.20\pm0.04$   & $0.06\pm0.03$   &  $0.01\pm0.01$   &  $0.00\pm0.00$   \\
\bottomrule
\end{tabular}
}
\end{table*}

\begin{figure*}[ht]
\centering
\includegraphics[clip,width=.95\linewidth]{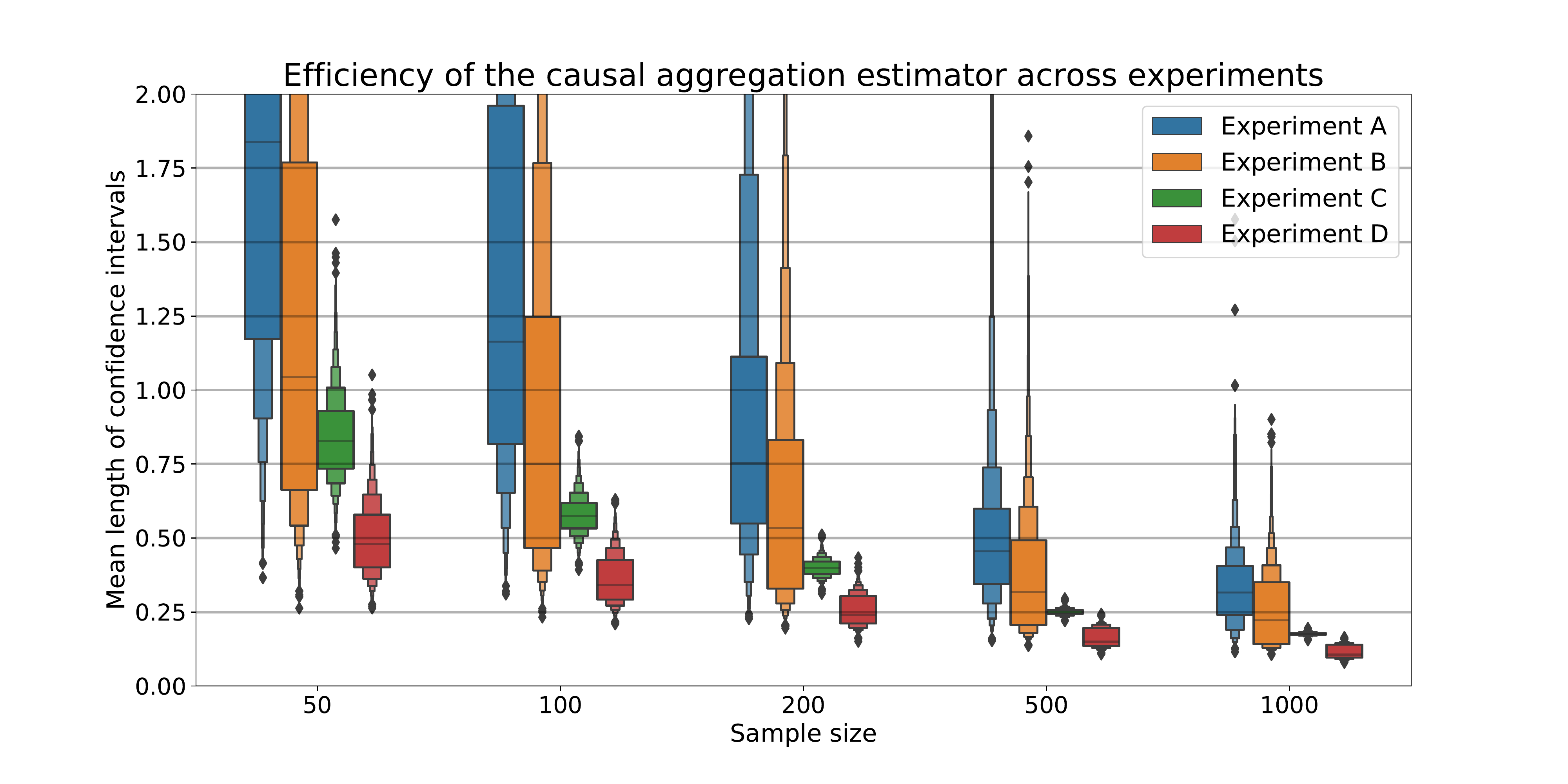}
\caption{Box plots of the mean length of the confidence intervals across 500 repetitions for a range of sample sizes in different experimental settings. The causal aggregation estimator is consistent and the confidence intervals achieve the target coverage. Our method can additionally leverage information from every environment that generates a valid constraint to tighten the confidence interval length.}
\label{fig:section7-3}
\end{figure*}

Having checked that the confidence intervals from our causal aggregation method achieve the nominal coverage, we now turn to comparing size across different experiments. Figure~\ref{fig:section7-3} shows the main point of these simulations: increasing the number of constraints improves the asymptotic efficiency of the causal aggregation estimator. Experiments A and C correspond to two just-identified cases, and additional constraints define experiments B and D respectively, which are thus over-identified cases where we use the two-step MM aggregation method. In both cases the confidence intervals become tighter: experiment B shows an improvement over A, and experiment D improves over C and B. Even if causal aggregation (and OLS in experiment C) are able to properly estimate $\beta^0$, our method leverages additional information to improve efficiency.

\begin{table*}
\caption{We construct confidence intervals with .95 nominal coverage and report the actual coverage of those confidence intervals for increasing sample sizes, over 500 repetitions. Bold values indicate that the target coverage is achieved.}
\label{table:ex3:coverage}
\centering
\resizebox{.95\textwidth}{!}{
\begin{tabular}{@{}lrrrrc@{}}
\toprule
 & \multicolumn{1}{c}{$n=50$}
& \multicolumn{1}{c}{$n=100$} & \multicolumn{1}{c}{$n = 200$}& \multicolumn{1}{c}{$n = 500$}
& \multicolumn{1}{c}{$n = 1000$} \\
\cmidrule{1-6}
Do-calculus    &  $\boldsymbol{0.94\pm0.02}$  &  $\boldsymbol{0.95\pm0.02}$   &  $\boldsymbol{0.95\pm0.01}$   &  $\boldsymbol{0.96\pm0.02}$   &  $\boldsymbol{0.95\pm0.02}$  \\
\cmidrule{2-6}
Causal Aggregation   &  $\boldsymbol{0.94\pm0.03}$  &  $\boldsymbol{0.96\pm0.02}$   &  $\boldsymbol{0.95\pm0.01}$   &  $\boldsymbol{0.97\pm0.01}$   &  $\boldsymbol{0.98\pm0.01}$  \\
\bottomrule
\end{tabular}
}
\end{table*}

\subsubsection{Simulations with additional Do-calculus constraints}\label{sec:do-calc-sim}

Within environment $e_3$, we report the performance of another standard, well-known estimation procedure of the causal effects: regression after applying do-calculus on a known causal graph. To showcase the advantage of our methodology that leverages all the available data from each environment, we assume that the causal graphs in each environment are \emph{known} to the do-calculus method. We have applied the causal transportability formula (a generalization of the do-calculus as implemented in the \texttt{R}-package \texttt{causaleffect}) to the environments $\{e_1,e_2,e_3\}$. Among the environments $\{e_1,e_2,e_3\}$ (or combinations of environments), only $e_3$ leads to identifiable causal effects on $Y$ of intervening on all $X_1,X_2,X_3,X_4,X_5$. These are obtained by regressing $Y$ on the set of variables $X_1,X_2,X_3,X_4$ and reporting the coefficients of $X_2,X_4$. Therefore, based on samples from $e_3$ only, we can estimate $\beta^0$ based on two constraints derived from the do-calculus approach and the perfect knowledge of the causal graph structure. We omit any analysis with samples from the fully randomized environment $e_4$, which could be used in an equivalent manner by both methods.

The causal transportability formula does not recommend using any samples from $e_1$ and $e_2$. This is potentially due to the fact that the approach is non-parametric. Samples from $e_1$ and $e_2$ help improving the efficiency of the estimator whenever we use causal aggregation. Indeed, the previous causal aggregation constraints in those environments can be updated with the knowledge derived from the causal graph: certain coefficients in $\beta^0$ are equal to 0 regardless of the environment. We can therefore restrict the problem to a regression of $Y$ over $X_2, X_4$ and build a causal aggregation estimator combining the two do-calculus constraints with the updated ones from environments $e_1$ and $e_2$. We verify in Table~\ref{table:ex3:coverage} that both methods achieve the target coverage as expected. We then compare the length of the confidence intervals in Figure~\ref{fig:section7-4}, showing again an improvement in efficiency as we include additional constraints from multiple environments with causal aggregation.

\begin{figure*}[ht]
\centering
\includegraphics[clip,width=.95\linewidth]{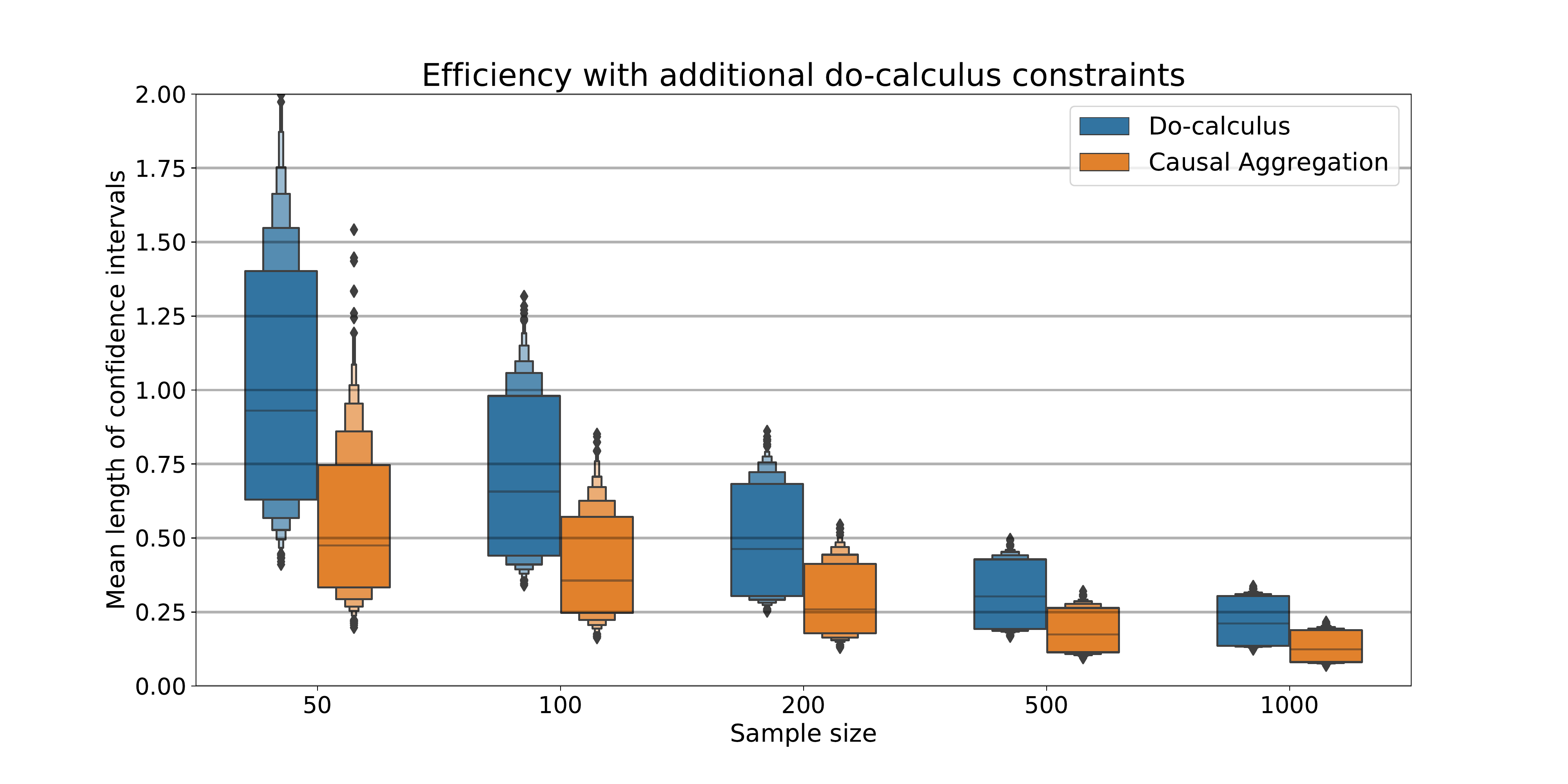}
\caption{Box plots of the mean length of the confidence intervals across 500 repetitions for a range of sample sizes.}
\label{fig:section7-4}
\end{figure*} 

In conclusion, even in situations where OLS and do-calculus lead to consistent estimators of the causal effects, we can benefit from collecting data from other environments with confounding where those previous methods may fail to work: if we can extract partial information based on constraints derived from valid assumptions, we can improve the efficiency of our estimators.

\subsection{Non-Linear Aggregation}\label{section:subsection:exp-non-linear}

Consider the same SEM used for simulations in the low dimensional case, whose DAG structure is given in Figure~\ref{fig:Ex2-1}, but where the structural equation of $Y$ is given by a complex non-linear function. We also modify structural equations of $\Xb$ to validate the fact that causal boosting performs well when allowing flexible non-linear covariate structural equations. In particular, we consider the following SEM:
\begin{align*}
    H & \xleftarrow[]{} \epsilon_H
\\[-0.5\jot] X_1 & \xleftarrow[]{} 2H + \epsilon_1
\\[-0.5\jot] X_2 & \xleftarrow[]{} X_1 + H + \epsilon_2
\\[-0.5\jot] X_3 & \xleftarrow[]{} -2X_1 X_2 + \epsilon_3
\\[-0.5\jot] X_4 & \xleftarrow[]{} \log(1+|X_1|) + X_3 + \epsilon_4
\\[-0.5\jot] X_5 & \xleftarrow[]{} 2X_2 + X_4 - Y + \epsilon_5
\\[-0.5\jot] Y & \xleftarrow[]{} f^0(X_1,X_2,X_3,X_4) + 2H + \epsilon_Y
\end{align*}
where all the disturbance terms $\epsilon_H, (\epsilon_j)_{j\in [5]}, \epsilon_Y$ are jointly independent, standard Gaussian. This model satisfies the assumptions on our non-linear SEM framework (cf. equation~\ref{equation:SEM}). We make confounding explicit by introduced the latent factor $H$ that makes the aggregated residual $H+\epsilon_Y$ of the response structural equation non independent of the covariates---as some of these covariates are also affected by $H$. However, whenever we randomize the covariates in the parental set $S^0=\{1,2,3,4\}$ of the response $Y$, we get that the residual $H+\epsilon_Y$ is centered conditionally on the randomized covariates. We run several simulations where we analyze the performance of the proposed causal aggregation boosting algorithm \eqref{algorithm:boosting} for different choices of function $f^0$ and sets of environments $\mathcal{E}$. Our simulations show the correctness of our method whenever the model is well specified in the sense that for any interaction term in $f^0$ there is an environment in $\mathcal{E}$ where all the input covariates to the interaction term are randomized.

We fit the model on training data and put aside a test set for evaluating the performance of our method. We report the $L_2$ loss of our estimator $\hat{f}$ given by an oracle with access to $f^0$: 
\begin{align*}
    \mathbb{E}\big[(f^0(\Xb^e) - \hat{f}(\Xb^e))^2\big]
\end{align*}
where the expectation is taken over the test set samples $\Xb^e$ averaged across all environments. 

We use two different response models in our simulations, denoted by $f^0_I, f^0_{II}$. We first use a piece-wise constant function given by: 
\begin{align*}
    f^0_I(x_1,x_2,x_3,x_4) & =  \mathbb{1}_{x_1>0} + \mathbb{1}_{x_2>0} - 2\mathbb{1}_{x_2>0,x_3>-1} + 2\mathbb{1}_{x_1<0, x_4<-1} + 3\mathbb{1}_{x_1<0,x_2<1,x_3<-1} 
\end{align*}
We choose as the base model to fit the environment-wise $\hat{h}_e$ a decision tree: $f^0_I$ should be simple enough that the causal aggregation boosting algorithm recovers it efficiently. If we instead use another response function that is no longer piece-wise constant, then the number of samples required to achieve the same level of accuracy increases, and we run simulations for the following loss function $f^0_{II}$ that additively combines a linear component and a piece-wise constant component:
\begin{align*}
    f^0_{II}(x_1,x_2,x_3,x_4) & = 2x_1 - 2x_2 - 2\mathbb{1}_{x_2>0,x_3>-1} + 2\mathbb{1}_{x_1<0, x_4<-1} + 3\mathbb{1}_{x_1<0,x_2<1,x_3<-1} 
\end{align*}
Let a set of two environments $\mathcal{E}^0=\{e_1,e_2\}$ given by $\phi(e_1)=\{1,2,3\}$ and $\phi(e_2) = \{1,4\}$, so that $f^0_I, f^0_{II}\in \mathcal{F}_{\mathcal{E}^0}$. We can therefore recover these target functions using such set of environments as our data source. Conversely, given that the set of environments $\mathcal{E}$ constrains the class of functions $\mathcal{F}_{\mathcal{E}}$ our estimator $\hat{f}$ belongs to, we analyze how the performance degrades as the approximation error increases when limiting the interventions across environments. 

We define four simulations each corresponding to a different set of environments in Table~\ref{table:environments}. For each simulation we characterize the set of environments as a set of subsets of indices, each representing one environment. Within such environment, the indices represent the simultaneously randomized covariates. Simulation A corresponds to collecting data from $\mathcal{E}^0$ above. Simulations B, C and D correspond to fitting our causal boosting procedure in cases where not all the interacting covariates have an environment where they are simultaneously randomized, leading to an approximation bias that increases as we limit the amount of simultaneous randomization. We report the results of fitting our causal boosting procedure to data generated with the response function given by $f^0_I$ in Figure~\ref{fig:boosting-envs}, and as expected we observe that causal boosting correctly recovers the non-linear function: the $L_2$ loss decreases towards 0 with sample size. However, the loss reaches a plateau in those other simulations where not all the interactions in the target function have corresponding randomized data.

\begin{table}
\begin{center}
\begin{tabular}{c c c c} 
Simulation & Sets of randomized covariates \\ [0.5ex] 
\hline
Simulation A & $\{1,2,3\},\{1,4\} $\\ 
Simulation B & $\{1,2,3\} $\\ 
Simulation C & $\{1\},\{2,3\} $\\
Simulation D & $\{1\},\{2\},\{3\} $
\end{tabular}
\end{center}
\caption{Four different experimental settings: for each simulation, a set of subsets of indices represents the subsets of simultaneously randomized covariates within an environment, across the set of environments.} \label{table:environments}
\end{table}

\begin{figure*}[ht]
\centering
\includegraphics[clip,width=\linewidth]{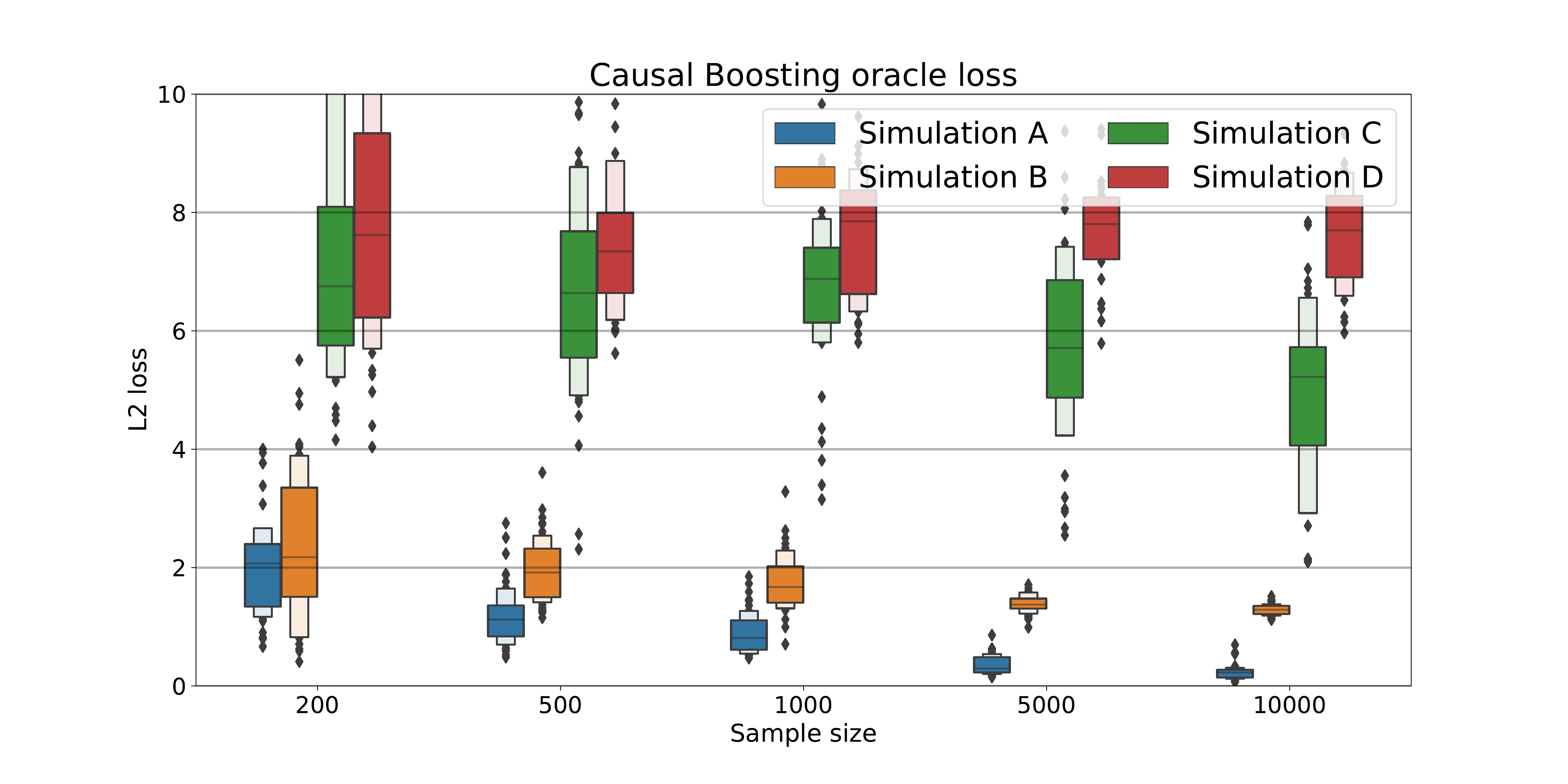}
\caption{Causal boosting loss under different simulation settings. Only Simulation A has environments with properly randomized data such that there is no approximation error.}
\label{fig:boosting-envs}
\end{figure*}

We now turn to analyzing the performance of causal boosting with respect to alternative methods. We focus on the set of environments defined in Simulation A, so that there is no approximation error. A first naive baseline corresponds to fitting a non-parametric regression method to the pooled data across environments, similar to the OLS baseline previously used in the linear case. We choose random forests for that purpose, given that the base estimator in the boosting procedure is a decision tree. We also compare the boosting procedure to the causal backfitting procedure defined in Algorithm~\ref{algorithm:backfitting}. Both these procedures will perform poorly: the naive pooled method will indeed fail to recover the true non-linear response function. The backfitting procedure will suffer from instability during training, and will often fail to converge. We define a simpler one-pass backfitting procedure as an alternate baseline that does recover the response function, albeit with an additional key assumption similar to the one in the above section. If we assume that the causal graph is \emph{known}, then we can implement a single-pass backfitting procedure that has competitive performance. The individual components of the estimator $\hat{f}$ following equation~\ref{decomposition_function} are fitted sequentially, starting by those components whose variables are most downstream in the DAG.

We report the result of the simulations in Figure~\ref{fig:boosting-competing}, comparing the causal aggregation boosting procedure versus the competing methods, where the response function is either $f^0_I$ or $f^0_{II}$ and $\mathcal{E}^0$ is the set of environments. Again, our aggregation procedure recovers a good estimate of the true causal function for both $f^0_I, f^0_{II} \in \mathcal{F}_{\mathcal{E}^0}$. However, for a given sample size, the recovery loss for $f^0_I$ is smaller than that of $f^0_{II}$, as expected from the inclusion of linear components in the latter. Random forests fitted on the pooled data perform poorly. This is expected, as this method is oblivious to the confounding. Training the causal backfitting procedure is unstable and does not converge towards the true non-linear response, and this issue is even more problematic for $f^0_{II}$. Finally, the single pass backfitting performs reasonably well but requires larger sample sizes to achieve the same performance as the causal boosting procedure.

\begin{figure*}[ht]
\centering
\includegraphics[trim={4cm 0 4cm 0cm},clip,width=1\linewidth]{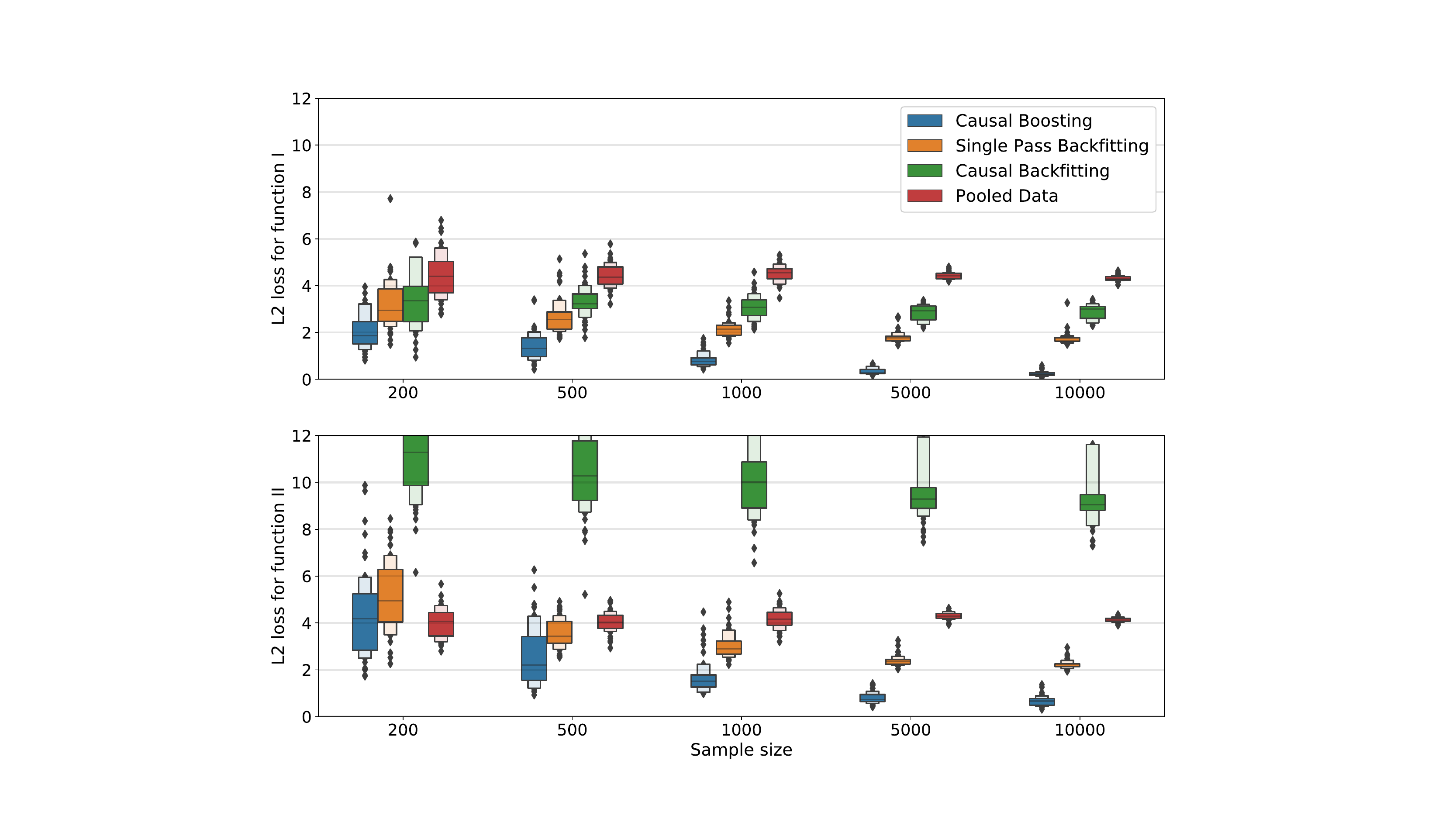}
\caption{Comparative performance of different aggregation methods for two separate target functions}
\label{fig:boosting-competing}
\end{figure*} 

In conclusion, our causal boosting procedure performs very well. The per-environment model fitting, coupled with the linearization step that re-weights the components of the estimator by drawing from our linear causal aggregation theory is a competitive method for estimating non-linear causal responses from partially randomized data from multiple environments.

\subsection{Semi-Synthetic Data Set}\label{section:subsection:exp-semi-synthetic}

We finally validate our procedure on semi-synthetic data set that we create from a single-cell RNA sequencing (scRNA-seq) data set published in \cite{gasperini2019genome}. This data set contains mRNA counts of $207,324$ individual cells measuring the expression of around $10,000$ genes. Each of those genes is in a neighborhood of several potential DNA regulatory elements---a section of non-protein coding DNA that modulates the expression of a gene. This experiment in particular focuses on enhancers, regulatory elements that promote gene expression. The goal of the experiment is to find true biological regulatory associations between candidate enhancers pre-selected based on their chemical characteristics, and genes that they potentially up-regulate. This particular data set is obtained by perturbing cells via CRISPR interference \citep{jinek2012programmable}. This technology allows to directly intervene in the gene regulation mechanism of by targeting a section of the genome with a CRISPR/Cas9 molecule attached to a specifically designed guide RNA (gRNA). That gRNA determines the locations where CRISPR/Cas9 introduces DNA mutations that perturb gene regulation. In this data set the targeted DNA sections are the candidate enhancers, therefore genes that were actually regulated by a CRISPR-intervened candidate enhancer region are differentially down-regulated.

We focus on a particular gene called PRKCB, with 10 selected candidate enhancers. Cells are then intervened on a random subset of those enhancers: that is, each cell receives a random combination of CRISPR molecules with gRNAs that target the 10 enhancers. In addition to measuring the response $Y$ corresponding to PRKCB gene expression, the data set has a vector of binary variables $(X_1, \dots, X_{10})$ for each cell indicating the presence of CRISPR molecules, corresponding to the treatment covariates. Additionally, one needs to control for technical factors $W$ such as temperature and batch ID that simultaneously affect the treatment variables and response, usually introducing those as covariates in a regression model. 

\begin{table*}
\caption{Our aggregation MM estimator $\hat{\beta}^{MM}$ on the perturbed samples closely matches the original regression estimates $\hat{\beta}^{OLS}$ based on the original unperturbed data set (first two columns). However, the confidence intervals are much larger, owing to the additional noise introduced in the perturbed data set by the latent factor. The OLS estimates $\hat{\beta}^{OLS}(e)$ on any single perturbed data set $e=1,2,3$ leads to biased estimates of the regression vector due to the additional synthetic confounding we introduce. Thus, some of the OLS estimates are far from the estimates on the original, unperturbed data set. On the other hand, the estimates $\hat \beta^{MM}$ are close to the estimates from the original data set.}
\label{table:semi-synthetic}
\resizebox{\textwidth}{!}{
\begin{tabular}{@{}lc|cccc@{}}
& \multicolumn{1}{c}{Original data set} & \multicolumn{4}{c}{Perturbed datasets}\\
\cmidrule{2-6}
Targeted Candidate Enhancer &  \multicolumn{1}{c}{$\hat{\beta}^{OLS}$} & \multicolumn{1}{c}{$\hat{\beta}^{MM}$}
& \multicolumn{1}{c}{$\hat{\beta}^{OLS}(1)$} & \multicolumn{1}{c}{$\hat{\beta}^{OLS}(2)$} & \multicolumn{1}{c}{$\hat{\beta}^{OLS}(3)$} \\
\hline
1822 top       & $\phantom{-}0.00 \pm 0.04$ & $\phantom{-}0.29\pm 0.29$     & $\phantom{-}0.05\pm 0.07$ & $\phantom{-}0.67 \pm 0.07$    & $\phantom{-}0.35 \pm 0.07$ \\
1856 top       & $\phantom{-}0.01 \pm 0.04$ & $-0.04\pm 0.31$               & $\phantom{-}0.05\pm 0.08$ & $\phantom{-}0.76 \pm 0.07$    & $\phantom{-}0.35 \pm 0.07$ \\
1857 top       & $\phantom{-}0.06 \pm 0.05$ & $\phantom{-}0.02\pm 0.32$     & $\phantom{-}1.09\pm 0.09$ & $\phantom{-}0.04 \pm 0.09$    & $\phantom{-}0.60 \pm 0.08$ \\
1863 second    & $\phantom{-}0.03 \pm 0.04$ & $-0.29\pm 0.26$               & $\phantom{-}0.56\pm 0.07$ & $-0.04 \pm 0.08$              & $\phantom{-}0.39 \pm 0.07$ \\
1863 top       & $\phantom{-}0.00 \pm 0.04$ & $-0.24\pm 0.25$               & $\phantom{-}0.58\pm 0.07$ & $-0.03 \pm 0.07$              & $\phantom{-}0.30 \pm 0.06$ \\
1865 top       & $-0.10 \pm 0.04$             & $\phantom{-}0.00\pm 0.13$   & $\phantom{-}0.30\pm 0.06$ & $\phantom{-}0.41 \pm 0.06$    & $-0.11\pm 0.06$ \\
1866 top       & $-0.50 \pm 0.04$             & $-0.46\pm 0.13$             & $-0.02\pm 0.06$           & $\phantom{-}0.08 \pm 0.06$    & $-0.51\pm 0.07$ \\
1866 second    & $-0.46 \pm 0.05$             & $-0.42\pm 0.17$             & $\phantom{-}0.29\pm 0.08$ & $\phantom{-}0.53 \pm 0.08$    & $-0.43\pm 0.09$ \\
1867 top       & $-0.16 \pm 0.05$             & $-0.13\pm 0.15$             & $\phantom{-}0.47\pm 0.08$ & $\phantom{-}0.78 \pm 0.08$    & $-0.12\pm 0.08$ \\
1897 top       & $-0.03 \pm 0.05$             & $\phantom{-}0.17\pm 0.17$   & $\phantom{-}0.72\pm 0.08$ & $\phantom{-}0.76 \pm 0.07$    & $-0.02\pm 0.08$ \\
\hline  
\end{tabular}
}
\end{table*}

For the purpose of our semi-synthetic data example we assume a linear model for the response, which is regressed on the binary vector of covariates to identify the actual regulatory elements of the gene conditionally on the technical factors, though other models can better capture these effects \cite{gasperini2019genome, katsevich2020conditional}. We obtain a estimate of the effects of perturbing each candidate enhancer via the regression vector. We now perturb the original data set and create several environments with synthetic confounding that leads to biases in estimation at each individual environment. We then recover the original regression vector through our aggregation procedure. We generate three environments $\mathcal{E}= \{1,2,3\}$ by randomly partitioning the initial data set samples in three smaller datasets, and for each partition we pick a subset of covariates $S_e$, $e=1,2,3$. We then introduce in each environment $e$ a confounding latent factor $H$ that simultaneously affects the subset of selected covariates $(X_i)_{i\in S_e}$ and the response $Y$. We also assume that no covariate is selected in all three new environments. We represent in Figure~\ref{fig:Ex5-1} graphical models corresponding to two potential environments. Recovering the original regression estimate based on just one environment is no longer possible, as we assume that in practice $H$ is not observed. However, if we assume that for each environment we know the subset of unconfounded covariates, we can then construct orthogonality constraints for these, and use the aggregation estimator to recover an estimate that is closer to the original one based on the unperturbed data set. 
\begin{figure*}[ht]
\centering
\includegraphics[trim={2.5cm 6.5cm 1.3cm 6.2cm},clip,width=.65\linewidth]{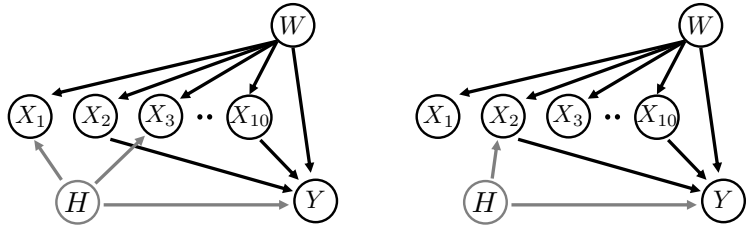}
\caption{Semi-synthetic environments with added confounders: We split the initial data set into three environments and sample three subsets of covariates. In each environment we add a confounding term between the corresponding set of covariates and the response. No covariate is perturbed in all three environments. Graphical models above represent two potential environments.}
\label{fig:Ex5-1}
\end{figure*}
In practice, given one of the environments $e\in \mathcal{E}$, we perturb samples as follows:
\begin{align*}
    H & \xleftarrow[]{} \text{Ber}(0.5)
\\[-0.5\jot] \tilde{X}_i & \xleftarrow[]{} X_i + H \quad \forall i \in S_e
\\[-0.5\jot] \tilde{Y} & \xleftarrow[]{} Y  -4*(H-1)
\end{align*}
leaving the other covariates unchanged. The data set covariates are coded via the set of enhancers that are targeted by the CRISPR gRNA. In particular, for $e=1$ we perturb the set of covariates $\{$1857 top, 1863 second, 1863 top, 1865 top, 1866 top, 1866 second, 1867 top, 1897 top$\}$, for $e=2$ we perturb the set of covariates $\{$1822 top, 1856 top, 1865 top, 1866 top, 1866 second, 1867 top, 1897 top$\}$ and for $e=3$ we perturb the set of covariates $\{$1822 top, 1856 top, 1857 top, 1863 second, 1863 top$\}$. The OLS estimator $\hat{\beta}^{OLS}(e)$ on any environment $e\in \{1,2,3\}$ is thus biased. However, aggregated estimator $\hat{\beta}^{MM}$ provides us with estimates that broadly match those of the original OLS estimator, albeit with wider confidence intervals due to the increased variance because of the latent factor. In particular, confidence intervals for each covariate overlap between these two methods, which is not the case for OLS estimators built on just one environment.

\section{Discussion}\label{section:discussion}

We have introduced a method for aggregating causal information across data sets, with
the goal of estimating the effect of simultaneous interventions. The method is based on causal constraints, which arise from experimental manipulations and background knowledge. On observational data, instrumental variables and knowledge about parental sets can be used to define causal constraints. In the low-dimensional case, we discuss a two-stage procedure that allows for asymptotically efficient estimation and inference of causal effects. In the high-dimensional case we provide an $\ell_1$-regularized estimator and derive finite sample bounds. These finite sample bounds rely on a cone invertibility factor, which play a similar role as the sparse eigenvalue condition in high-dimensional linear regression. As our high-dimensional theory indicates, the method might be of use whenever a very large number of experiments are available, but only very few samples per experiment are observed. Whenever few covariates are randomized, we recommend instead a pre-screening step on observational data to reduce the dimensionality of the problem. In addition, we provide a non-linear version of causal aggregation of experimental data that flexibly estimates interactions between covariates and non-linearities in the response. This non-linear method uses the linear aggregation step as a sub-routine when training the model following a boosting-like procedure, using at every step only those covariates in each sample that are unconfounded. On synthetic and semi-synthetic data sets we show that the proposed method outperforms naive methods that do not take into account the special structure induced by the causal constraints. Most of the causal constraints individually make use of data from only one environment. However, as in the case of cross-product invariance \eqref{eq:orthogonality_dantzig}, a causal constraint can leverage data from several environments. Looking ahead, it would be interesting to explore whether novel constraints can be derived that make use of data from different environments simultaneously.

\newpage

\bibliography{main.bib}

\appendix

\section{SEM Extension to Additive Shifts}\label{section:additive_interventions}

In Section~\ref{section:setting} we defined how new environments are generated by randomization of subsets of covariates. In addition to this intervention mechanism, we consider additive noise interventions where the distribution of the disturbance variable changes across environments by an additive shift that is independent of the base distribution. Assume that within environment $e$ there is a set $\psi(e) \subset [p]$ of covariates that have an additive intervention. With respect to the base model $\mathcal{M}^0$, the following structural equations are modified.
\begin{equation}\label{equation:perturbedSEMsoft}
\mathcal{M}^e_{\psi(e)}: \begin{cases}
    X_j^e \longleftarrow \sum_{k \in pa_0(j)}a_{jk}X_k^e + \epsilon_j + \delta^e_j \qquad \forall j \in \psi(e)
    \\ Y^e \longleftarrow \sum_{k \in [p]}\beta^0_{k}X_k^e + \epsilon_{Y}
    \\ \{\epsilon_j\}_{j\in [p+1]} \independent \{\delta_j^e\}_{j\in [p]}
\end{cases}
\end{equation}
where $\delta^e_j = 0$ for all $j \notin \psi(e)$. Let us give a justification for this model of environments with a practical example. Experimentation by randomization provides a very concrete way of perturbing a system, by direct manipulation of a covariate that modifies the structural mechanism that generates it. A concrete example of this is a gene knock-out experiment via CRISPR-Cas9: the targeted genes are no longer expressed, shifting the expression of other downstream genes in the regulation pathway. Additionally, environments may differ by some change in the overall environment. For example, different cell lines may have different baseline expressions of some subset of genes that shift the overall gene expression distribution. In practice, we want to assume that we know the subset of variables that are affected by this background shift, and that they precede in the causal mechanism any other variable that is experimentally manipulated.

\section{High-Dimensional Aggregation Simulations: Pre-screening Before Collecting Experimental Data}\label{section:exp-high-dimensional}

Whenever the number of covariates is too large, obtaining orthogonality constraints for each covariate may become prohibitive. In practice we may be allowed to choose a subset of covariates to randomize based on an initial observational data set. Assuming that the connectivity matrix is sparse, we can run a pre-selection step to select a few variables, then based on an experimental environment construct the appropriate orthogonality constraints. We then assume the direct effects of the discarded covariates is $0$, and thus we construct an estimator in the low-dimensional framework. However, to construct valid confidence intervals for such estimator based on pre-selected covariates we need to avoid using the same samples for model selection and inference. Our description above does indeed use two different sets of samples, where (easily accessible) observational data is used for covariate selection and then experimental data is obtained at a second stage. Otherwise we would split the data set to solve this issue.

Consider the following SEM where the covariate dimension is $200$, where the connectivity matrix has a sparse structure given by the graph in Figure~\ref{fig:Ex4-1}. The only covariate with a non-zero direct effect on $Y$ is $X_{99}$. The values of the entries in the connectivity matrix are sampled from a Gaussian distribution centered in 1 and with variance $0.5$. We run a Lasso regression \citep{tibshirani1996regression} on a standardized observational data set to select a subset of the covariates. Under some conditions on the coefficient sizes, as the sample size increases the selected set of covariates contains the the Markov blanket of $Y$ \citep[Section 2.5]{buhlmann2011statistics}, which is equal to the subset $\{X_1,X_2,X_{99},X_{100}, X_{199}, X_{200}\}$. In practice, this pre-selection step is adding a few random variables to the set of those that are randomized, our aggregation method works as long as the variables in the Markov blanket are selected and randomized in the experimental data. We report in our simulations the average number of Markov blanket covariates selected by Lasso (out of 6) as sample size increases.

\begin{table*}
\caption{Results of simulation from Appendix~\ref{section:exp-high-dimensional}. We estimate the regression coefficient on a subset of covariates that are pre-selected with a Lasso regression on an initial observational data set. We split the set of selected covariates into two, and build two experimental environments where covariates from each split are randomized in the corresponding environment. We then generate orthogonality constraints and run our aggregation procedure, as well as the pooled OLS. We additionally report the average number of selected covariates by Lasso.}
\label{table:ex4:coverage}
\resizebox{\textwidth}{!}{
\begin{tabular}{@{}lrrrrrc@{}}
\toprule
 & Sample Size & \multicolumn{1}{c}{$n=50$}
& \multicolumn{1}{c}{$n=100$} & \multicolumn{1}{c}{$n = 200$}& \multicolumn{1}{c}{$n = 500$}
& \multicolumn{1}{c}{$n = 1000$} \\
Estimator & Selected MB vars. & $2.72\pm0.26$ & $2.98\pm0.26$ & $3.74\pm0.16$ & $4.70\pm0.27$ & $5.3\pm0.12$ \\
\cmidrule{1-7}
\multirow{2}{7em}{Causal Aggregation}& Coverage    &  $0.99\pm0.01$  &  $0.99\pm0.01$   &  $0.97\pm0.04$   &  $0.97\pm0.05$   &  $0.95\pm0.06$  \\
& Average length   &  $136.6\pm171.9$   &  $3.84\pm2.04$  &  $0.51\pm0.03$  &  $0.32\pm0.01$   &  $0.12\pm0.01$   \\
\cmidrule{2-7}
\multirow{2}{7em}{Pooled data OLS}& Coverage   &  $0.80\pm0.11$  &  $0.81\pm0.11$   &  $0.75\pm0.12$   &  $0.72\pm0.12$   &  $0.58\pm0.14$  \\
& Average length   &  $0.43\pm0.04$   &  $0.31\pm0.02$  &  $0.19\pm0.02$  &  $0.13\pm0.01$   &  $0.05\pm0.01$
\\
\bottomrule
\end{tabular}
}
\end{table*}

Based on this selection step, we partition the selected covariates and for each subset we generate experimental datasets where those covariates are randomized. We do so by removing the dependence of the randomized covariates in their parents and assigning it a random standard Gaussian variable. We run the procedure for the just-identified setting based on the orthogonality constraints obtained from the experimental environments, as well as the OLS on the pooled data from the experimental environments. We run this procedure for different sample sizes ($n \in \{100,200,500,1000,5000\}$) and report the confidence intervals coverage with nominal coverage 0.95 and average length. We report the results in Table~\ref{table:ex4:coverage}. As the sample size increases, our aggregation procedure correctly estimates the sparse regression coefficient. For small sample sizes, the confidence intervals are meaningless as with few samples errors propagate in both the screening step and the aggregation step in the just-identified case, which is too imprecise with small sample sizes as previously seen. Pooled OLS coverage is well below the nominal value.

\begin{figure*}[ht]
\centering
\includegraphics[trim={4cm 6.5cm 4.5cm 9cm},clip,width=.60\linewidth]{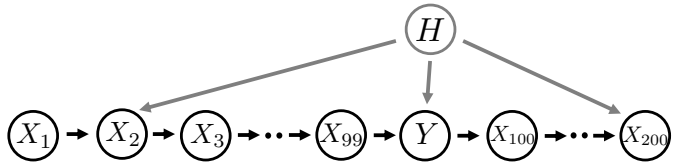}
\caption{SEM for high-dimensional example: Observational samples are generated via the above SEM with 200 covariates, with a confounder affecting the response and two covariates.}
\label{fig:Ex4-1}
\end{figure*}

\section{Proofs}\label{section:proofs}

\subsection{Proof of Proposition~\ref{proposition:linear_constraints}}

\begin{proof}
The linear constraint is a consequence of the independence between the residual term obtained $\epsilon_Y = Y - \Xb^T\beta^0$ and an exogenous variable. For an instrumental variable $I$ or a randomized covariate $X_j$, independence arises by definition. This leads to constraints of the type:
\begin{align*}
    & \mathbb{E}[I(Y-\Xb^T\beta^0)]=0
    \\ & \mathbb{E}[X_j(Y-\Xb^T\beta^0)]=0
\end{align*}
In a DAG, a variable is independent of its non-descendent nodes conditionally on its parental set \citep[Theorem 3.2.2]{pearl2009causal}. Our constraint is based on conditioning on the parental set of a covariate $X_j$. This parental set corresponds is obtained from the DAG $\bar{G}^0$, where the observational distribution factorizes. However, in practice we can not condition on the unobserved variables, hence we must assume that the parental set of $X_j$ in $\bar{G}^0$ is the same as the parental set in $G^0$. When adjusting $X_j$ for the parental set we get a random variable given by $X_j -\sum_{k \in pa_0(j)}a_{jk}X_k = \epsilon_j$ as we assumed $c_{jk}=0$: i.e. the latent factors do not affect $X_j$. Additionally, given that the graph has no cycles whenever we know that the response $Y$ is in the parental set of $X_j$ we immediately get as a constraint the fact that the regression coefficient corresponding to $X_j$ is 0. Therefore from now on we assume that $Y$ is not in the parental set. The residual is not represented in the graph on its own, but we can derive the orthogonality constraint as follows. Replacing $Y$ by $(Y-\Xb^T\beta^0) = \sum_{k \in [p']}d_k H_k + \epsilon_{Y}$, we get that this term is independent of the adjusted term $X_j - \hat{X}_j$ where $\hat{X}_j = \sum_{k \in pa_0(j)}a_{jk}X_k$. Hence we obtain the result. 
Finally, the last constraint derived from additive interventions (called \emph{inner-product invariance}) is proved in \citet[Proposition 1]{rothenhausler2019causal}.
\end{proof}

\subsection{Proof of Proposition~\ref{proposition:linear_identifiability}}

\begin{proof}
We can assume without loss of generality that the ordering $\{1,\dots,p\}$ is a topological ordering for the DAG $G^0$. We stack constraint vectors that form $\Gb$ following the same topological ordering in the constraint related variable. We will now show that stacking the vertical vectors in such ordering leads to an upper triangular matrix that is invertible. The idea is that each constraint is related to one covariate, and we show that the corresponding constraint inducing variable $R$ is independent of the previous covariates in the topological ordering. Consider the $j$-th vector in $\Gb$. In the IV setting, the instrument is independent of all the non-descendant variables of $X_j$ in the $\bar{G}^0$ graph. Therefore the constraint vector has its first $j-1$ entries equal to 0. Whenever randomizing $X_j$, that variable is now independent of all its non-descendants in the $\bar{G}^0$ graph. Adjusting for direct causes (i.e. conditioning on the parental set), assuming there is no latent variable effect, we get that the residual after adjusting $X_j$ given by $R=X_j - \hat{X_j}$ is independent of the non-descendants of $X_j$. Again, the constraint vector has its first $j-1$ entries equal to 0. 
\end{proof}

\subsection{Proof of Proposition~\ref{proposition:asymptotic_normality}}

\begin{proof}
For notation simplicity, we associate $\mathcal{C}=[p]$. For $c\in [p]$, let $u_c\in \mathbb{R}^p, v_c \in \mathbb{R}$, and let $u=(u_c)_{c\in [p]}, v = (v_c)_{c\in [p]}$. Consider the mapping 
\begin{align*}
    \xi:\begin{cases}
    \mathbb{R}^{(p+1)\times p} \rightarrow \mathbb{R}^p
    \\ (u,v) \mapsto \Gb^{-1}(u)\Zb(v)
\end{cases}    
\end{align*}
where  
\begin{align*}
\Gb(u) = \big[ u_c^T\big]_{1\leq c \leq p}  := \begin{pmatrix}
u_1^T
\\ \vdots
\\ u_p^T
\end{pmatrix} \in \mathbb{R}^{p \times p} \; ,
\qquad \Zb(v) = \begin{pmatrix}
v_1 \\ \vdots \\ v_p
\end{pmatrix} \in \mathbb{R}^p
\end{align*}
where the notation $\big[u_c^T\big]_{1\leq c \leq p}$ denotes the matrix obtained by stacking the row vectors $u^T_j$. If we consider the restriction of $\xi$ to those vectors $(u,v)$ such that the matrix $\Gb(u)$ is invertible, then we have that $\xi$ is continuously differentiable at $(u,v)$ and its derivative is given by 
\begin{align*}
    d\xi_{u,v}(\tilde{u}, \tilde{v}) = -\Gb(u)^{-1}\Gb(\tilde{u})\Gb(u)^{-1}\Zb(v) + \Gb(u)^{-1}\Zb(\tilde{v})
\end{align*}
Therefore, via a Taylor approximation, we have that 
\begin{align*}
    & \xi(u',v') - \xi(u,v) - d\xi_{u,v}(u'-u, v'-v)
    \\ = & \xi(u',v') - \xi(u,v) + \Gb(u)^{-1}\big(\Gb(u') - \Gb(u)\big)\Gb(u)^{-1}\Zb(v) - \Gb(u)^{-1}\big( \Zb(v') -\Zb(v)\big)
    \\ = & \xi(u',v') - \xi(u,v) + \Gb(u)^{-1}\big(C(u',v') - C(u,v)\big)\begin{pmatrix}
    \Gb(u)^{-1}\Zb(v)
    \\ -1
    \end{pmatrix}
    \\ = & o_{\substack{u'\rightarrow u \\v' \rightarrow v}}\big(\Vert(u',v')- (u,v) \Vert \big)
\end{align*}
where 
\begin{align*}
C(u,v) = \big[u^T_c,v_c\big]_{1\leq c \leq p} \in \mathbb{R}^{p \times (p+1)}
\end{align*}
We collect $n_e$ samples $(\Xb^e_i, Y^e_i)_{i\in [n_e]}$ in environment $e$, where $\Xb^e_i$ is the $i$-th vector sample of the covariates $\Xb^e_{i} = (X^e_{1,i}, \dots, X^e_{p,i})$. Each constraint $c\in [p]$ is based on samples from environment $e_c\in \mathcal{E}$: for each $c\in [p]$ we collect a constraint inducing variable $(R^{c}_i)_{i\in [n_{e_c}]}$. Such $R^{c}$ may correspond to $X_c^{e_c}$ if such covariate is randomized, or an instrument in environment $e_c$ for covariate $X_c^{e_c}$. $R^c$ can also be the residual variable when adjusting for the parental set of a given covariate (although as indicated in Section~\ref{section:regression_adjustment}, regressing on the parental set and estimating the orthogonality constraint must be done with distinct datasets). Recall that $n = \sum_{e\in \mathcal{E}}n_e$ is the total number of samples, and that the sample sizes from different environments grow at the same rate: $\frac{n_e}{n}\longrightarrow \rho_e \in (0,1)$. The vector $\beta^0$ is characterized as the solution to the system of equations:
\begin{align*}
    \Zb(v) = \Gb(u)\beta
    \Longleftrightarrow  \begin{cases}
    \mathbb{E}[R^{e_1}(Y^{e_1}- \beta^{0,T}\Xb^{e_1})] = 0
    \\ \dots
    \\ \mathbb{E}[R^{e_p}(Y^{e_p}- \beta^{0,T}\Xb^{e_p})] = 0
    \end{cases} 
\end{align*} 
so that $\beta^0 = \xi(u,v)$ where $u = (\mathbb{E}[R^{c}\Xb^{e_c}])_{c\in [p]}$, and $v = (\mathbb{E}[R^{c}Y^{e_c}])_{c\in [p]}$. Analogously, the estimator $\hat{\beta}$ is given by plugging in the previous equation the sample averages: $\hat{\beta} = \xi(\hat{u}, \hat{v})$ where 
\begin{equation}
    \begin{cases}
    \hat{u} = \big( (1/n_{e_c})\sum_{i\in [n_{e_c}]}R^{c}_i\Xb_i^{e_c}\big)_{c\in [p]}
    \\ \hat{v} = \big( (1/n_{e_c})\sum_{i\in [n_{e_c}]}R^{c}_iY_i^{e_c}\big)_{c\in[p]}
    \end{cases}
\end{equation}
We have by the strong law of large numbers that $(\hat{u},\hat{v})- (u,v) = O_{P}(1/\sqrt{n})$. Therefore we get that
\begin{align*}
    & \xi(\hat{u},\hat{v}) - \xi(u,v) + \Gb(u)^{-1}\big(C(\hat{u},\hat{v}) - C(u,v)\big)\begin{pmatrix}
    \xi(u,v)
    \\ -1
    \end{pmatrix}
    \\ = & \hat{\beta} - \beta^0 + \Gb(u)^{-1}\big(C(\hat{u},\hat{v}) - C(u,v)\big)\begin{pmatrix}
    \beta^0
    \\ -1
    \end{pmatrix}
    \\ = & o_P(\Vert (\hat{u},\hat{v})- (u,v)\Vert)
\end{align*}
and thus
\begin{align*}
   \sqrt{n}\big(\hat{\beta} - \beta^0\big) + \Gb(u)^{-1}\sqrt{n}\big(C(\hat{u},\hat{v}) - C(u,v)\big)\begin{pmatrix}
    \beta^0
    \\ -1
    \end{pmatrix}\big) = o_P(O_P(1)) = o_P(1)
\end{align*}

Let $\Ub_i^{c} := R^{c}_i(\Xb^{e_c}_i,Y^{e_c}_i)\in \mathbb{R}^{p+1}$ for $c\in [p]$. We have the following convergence in distribution by the central limit theorem in environment $e$ by combining all the constraints $c$ based on environment $e$:
\begin{align*}
    & \sqrt{n_e}\; \text{Vect}\Bigg[ \Big(\frac{1}{n_e}\sum_{i=1}^{n_e}\Ub^{c}_i - \mathbb{E}[\Ub^{c}] \Big)^T \Bigg]_{e_c = e} \xrightarrow[n \to +\infty]{d} \mathcal{N}\Big( \textbf{0}; \big[ \text{Cov}(\Ub^{c},\Ub^{\tilde{c}})\big]_{\substack{e_c = e\\e_{\tilde{c}} = e}} \Big)
\end{align*}
Now, given the assumptions on the environment $e$, we have that 
\begin{align*}
    (\Ub^{c})^T\begin{pmatrix}
\beta^0
\\ -1
\end{pmatrix} = &  R^{c}(\beta^{0,T}\Xb^{e} - Y^{e}) 
\\ = & -R^{c}\epsilon^{e}_Y
\end{align*}
As $R^c$ is an instrument, a randomized covariate, or the residual from regressing the covariate on its parental set, we have $R^c$ and $\epsilon_Y^e$ are independent as indicated in Proposition~\ref{proposition:linear_constraints}. Therefore, as $\epsilon_Y^e$ is centered, we get: 
\begin{align*}
\begin{pmatrix}
\beta^0
\\ -1
\end{pmatrix}^T \text{Cov}(\Ub^{c},\Ub^{\tilde{c}})\begin{pmatrix}
\beta^0
\\ -1
\end{pmatrix} = & \text{Cov}(R^{c}\epsilon^{e}_Y, R^{\tilde{c}} \epsilon^{e}_Y) = \sigma_{e}^2\text{Cov}(R^{c}, R^{\tilde{c}})
\end{align*}
where $\sigma_{e}^2 := \mathbb{E}[\epsilon_Y^{e,2}]$. We have by Slutsky's theorem, given that $\frac{n_e}{n}\rightarrow \rho_e$,
\begin{align*}
    \sqrt{n}\Bigg[ \Big(\frac{1}{n_e}\sum_{i=1}^{n_e}\Ub^{c}_i - \mathbb{E}[\Ub^{c}] \Big)^T\begin{pmatrix}
\beta^0
\\ -1
\end{pmatrix} \Bigg]_{e_c = e} \xrightarrow[n \to +\infty]{d} \mathcal{N}\Big( \textbf{0}; \frac{\sigma_{e}^2}{\rho_e}\text{Cov}\big(R^{c}\big)_{e_c = e}\Big)
\end{align*}
Also, the covariance matrix $\text{Cov}\big(R^{c}\big)_{e_c = e}$ is diagonal and invertible, as within environment $e$ constraint inducing variables that are either randomized covariates or instruments are jointly independent. We now concatenate the results for different environments. Given the independence of samples across environments, we have that:
\begin{align*}
    & \sqrt{n}\big(C(\hat{u},\hat{v}) - C(u,v)\big)\begin{pmatrix}
    \beta^0
    \\ -1
    \end{pmatrix} \xrightarrow[n \to +\infty]{d} \mathcal{N}\Big(\textbf{0}; \text{Cov}\big(\frac{\sigma_{e_c}}{\sqrt{\rho_{e_c}}}R^{c}\big)_{c\in [p]} \Big)
\end{align*}
where the covariance matrix $\text{Cov}\big(\frac{\sigma_{e_c}}{\sqrt{\rho_{e_c}}}R^{c}\big)_{c\in [p]}$ is still diagonal. We conclude: 
\begin{align*}
    \sqrt{n}(\hat{\beta} - \beta^0) \xrightarrow[n \to +\infty]{d} \mathcal{N}\Big( \textbf{0}; \Sigma \Big)
\end{align*}
where $\Sigma = \Gb^{-1}(u)\text{Diag}\Big(\frac{\sigma_{e_c}^2}{\rho_{e_c}}\text{Var}(R^{c})\Big)_{c\in [p]}\Gb^{-1,T}(u)$.
\end{proof}

\subsection{Proof of Proposition~\ref{proposition:lq-bound}}

\begin{proof}

Letting $z_\infty := \Vert\hat{\Zb} - \hat{\Gb}\beta^0\Vert_{\infty}$, we show that in the event $\{z_{\infty} \leq \lambda\}$ our regularized estimator $\hat{\beta}(\lambda)$ satisfies
\begin{align}\label{equation:bound-CIF}
    \Vert\hat{\beta}(\lambda) - \beta^0 \Vert_{q} \leq \frac{|S^0|^{1/q}\Vert \hat{\Gb}(\hat{\beta}(\lambda)-\beta^0)\Vert_{\infty}}{\text{CIF}_q(S^0, \hat{\Gb})} \leq \frac{2|S^0|^{1/q}\lambda}{\text{CIF}_q(S^0, \hat{\Gb})}
\end{align}
We follow \cite{ye2010rate} and show the following inequality:
\begin{align*}
    \Vert \hat{\beta}(\lambda)_{(S^0)^c} - \beta^0_{(S^0)^c} \Vert_1 = & \Vert \hat{\beta}(\lambda)_{(S^0)^c}\Vert_1
    \\ = & \Vert \hat{\beta}(\lambda)\Vert_1 - \Vert \hat{\beta}(\lambda)_{S^0}\Vert_1
    \\ \leq & \Vert \beta^0\Vert_1 - \Vert \hat{\beta}(\lambda)_{S^0}\Vert_1
    \\ \leq & \Vert \beta^0_{S^0}\Vert_1 - \Vert \hat{\beta}(\lambda)_{S^0}\Vert_1
    \\ \leq & \Vert \beta^0_{S^0} - \hat{\beta}(\lambda)_{S^0}\Vert_1
\end{align*}
where we used the fact that $S^0$ is the support of $\beta^0$ and that in the event $\{z_{\infty} \leq \lambda\}$ the true vector $\beta^0$ is feasible, therefore we get $\Vert\hat{\beta}(\lambda)\Vert_1 \leq \Vert \beta^0 \Vert_1$. If the upper bound is equal to $0$, then we get $\hat{\beta}(\lambda) =  \beta^0$ as these vectors coincide over $S^0$ and $(S^0)^c$ and the inequality above holds. Otherwise, this shows that $\hat{\beta}(\lambda) - \beta^0$ belongs to the cone $\mathfrak{C} := \{\ub: \Vert\ub_{(S^0)^c}\Vert_1 \leq \Vert \ub_{S^0}\Vert_1\neq 0\}$ and we get the first inequality in equation~\eqref{equation:bound-CIF} by definition of the CIF. Furthermore, we show that $\Vert \hat{\Gb}(\hat{\beta}(\lambda)-\beta^0)\Vert_{\infty} \leq 2\lambda$ in the event $\{z_{\infty} \leq \lambda\}$:
\begin{align*}
    \Vert \hat{\Gb}(\hat{\beta}(\lambda)-\beta^0)\Vert_{\infty} = & \Vert (\hat{\Zb}-\hat{\Gb}\beta^0)-(\hat{\Zb} - \hat{\Gb}\hat{\beta}(\lambda))\Vert_{\infty} \leq 2\lambda
\end{align*}
as $\beta^0, \hat{\beta}(\lambda)$ are in the feasible set.

We need a high probability bound for the event $\{z_{\infty} \leq \lambda\}$, and we also need to control the CIF value for the empirical matrix $\hat{\Gb}$ in equation~\eqref{equation:bound-CIF}, which entails using concentration inequalities to control the deviation of $\hat \Gb$ from $\Gb$. Lemma 3 in \cite{rothenhausler2019causal} provides the following bound for the gap between the CIF under the estimator matrix $\hat{\Gb}$ and the CIF value under $\Gb$:
\begin{align*}
    \big|\text{CIF}_q(S^0, \Gb) - \text{CIF}_q(S^0, \hat{\Gb})\big| \leq 2|S^0|\Vert\Gb - \hat{\Gb}\Vert_{\infty}
\end{align*}
therefore in the event $\{2|S^0|\Vert\Gb - \hat{\Gb}\Vert_{\infty} \leq \frac{1}{2}\text{CIF}_q(S^0, \Gb)\}$ the following upper bound holds:
\begin{align}
    \Vert\hat{\beta}(\lambda) - \beta^0 \Vert_{q} \leq & \frac{4|S^0|^{1/q}\lambda}{\text{CIF}_q(S^0, \Gb)}
\end{align}
We now apply lemma~\ref{lemma:high-prob-bounds} to obtain a bound with high probability for $\Vert\Gb - \hat{\Gb}\Vert_{\infty}$ and the event $\{z_{\infty}\leq \lambda\}$. Let $t = 2\log p$ and set
\begin{equation*}
    \lambda := k\sigma_C\sigma_E\sqrt{\frac{t+\log p}{\min_{e\in \mathcal{E}}n_e}} = k\sqrt{3}\sigma_C\sigma_E\sqrt{\frac{\log p}{\min_{e\in \mathcal{E}}n_e}} \xrightarrow[]{} 0
\end{equation*}
where the universal constant $k$ is defined in the lemma. Given that choice of $\lambda$ we get by lemma~\ref{lemma:high-prob-bounds}: 
\begin{align*}
    & \mathbb{P} \Big(z_{\infty} \leq \lambda \Big) \geq 1 -\frac{2}{p^2} \qquad \text{and} \qquad \mathbb{P} \Big(\Vert\hat{\Gb}-\Gb\Vert_{\infty} \leq \frac{\sigma_X}{\sigma_E}\lambda\Big) \geq 1 -\frac{2}{p}
\end{align*}
In addition, as by assumption we have $\frac{1}{\text{CIF}_q(S^0, \Gb)}\sqrt{\frac{\log p}{\min_{e\in \mathcal{E}}n_e}} \xrightarrow[]{} 0$, we get that eventually
\begin{align*} 
\sqrt{\frac{\log p}{\min_{e\in \mathcal{E}}n_e}} & \leq \frac{\text{CIF}_q(S^0, \Gb)}{4\sqrt{3}k|S^0|\sigma_X\sigma_C}
\\ \frac{\sigma_X}{\sigma_E}\lambda & \leq \frac{\text{CIF}_q(S^0, \Gb)}{4|S^0|}
\end{align*}
The high probability bounds above then control the two events leading to inequality~\eqref{equation:bound-CIF}. Therefore with probability at least $1-\frac{4}{p}$, we have that 
\begin{align*}
    \Vert\hat{\beta}(\lambda) - \beta^0 \Vert_{q} \leq & \frac{4|S^0|^{1/q}\lambda}{\text{CIF}_q(S^0, \Gb)}
    \\ \leq & \frac{4\sqrt{3}k\sigma_C\sigma_E|S^0|^{1/q}}{\text{CIF}_q(S^0, \Gb)}\sqrt{\frac{\log p}{\min_{e\in \mathcal{E}}n_e}}
\end{align*}
hence the result.
\end{proof}

\begin{lemma}\label{lemma:high-prob-bounds}
Assume that $X^{e}_j$ are $\sigma_X^2$ sub-gaussian, $\epsilon_Y^e$ are $\sigma_E^2$ sub-gaussian, and that $R^c$ are $\sigma^{2}_C$ sub-gaussian for all $e\in \mathcal{E}, j \in [p], c\in \mathcal{C}$ and some fixed $\sigma_X^2, \sigma_E^2, \sigma_C^2>0$. There exists a universal constant $k>0$, such that for any $t>0$:
\begin{align*}
    & \mathbb{P} \Bigg(z_{\infty} \leq k\sigma_C\sigma_E\max\Big(\frac{t + \log p}{\min_{e\in \mathcal{E}}n_{e}},\sqrt{\frac{t + \log p}{{\min_{e\in \mathcal{E}}n_{e}}}}\Big)\Bigg) \geq 1 -2e^{-t}
    \\ & \mathbb{P} \Bigg(\Vert\hat{\Gb}-\Gb\Vert_{\infty} \leq k\sigma_C\sigma_X\max\Big(\frac{t + 2\log p}{\min_{e\in \mathcal{E}}n_{e}},\sqrt{\frac{t + 2\log p}{{\min_{e\in \mathcal{E}}n_{e}}}}\Big)\Bigg) \geq 1 -2e^{-t}
\end{align*}
\end{lemma}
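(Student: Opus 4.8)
The plan is to reduce both bounds to Bernstein-type concentration for averages of independent products of sub-Gaussian random variables, and then pass to the maximum over coordinates via a union bound, which is precisely where the $\log p$ and $2\log p$ factors emerge.

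First I would rewrite the two quantities coordinatewise. For the $c$-th coordinate of $\hat{\Zb} - \hat{\Gb}\beta^0$, using $Y^{e_c} - \Xb^{e_c,T}\beta^0 = \epsilon_Y^{e_c}$ from the linear SEM, we obtain
\[
\big(\hat{\Zb} - \hat{\Gb}\beta^0\big)_c = \frac{1}{n_{e_c}}\sum_{i\in[n_{e_c}]} R_i^c\,\epsilon_{Y,i}^{e_c}.
\]
By Proposition~\ref{proposition:linear_constraints} the constraint-inducing variable $R^c$ is independent of $\epsilon_Y^{e_c}$, and $\epsilon_Y^{e_c}$ is centered, so each summand has mean zero. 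Similarly, the $(c,j)$ entry of $\hat{\Gb} - \Gb$ is the centered average $\frac{1}{n_{e_c}}\sum_{i} \big(R_i^c X_{j,i}^{e_c} - \mathbb{E}[R^c X_j^{e_c}]\big)$. In both cases we have an average, i.i.d.\ within the environment $e_c$, of centered products of two sub-Gaussian variables.

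Next I would invoke the standard fact that the product of two sub-Gaussian random variables is sub-exponential, with sub-exponential norm controlled by the product of the two sub-Gaussian norms; thus $R_i^c\epsilon_{Y,i}^{e_c}$ is sub-exponential with parameter $\lesssim \sigma_C\sigma_E$, and the centered $R_i^c X_{j,i}^{e_c}$ with parameter $\lesssim \sigma_C\sigma_X$. Applying Bernstein's inequality to a single fixed coordinate or entry yields a tail bound of the form $2\exp\!\big(-c_0\, n_{e_c}\min(s^2/\nu^2,\, s/\nu)\big)$ with $\nu\in\{\sigma_C\sigma_E,\ \sigma_C\sigma_X\}$. Inverting this bound to solve for $s$ produces exactly the $\max\!\big(t'/n,\ \sqrt{t'/n}\big)$ form in the statement (the $\sqrt{\cdot}$ branch from the Gaussian regime, the linear branch from the exponential tail), and replacing $n_{e_c}$ by its uniform lower bound $\min_{e}n_e$ makes the per-coordinate bound valid simultaneously for every constraint.

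Finally I would take a union bound. The quantity $z_\infty$ is a maximum over the $|\mathcal{C}|\le p$ coordinates, so allocating a per-coordinate budget of $t+\log p$ controls all of them at confidence $1-2e^{-t}$; the matrix deviation $\|\hat{\Gb}-\Gb\|_\infty$ is an entrywise maximum over $|\mathcal{C}|\cdot p\le p^2$ entries, so the budget becomes $t+2\log p$, which accounts for the extra factor of two in the second bound. The main obstacle here is bookkeeping rather than conceptual: one must combine the sub-exponential Bernstein inequality with the heterogeneous environment sizes and the correct cardinality of the union bound, and verify that the independence and centering supplied by Proposition~\ref{proposition:linear_constraints} legitimately remove the mean in the $z_\infty$ term (the $\hat{\Gb}-\Gb$ term requires only explicit centering, not independence).
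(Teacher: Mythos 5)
Your proposal is correct and follows essentially the same route as the paper's proof: the paper likewise identifies the coordinates of $\hat{\Zb}-\hat{\Gb}\beta^0$ with centered averages of $R^c_i\epsilon^{e_c}_{Y,i}$, uses the Orlicz-norm inequality $\Vert XY\Vert_{\psi_1}\leq \Vert X\Vert_{\psi_2}\Vert Y\Vert_{\psi_2}$ to get sub-exponential summands with parameters $\sigma_C\sigma_E$ and $\sigma_C\sigma_X$, applies Bernstein's inequality and inverts it to the $\max(t'/n,\sqrt{t'/n})$ form, and concludes with union bounds over $p$ and $p|\mathcal{C}|\leq p^2$ terms, exactly accounting for the $\log p$ versus $2\log p$ budgets. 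No gaps.
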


\begin{proof}
We will prove the result by relying on concentration inequalities for sub-gaussian and sub-exponential random variables. We use Orlicz spaces and norms since this allows us to bound the products of quantities easily, by invoking inequalities that we discuss in the following. We refer to \cite{vershynin2018high} for further details on the use of Orlicz spaces in concentration inequalities. We also introduce universal constants, finite positive real numbers that do not depend on the other elements in the problem. The Orlicz norm of a random variable $X$ with respect to an Orlicz function $\psi$ is defined as 
\begin{equation*}
    \Vert X\Vert_{\psi} := \inf\Big\{t>0 \; : \; \mathbb{E}\Big[\psi\Big(\frac{X}{t}\Big)\Big] \leq 1 \Big\}
\end{equation*}
The Orlicz space with respect to $\psi$ is the space of random variables with finite Orlicz norm. Given the choices of $\psi_1, \psi_2$ defined below, we get that the corresponding Orlicz spaces are the families of sub-gaussian and sub-exponential random variables respectively.
\begin{align*}
    & \psi_1 : x\mapsto e^{x} - 1 \qquad \psi_2 : x\mapsto e^{x^2} -1
\end{align*}
These two spaces are connected by the following inequality that applies for any variables $X,Y$:
\begin{equation*}
    \Vert XY \Vert_{\psi_1} \leq \Vert X\Vert_{\psi_2}\Vert Y \Vert_{\psi_2}
\end{equation*}
Hence the products of random variables we have are sub-exponential as products of sub-gaussian random variables. The following inequalities hold for an universal constant $k_0$ that does not depend on the random variables.
\begin{align*}
    & \Vert R^c \epsilon^{e_c}\Vert_{\psi_1} \leq  \Vert R^{c}\Vert_{\psi_2}\Vert \epsilon^{e_c}\Vert_{\psi_2} \leq k_0\sigma_C\sigma_E
    \\ & \Vert R^c X^{e_c}_j - \mathbb{E}\big[R^c X^{e_c}_j\big]\Vert_{\psi_1} \leq \Vert R^{c}\Vert_{\psi_2}\Vert X^{e_c}_j\Vert_{\psi_2} \leq k_0\sigma_C\sigma_X
\end{align*}
where we relied on the fact that there exists a universal constant $k_{00}$, independent of the choice of the random variable, such that for any $X$ we have $\Vert X -\mathbb{E}[X]\Vert_{\psi_1} \leq k_{00}\Vert X\Vert_{\psi_1}$, and, for a $\sigma^2$ sub-gaussian random variable, there is another universal constant $k_{01}$ such that $\Vert X \Vert_{\psi_2} \leq k_{01}\sigma$. We apply Bernstein's inequality to the product $R^c\epsilon^{e_c}$ (cf. Theorem 2.8.1 in \cite{vershynin2018high}):
\begin{align*}
    \mathbb{P}\Big(\Big|\frac{1}{n_{e_c}}\sum_{i\in [n_{e_c}]}R^c_i\epsilon_i^{e_c}\Big| \geq t\Big) & \leq 2\exp\Big(-k_1 \min\Big(\frac{(n_{e_c}t)^2}{n_{e_c}\Vert R^c\epsilon^{e_c}\Vert_{\psi_1}^2}, \frac{n_{e_c}t}{\Vert R^c\epsilon^{e_c}\Vert_{\psi_1}}\Big)\Big)
    \\ & \leq 2\exp\Big(-k_1 \min\Big(\frac{n_{e_c}t^2}{k_0^2\sigma_C^2\sigma_E^2}, \frac{n_{e_c}t}{k_0\sigma_C\sigma_E}\Big)\Big)
    \\ & \leq 2\exp\Big(-k_1 n_{e_c} \min\Big(\frac{1}{k_0^2}, \frac{1}{k_0})\min\Big(\frac{t^2}{\sigma_C^2\sigma_E^2}, \frac{t}{\sigma_C\sigma_E}\Big)\Big)
    \\ & \leq 2\exp\Big(-k_2 n_{e_c} \min\Big(\frac{t^2}{\sigma_C^2\sigma_E^2}, \frac{t}{\sigma_C\sigma_E}\Big)\Big)
\end{align*}
Therefore we get by inverting the term in the exponential bound:
\begin{align*}
   2\exp(-t) \geq & \mathbb{P}\Big(\Big|\frac{1}{n_{e_c}}\sum_{i\in [n_{e_c}]}R^c_i\epsilon_i^{e_c}\Big|  \geq \sigma_C\sigma_E\max\Big(\frac{t}{n_{e_c}k_2},\sqrt{\frac{t}{n_{e_c}k_2}}\Big)\Big)  
   \\ \geq & \mathbb{P}\Big(\Big|\frac{1}{n_{e_c}}\sum_{i\in [n_{e_c}]}R^c_i\epsilon_i^{e_c}\Big| \geq \sigma_C\sigma_E\max\Big(\frac{t}{n_{e_c}},\sqrt{\frac{t}{n_{e_c}}}\Big)\max\Big(\frac{1}{k_2},\sqrt{\frac{1}{k_2}}\Big)\Big)
   \\ = & \mathbb{P}\Big(\Big|\frac{1}{n_{e_c}}\sum_{i\in [n_{e_c}]}R^c_i\epsilon_i^{e_c}\Big| \geq k_3\sigma_C\sigma_E\max\Big(\frac{t}{n_{e_c}},\sqrt{\frac{t}{n_{e_c}}}\Big)\Big)
\end{align*}
where $k_0, k_1, k_2, k_3$ are universal constants. Analogously for the product $R^cX^{e_c}_j$ we get that:
\begin{align*}
   \mathbb{P}\Big(\Big|\frac{1}{n_{e_c}}\sum_{i\in [n_{e_c}]}R^c_iX^{e_c}_{j,i} - \mathbb{E}\big[R^c X^{e_c}_j\big]\Big|  \geq k_3\sigma_C\sigma_X\max\Big(\frac{t}{n_{e_c}},\sqrt{\frac{t}{n_{e_c}}}\Big)\Big) \leq 2\exp(-t)
\end{align*}
We conclude in both cases by applying an union bound. For the term $z_{\infty}$ we have:
\begin{align*}
    & \mathbb{P}\Big(z_{\infty} \geq k_3\sigma_C\sigma_E\max\Big(\frac{t + \log p}{\min_{e\in \mathcal{E}}n_{e}},\sqrt{\frac{t + \log p}{{\min_{e\in \mathcal{E}}n_{e}}}}\Big)\Big)
    \\ = &\mathbb{P}\Big(\bigcup_{c\in [p]}\Big\{\Big|\frac{1}{n_{e_c}}\sum_{i\in [n_{e_c}]}R^c_i\epsilon_i^{e_c}\Big| \geq k_3\sigma_C\sigma_E\max\Big(\frac{t + \log p}{\min_{e\in \mathcal{E}}n_{e}},\sqrt{\frac{t + \log p}{{\min_{e\in \mathcal{E}}n_{e}}}}\Big)\Big\}\Big)
    \\ \leq & \sum_{c\in [p]} \mathbb{P}\Big(\Big|\frac{1}{n_{e_c}}\sum_{i\in [n_{e_c}]}R^c_i\epsilon_i^{e_c}\Big| \geq k_3\sigma_C\sigma_E\max\Big(\frac{t + \log p}{n_{e_c}},\sqrt{\frac{t + \log p}{{n_{e_c}}}}\Big)\Big)
    \\ \leq & 2p \exp\big(-(t + \log p)\big)
    \\ \leq & 2\exp(-t)
\end{align*}
For the term $\Vert \hat{\Gb} -\Gb\Vert_{\infty}$ we have:
\begin{align*}
    & \mathbb{P}\Big(\Vert \hat{\Gb} -\Gb\Vert_{\infty} \geq k_3\sigma_C\sigma_X\max\Big(\frac{t + 2\log p}{\min_{e\in \mathcal{E}}n_{e}},\sqrt{\frac{t + 2\log p}{{\min_{e\in \mathcal{E}}n_{e}}}}\Big)\Big)
    \\ = &\mathbb{P}\Big(\bigcup_{\substack{c\in \mathcal{C}\\j\in [p]}}\Big\{\Big|\frac{1}{n_{e_c}}\sum_{i\in [n_{e_c}]}R^c_iX^{e_c}_{j,i}- \mathbb{E}\big[R^c X^{e_c}_j\big]\Big| \geq k_3\sigma_C\sigma_X\max\Big(\frac{t + 2\log p}{\min_{e\in \mathcal{E}}n_{e}},\sqrt{\frac{t + 2\log p}{{\min_{e\in \mathcal{E}}n_{e}}}}\Big)\Big\}\Big)
    \\ \leq & \sum_{{\substack{c\in \mathcal{C}\\j\in [p]}}} \mathbb{P}\Big(\Big|\frac{1}{n_{e_c}}\sum_{i\in [n_{e_c}]}R^c_iX^{e_c}_{j,i}- \mathbb{E}\big[R^c X^{e_c}_j\big]\Big| \geq k_3\sigma_C\sigma_X\max\Big(\frac{t + 2\log p}{n_{e_c}},\sqrt{\frac{t + 2\log p}{{n_{e_c}}}}\Big)\Big)
    \\ \leq & 2p|\mathcal{C}| \exp\big(-(t + 2\log p)\big)
    \\ \leq & 2\frac{|\mathcal{C}|}{p}\exp(-t) \leq 2\exp(-t)
\end{align*}
Therefore we get the result.
\end{proof}

\subsection{Proof of Proposition~\ref{proposition:support_recovery}}

\begin{proof}
For completeness, we adapt the proof in \cite{rothenhausler2019causal} to show that 
\begin{equation*}
    \lim \mathbb{P}\big(\min_{j\in S^0}|\hat{\beta}(\lambda)_j| > 0\big) \xrightarrow[]{} 1.
\end{equation*}
In the event
\begin{equation*}
    \Vert\hat{\beta}(\lambda) - \beta^0 \Vert_{\infty} \leq \frac{K\sigma_C\sigma_E}{\text{CIF}_{\infty}(S^0, \Gb)}\sqrt{\frac{\log p}{\min_{e\in \mathcal{E}}n_e}}
\end{equation*}
the beta-min condition implies
\begin{align*}
    0 < & \min_{j\in S^0} |\beta^0_j| - \frac{K\sigma_C\sigma_E}{\text{CIF}_{\infty}(S^0, \Gb)}\sqrt{\frac{\log p}{\min_{e\in \mathcal{E}}n_e}}
    \\ \leq & \min_{j\in S^0}|\hat{\beta}(\lambda)_j|.
\end{align*}
This completes the proof.
\end{proof}

\subsection{Proof of Proposition~\ref{proposition:identifiability}}

Our proof is based on two key properties of the graphical structure of the interventional DAGs $(\bar{G}^e)_e$: they all share a same topological ordering---the topological ordering from $\bar{G}^0$ still holds when intervening on covariates---and the nodes of randomized covariates $\phi(e)$ in environment $e$ have no incoming edges by definition of randomization. These two properties are independent of the response node, and therefore to simplify our proofs we consider a different graph structure by marginalizing out the response node as follows. Given the causal models $\mathcal{M}^0$ and $\mathcal{M}^e_{\phi(e)}$, we define the marginal distributions of $\mathbb{P}^0, \mathbb{P}^e$ over $\Xb$, denoted $\mathbb{P}^0_X, \mathbb{P}^e_X$. These factorize in DAGs $\bar{G}^0_X=([p], \bar{E}^0_X)$ and $\bar{G}^e_X=([p], \bar{E}^e_X)$ where the set of nodes $[p]$ represents covariate nodes. The edges in $\bar{E}^0_X, \bar{E}^e_X$ are the same as in $\bar{E}^0, \bar{E}^e$ for those that are not connecting $Y$ to another node. Previous edges in $(\bar{E}^0, \bar{E}^e)$ that connected $Y$ to other nodes are replaced by edges in $\bar{E}^0_X, \bar{E}^e_X$ between covariates that connect in $\bar{G}^0, \bar{G}^e$ every parent of $Y$ to every child of $Y$. Finally, graphs $\bar{G}^e_X$ share the same previously mentioned two properties as $\bar{G}^e$: they all share a same topological ordering given by $\bar{G}^0_X$ and nodes of randomized covariates $\phi(e)$ in environment $e$ have no incoming edges in $\Bar{G}^e_X$.

Assume without loss of generality that the order $[p] = \{1,\dots, p\}$ is a topological order of $\bar{G}^0_X$. We consider the reversed lexicographical order between two subsets $A, B \subset [p]$, $A = \{a_1, \dots, a_{n_{A}}\}$ and $B = \{b_1, \dots b_{n_B}\}$, where $a_1 > a_2 > \dots > a_{n_A}$ (and analogously for $B$), as the total order given by
\begin{equation}
    A \succ B \Longleftrightarrow \big(\exists i | (a_j = b_j \forall j<i) \;\text{and}\; (a_i>b_i )\big) \;\text{or}\; (n_A > n_B \;\text{and}\; \{a_1, \dots,a_{n_B}\} = B)
\end{equation}
For a set of environments $\mathcal{E}$ we define the set $\phi(\mathcal{E}):=\{\phi(e) \subset [p], e\in \mathcal{E}\}$ of subsets $\phi(e)$ of indices in $[p]$ that index the variables intervened in environment $e \in \mathcal{E}$. Such set has a maximal element $\max\big(\phi(\mathcal{E})\big)$. For two sets of environments $\mathcal{E}_1, \mathcal{E}_2$, we define an order $\mathcal{E}_1 \succeq \mathcal{E}_2$ by comparing their maximal elements: $\max\big(\phi(\mathcal{E}_1)\big) \succeq \max\big(\phi(\mathcal{E}_2)\big)$. We now prove Proposition~\ref{proposition:identifiability}.

\begin{proof}

\paragraph{Uniqueness}
We first prove the uniqueness statement. Assume that there are two functions $\bar{f}_1$ and $\bar{f}_2$ that follow the decomposition given in equation~\eqref{decomposition_function} such that the orthogonality conditions given by equation~\eqref{orthogonality_intervention} hold and consider the difference $f := \bar{f}_1 - \bar{f}_2$. By linearity $f$ follows the decomposition given in equation~\eqref{decomposition_function}. Additionally, for all $e\in \mathcal{E}$, for all square-integrable $h:\mathbb{R}^{|\phi(e)|}\longrightarrow \mathbb{R}$, by subtracting the two orthogonality constraints for $\bar{f}_1$ and $\bar{f}_2$ given by equation~\eqref{orthogonality_intervention}, we have
\begin{equation}\label{recursion_hyp}
    0 = \mathbb{E}^e[h(\Xb_{\phi(e)})f(\Xb)]
\end{equation}
Applying Proposition~\ref{prop:orthogonality-identifiability} from the Appendix we get that $f = 0$ over the support of the random variables which by assumption is the same across environments, hence the uniqueness of $\bar{f}$. 

\paragraph{Identifiability}
We finally show that $f^0$ satisfies equation~\eqref{orthogonality_intervention} whenever $f^0 \in \mathcal{F}_{\mathcal{E}}$. We have that the structural equation defining the response variable is given by $Y = f^0(\Xb_{S^0}) + \epsilon_Y$. Then, for all $e\in \mathcal{E}$, we have that the residual $\epsilon_Y$ in environment $e$ is independent of the intervened variables $\Xb_{\phi(e)}$ under $\mathbb{P}^e$. Therefore, given that the residuals are centered random variables, we get that 
\begin{equation}
    \mathbb{E}^e[h(\Xb_{\phi(e)})(Y - f^0(\Xb_{S^0}))] = 0
\end{equation}
This implies that $f^0 = \bar{f}$ whenever $f^0 \in \mathcal{F}_{\mathcal{E}}$. 
\end{proof}

\begin{proposition}\label{prop:orthogonality-identifiability}
Assume the model of environments defined in \eqref{equation:perturbed-nonlinear-SEM}, let $f \in \mathcal{F}_{\mathcal{E}}$, i.e. it can be decomposed as follows:
\begin{equation}\label{f-decomposition}
    f(\xb) = \sum_{e \in \mathcal{E}} f_{e}(\xb_{\phi(e)})
\end{equation}
for some functions square-integrable $(f_{e})_{e \in \mathcal{E}}$ defined over the subsets of covariates indexed by $\phi(e)$. Assume that all $\mathbb{P}^e$ have the same support. If for all $e\in \mathcal{E}$, for all $h:\mathbb{R}^{|\phi(e)|}\rightarrow \mathbb{R}$ square-integrable we have
\begin{equation}\label{f-orthogonality}
    \mathbb{E}^e[h(\Xb_{\phi(e)})f(\Xb)] = 0
\end{equation}
then $f=0$.
\end{proposition}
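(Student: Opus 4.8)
The statement is a pure ``orthogonality implies vanishing'' claim, so I would first translate the hypothesis \eqref{f-orthogonality} into the projection form $\mathbb{E}^e[f(\Xb)\mid \Xb_{\phi(e)}]=0$ for every $e\in\mathcal{E}$: ranging over all square-integrable test functions $h$ is exactly saying that the $\mathbb{P}^e$-conditional expectation of $f$ given the randomized coordinates is zero. The plan is then to induct on the reversed lexicographic order of $\max\big(\phi(\mathcal{E})\big)$ defined in the lead-in. The base case is when this maximal intervention set is a singleton (or empty): if some environment randomizes a coordinate in isolation, the projection condition for that environment directly forces the corresponding additive piece to vanish on the shared support, and a short check disposes of the remaining trivial pieces. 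The two structural facts recalled just before the proposition are what make the induction run: all the $\bar G^e_X$ share one topological ordering (take it to be $1<\dots<p$), and the randomized nodes $\phi(e)$ have no incoming edges in $\bar G^e_X$, so under $\mathbb{P}^e$ the variables $\{X_j:j\in\phi(e)\}$ are jointly independent and each is independent of every covariate that precedes it in the ordering.

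For the inductive step I would fix an environment $e^*$ with $\phi^*:=\phi(e^*)$ maximal in reversed lexicographic order and set $m:=\max\phi^*$, which by maximality is the largest index randomized anywhere, so that $f$ depends only on coordinates with index $\le m$. Writing $f=f_{e^*}(\Xb_{\phi^*})+\sum_{e\neq e^*}f_e(\Xb_{\phi(e)})$ and applying $\mathbb{E}^{e^*}[\,\cdot\mid\Xb_{\phi^*}]$ gives
\begin{equation*}
    f_{e^*}(\Xb_{\phi^*}) = -\sum_{e\neq e^*}\mathbb{E}^{e^*}\!\big[f_e(\Xb_{\phi(e)})\mid \Xb_{\phi^*}\big].
\end{equation*}
Now I would use the independence structure to show each summand on the right is, after conditioning, a function of a \emph{strict} subset of the coordinates of $\phi^*$. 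Concretely, by maximality $\phi(e)\not\supseteq\phi^*$ for $e\neq e^*$, so there is an index in $\phi^*\setminus\phi(e)$; since the coordinates of $\phi^*$ are mutually independent in $\mathbb{P}^{e^*}$ and each is independent of its non-descendants (the lower-indexed covariates), the conditional law of the coordinates $\Xb_{\phi(e)\setminus\phi^*}$ given $\Xb_{\phi^*}$ does not see that missing index, so the corresponding conditional expectation carries no dependence on it. This forces $f_{e^*}$ to coincide (a.s., hence on the common support) with a function supported on a proper subset $\psi\subsetneq\phi^*$.

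Having replaced $f_{e^*}$ by a function on a proper subset, $f$ is now a sum over the effective intervention sets $\{\phi(e):e\neq e^*\}\cup\{\psi\}$, and since a proper subset is strictly smaller in reversed lexicographic order and every $\phi(e)$ with $e\neq e^*$ was already $\prec\phi^*$, the maximal intervention set of the new decomposition is strictly smaller; the orthogonality conditions \eqref{f-orthogonality} persist for this rewritten $f$, so the induction hypothesis yields $f=0$ on the shared support. The main obstacle I anticipate is precisely the collapse step, in the case where several environments share the top index $m$: for those the conditioning does not kill the $X_m$-dependence, and one must instead argue with the second-largest, third-largest, \dots\ elements of $\phi^*$ in turn, using reversed lexicographic maximality to guarantee that each competing environment misses one of the top coordinates of $\phi^*$ and the topological ordering to rule out that the averaged-out (non-randomized) coordinates reintroduce that dependence. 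Making this peeling of the coordinates of $\phi^*$ from the top down fully rigorous—essentially an inner induction nested inside the outer one—is the delicate part; the final passage from ``vanishes in $L^2(\mathbb{P}^e)$ for each $e$'' to ``$f=0$ pointwise'' is then routine given the assumption that all $\mathbb{P}^e$ share the same support.
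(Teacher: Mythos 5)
Your proposal is correct and follows the same skeleton as the paper's proof: a recursion on the reversed lexicographic order of $\max\big(\phi(\mathcal{E})\big)$, driven by the two structural facts that all $\bar G^e_X$ share one topological ordering and that randomized nodes are parentless, with the cross-environment terms killed by exactly the conditional independence \eqref{equation:CIproof} that the paper establishes via moral ancestral graphs. Where you genuinely diverge is in the mechanism of the reduction step. The paper tests $f$ against functions $h$ of $\Xb_{\phi_0}$ satisfying $\mathbb{E}^{e_0}[h(\Xb_{\phi_0})\mid\Xb_{\phi_0\setminus\{i\}}]=0$ for every $i$, shows these annihilate every $f_e$ with $e\neq e_0$, and then invokes Lemma~\ref{lemma:decomposition-function} (an inclusion--exclusion/ANOVA-type decomposition for independent coordinates) to conclude that $f_{e_0}$ splits into leave-one-out pieces. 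You instead condition directly, writing $f_{e^*}(\Xb_{\phi^*})=-\sum_{e\neq e^*}\mathbb{E}^{e^*}[f_e(\Xb_{\phi(e)})\mid\Xb_{\phi^*}]$ and showing each summand is measurable with respect to a proper subset of $\phi^*$; this bypasses the decomposition lemma entirely and gives an explicit formula for the collapsed top component, at the cost of putting the full weight of the argument on the graphical step. Three points to tighten: (i) the ``missing index'' must be chosen as the first point of disagreement $a_i$ in the reversed lexicographic comparison of $\phi^*$ with $\phi(e)$, since only then is every element of $\phi(e)\setminus\phi^*$ (and hence its ancestor set) below $a_i$, which is what makes $X_{a_i}$ isolated in the moral ancestral graph; once that index is fixed, no inner induction is needed --- one conditional-independence application per competing environment suffices, so the ``delicate part'' you flag is exactly the paper's $(*)$ step and nothing more; (ii) $f_{e^*}$ collapses to a \emph{sum} of functions on possibly different leave-one-out subsets $\phi^*\setminus\{a_{i(e)}\}$, not a single $\psi$, though this does not affect the descent since each such subset is strictly smaller in the order; (iii) the orthogonality conditions for the new subsets are inherited from environment $e^*$ by the tower property, which you should state explicitly since the induction hypothesis requires an environment attached to each new index set. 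With these repairs your argument is a complete and arguably leaner proof of the identifiability statement.
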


\begin{proof}
We prove the result by recursion: we show that, if $f$ satisfies the conditions of the lemma for $\mathcal{E}$, then there exists another set of environments $\tilde{\mathcal{E}}$, such that $\mathcal{E} \succ \tilde{\mathcal{E}}$ (strictly), and $f$ satisfies conditions \eqref{f-decomposition} and \eqref{f-orthogonality} with $\tilde{\mathcal{E}}$. We recursively show that $f$ must satisfy a decomposition of the type \eqref{f-decomposition} with increasingly fewer variables and interactions. Given that the set of sets of subsets of $[p]$ is finite, after a finite number of steps we get that $f$ must satisfy the decomposition \eqref{f-decomposition} for $\mathcal{E} = \emptyset$ (i.e. $f$ is constant). We then conclude that $f=0$ using condition \eqref{f-orthogonality} with $h=1$.

Consider the maximal element $\phi_0:= \max\big(\phi(\mathcal{E})\big)$, uniquely attained at $e_0$, i.e. $\phi_0 = \phi(e_0)$. Let $h:\mathbb{R}^{|\phi_0|}\longrightarrow \mathbb{R}$ a square-integrable function under $\mathbb{P}^{e_0}$ such that for all $i \in \phi_0$,
\begin{equation*}
    \mathbb{E}^{e_0}[h(\Xb_{\phi_0})|\Xb_{\phi_0 \setminus \{i\}}] = 0
\end{equation*}
under the distribution $\mathbb{P}^{e_0}$ from environment $e_0$. By assumption \eqref{f-orthogonality}, and the decomposition \eqref{f-decomposition}, we have that 
\begin{align*}
    0 = & \mathbb{E}^{e_0}[h(\Xb_{\phi_0})f(\Xb)]
    \\ = & \mathbb{E}^{e_0}[h(\Xb_{\phi_0})\sum_{e \in \mathcal{E}}f_e(\Xb_{\phi(e)})]
    \\ = & \sum_{e: \phi_0\succ \phi(e)}\mathbb{E}^{e_0}[h(\Xb_{\phi_0})f_e(\Xb_{\phi(e)})] + \mathbb{E}^{e_0}[h(\Xb_{\phi_0})f_{e_0}(\Xb_{\phi_0})]
\end{align*}
Now for all $e \neq e_0$, we have $\phi_0 \succ \phi(e)$. Let $\phi_0 =: \{a_1,\dots, a_{n_A}\}$, $\phi(e) =: \{b_1, \dots, b_{n_B}\}$. We define a subset of indices $\Delta_e$ as follows. If there exists $i$ such that $a_j = b_j$ for all $j < i$, and $a_i>b_i$, then let $\Delta_e := \{a_1,\dots,a_{i-1}\}\cup [a_{i}-1]$. Otherwise, if $n_A > n_B$ and $\{a_1, \dots,a_{n_B}\} = \phi(e)$, then $\Delta_e := \{a_1,\dots,a_{n_{B}}\}\cup [a_{n_B +1}-1]$ (so it can be written as in the first case with $i = n_{B}+1$). In both cases, $\phi(e) \subset \Delta_e$, $\phi_0\setminus \{a_i\} \subset \Delta_e$ and $a_i \notin \Delta_e$. Then
\begin{align*}
    \mathbb{E}^{e_0}[h(\Xb_{\phi_0})f_e(\Xb_{\phi(e)})] = & \mathbb{E}^{e_0}\big[\mathbb{E}^{e_0}[h(\Xb_{\phi_0})f_e(\Xb_{\phi(e)})|\Xb_{\Delta_e}]\big]
    \\ = & \mathbb{E}^{e_0}\big[\mathbb{E}^{e_0}[h(\Xb_{\phi_0})|\Xb_{\Delta_e}]f_e(\Xb_{\phi(e)})\big]
    \\ \stackrel{(*)}{=} & \mathbb{E}^{e_0}\big[\mathbb{E}^{e_0}[h(\Xb_{\phi_0})|\Xb_{\phi_0\setminus \{a_i\}}]f_e(\Xb_{\phi(e)})\big]
    \\ = & 0
\end{align*}
where we used the fact that by assumption on $h$, we have $\mathbb{E}^{e_0}[h(\Xb_{\phi_0})|\Xb_{\phi_0\setminus \{a_i\}}] = 0$, and we later prove $(*)$. We then get that
\begin{equation*}
    0 = \mathbb{E}^{e_0}[h(\Xb_{\phi_0})f_{e_0}(\Xb_{\phi_0})] 
\end{equation*}
We now use Lemma~\ref{lemma:decomposition-function}, given that by assumption $\Xb_{\phi_0}$ are randomized, independently one of another under $\mathbb{P}^{e_0}$. We have that the following holds over the support of the variables indexed by $\phi_0$ in environment $e_0$, where $g_i :\mathbb{R}^{|\phi_0|-1}\rightarrow \mathbb{R}$ are square-integrable functions:
\begin{equation*}
    f_{e_0}(\xb_{\phi_0}) = \sum_{i \in \phi_0} g_{i}(\xb_{\phi_0 \setminus \{i\}})
\end{equation*}
We get that $f$ follows the following decomposition as in equation \eqref{f-decomposition} over the support of $\mathbb{P}^{e_0}$, which by assumption is the same as that of the other environment distributions. We write:
\begin{align*}
    f(\xb) = & \sum_{e \in \mathcal{E}\setminus \{\phi_0\}} f_e(\xb_{\phi(e)}) + \sum_{i \in \phi_0} g_{i}(\xb_{\phi_0 \setminus \{i\}})
    \\ = & \sum_{e \in \tilde{\mathcal{E}}} \tilde{f}_e(\xb_{\phi(e)})
\end{align*}
where we defined a new set of environments $\tilde{\mathcal{E}}$. The maximal element in $\tilde{\mathcal{E}}$ is smaller than $\phi_0$, as for all $i\in \phi_0$, $\phi_0 \succ \phi_0 \setminus \{i\}$, therefore $\mathcal{E} \succ \tilde{\mathcal{E}}$. Now for any $i\in \phi_0$, we can use $e_0$ as the environment where variables in $\phi_0 \setminus\{i\}$ are perturbed, so that condition \eqref{f-orthogonality} still holds. We thus get that $f$ satisfies the same assumptions for the new $\tilde{\mathcal{E}}$. This concludes the recursion.

The remaining statement to prove is the equality $(*)$ above, which is a consequence of the following identity:
\begin{equation*}
    \mathbb{E}^{e_0}[h(\Xb_{\phi_0})|\Xb_{\Delta_e}] = \mathbb{E}^{e_0}[h(\Xb_{\phi_0})|\Xb_{\phi_0\setminus \{a_i\}}]
\end{equation*}
We now prove this identity holds, for which it suffices to show the following conditional independence statement holds under $\mathbb{P}^{e_0}_X$:
\begin{align*}
    & \Xb_{\phi_0} \independent \Xb_{\Delta_e \setminus \big\{\phi_0 \setminus\{a_i\}\big\}}|\Xb_{\phi_0\setminus\{a_i\}}
\end{align*}
which is equivalent to
\begin{align}\label{equation:CIproof}
X_{a_i}\independent \Xb_{\Delta_e \setminus \big\{\phi_0\setminus\{a_i\}\big\}}|\Xb_{\phi_0\setminus\{a_i\}}
\end{align}
To prove the last conditional independence statement, we rely on the equivalence between d-separation and separation in the moral ancestral graph \citep{pearl2013identifying}. We use here our assumptions on how the model $\mathcal{M}^{e_0}$ is generated. We know that $\mathbb{P}^{e_0}_X$ factorizes in the extended graph $\bar{G}^{e_0}_X$. We need to show that $X_{a_i}$ and $\Xb_{\Delta_e \setminus \big\{\phi_0\setminus\{a_i\}\big\}}$ are separated by $\Xb_{\phi_0\setminus \{a_i\}}$ in the moral ancestral graph of variables $\Xb_{\Delta_e \cup \{a_i\}}$ (union of all variables in the conditional independence statement) with respect to $\bar{G}^{e_0}_X$. Variables indexed by $\phi_0$ are randomized, hence do not have any ancestors in $\bar{G}^{e_0}_X$. Therefore the ancestral graph of $\Xb_{\Delta_e \cup \{a_i\}}$ does not contain any additional nodes. Also, variables $\Xb_{\{a_1,\dots,a_i\}}$ do not have descendants in the ancestral graph: by topological ordering, variables $\Xb_{[a_i -1]}$ can not be descendants of $\Xb_{\{a_1,\dots,a_i\}}$. And any variable within $\Xb_{\{a_1,\dots,a_i\}}$ can not be descendant of any other node as they do not have ancestors. Therefore, variables $\Xb_{\{a_1,\dots,a_i\}}$ are isolated in the ancestral graph, and moralizing the ancestral graph does not connect these nodes to any other node. Therefore, in particular we get the separation statement between $X_{a_i}$ and $\Xb_{\Delta_e \setminus \big\{\phi_0\setminus\{a_i\}\big\}}$ by $\Xb_{\phi_0\setminus\{a_i\}}$.

Importantly, the assumptions on the distributions are needed only to show that the conditional independence statement~\eqref{equation:CIproof} holds, as this is only a property of the extended graph $\bar{G}^{e_0}_X$. The assumptions on $\mathcal{M}^{e_0}$ impose constraints on $\bar{G}^{e_0}_X$: the choice of the structural equation functions defining the covariates as well as the joint distribution of the non-randomized disturbance variables $\{\epsilon_j\}_{j\notin \phi_{0}}$ is irrelevant for the validity of the proof.
\end{proof}

\begin{lemma}\label{lemma:decomposition-function}
Consider a square-integrable function $g:\mathbb{R}^p \longrightarrow \mathbb{R}$. Let $\Xb = (X_1, \dots, X_p)$ be independent random variables. Assume that, for any square integrable $h:\mathbb{R}^p \longrightarrow \mathbb{R}$ satisfying
\begin{equation*}
    \mathbb{E}[h(\Xb)|\Xb_{-i}] = 0 \quad \forall i \in [p]
\end{equation*}
where $\Xb_{-i} := \{X_j, j\neq i\}$, the following holds:
\begin{equation}
    \mathbb{E}[h(\Xb)g(\Xb)] = 0
\end{equation}
Then there exist functions $g_i:\mathbb{R}^{p-1}\longrightarrow \mathbb{R}$ for $i\in [p]$ such that
\begin{equation*}
    g(\xb) = \sum_{i=1}^p g_{i}(\xb_{-i})
\end{equation*}
over the support of the random variables $\Xb$.
\end{lemma}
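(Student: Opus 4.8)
The plan is to recognize the hypothesis as an orthogonality statement in $L^2$ and to exploit that the conditional-expectation operators commute because the $X_i$ are independent. I would work in the Hilbert space $L^2(\mathbb{P})$, where $\mathbb{P}$ is the law of $\Xb$, and for each $i\in[p]$ define the operator $P_i h := \mathbb{E}[h(\Xb)\mid \Xb_{-i}]$, which is the orthogonal projection onto the closed subspace of $\sigma(\Xb_{-i})$-measurable square-integrable functions. The first structural fact to establish is that these projections commute: by independence of the coordinates and the tower property, $P_iP_j h = \mathbb{E}[h\mid\Xb_{-\{i,j\}}] = P_jP_i h$, and more generally $\prod_{i\in T}P_i\,h = \mathbb{E}[h\mid\Xb_{-T}]$ for any $T\subseteq[p]$, where $\Xb_{-T}$ denotes the coordinates outside $T$ (and the empty product is the identity).

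Next, set $Q_i := I - P_i$. Since the $P_i$ are commuting orthogonal projections, so are the $Q_i$, and the product $\prod_{i\in[p]}Q_i$ is the orthogonal projection onto $W := \bigcap_{i\in[p]}\ker P_i$, i.e.\ onto the subspace of those $h$ with $\mathbb{E}[h\mid\Xb_{-i}]=0$ for every $i$. The hypothesis of the lemma says precisely that $\mathbb{E}[h(\Xb)g(\Xb)]=0$ for every $h\in W$, i.e.\ that $g\perp W$. Because $\prod_i Q_i$ projects orthogonally onto $W$, this is equivalent to $\big(\prod_{i\in[p]}Q_i\big)g = 0$.

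I would then expand the product and read off the decomposition. Using $\prod_i(I-P_i)=\sum_{T\subseteq[p]}(-1)^{|T|}\prod_{i\in T}P_i$ together with $\prod_{i\in T}P_i\,g=\mathbb{E}[g\mid\Xb_{-T}]$, the identity $\big(\prod_i Q_i\big)g=0$ becomes $\sum_{T\subseteq[p]}(-1)^{|T|}\mathbb{E}[g\mid\Xb_{-T}]=0$. Isolating the $T=\emptyset$ term (which equals $g$) yields
\[
    g = \sum_{\emptyset\neq T\subseteq[p]}(-1)^{|T|+1}\,\mathbb{E}[g\mid\Xb_{-T}],
\]
an equality holding $\mathbb{P}$-almost surely, hence over the support of $\Xb$. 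Each summand $\mathbb{E}[g\mid\Xb_{-T}]$ with $T\neq\emptyset$ does not depend on the coordinates in $T$, so it is a function of $\xb_{-i}$ for any chosen $i\in T$; grouping the finitely many terms by such a chosen omitted index $i$ and absorbing signs produces functions $g_i:\mathbb{R}^{p-1}\to\mathbb{R}$ with $g(\xb)=\sum_{i=1}^p g_i(\xb_{-i})$.

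The main obstacle is the functional-analytic bookkeeping of the second paragraph: verifying carefully that the $P_i$ genuinely commute---this is exactly where independence of the $X_i$ is used, and the claim would fail without it---and that a product of commuting orthogonal projections is itself the orthogonal projection onto the intersection of their ranges, so that orthogonality to $W$ is equivalent to annihilation by $\prod_i Q_i$. Once that correspondence is in place the combinatorial expansion is routine, and the only remaining care is to note that all identities are almost-sure and that square-integrability of $g$ guarantees each conditional expectation is well defined.
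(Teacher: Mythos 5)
Your proof is correct and is essentially the paper's argument in operator-theoretic dress: the alternating sum $\sum_{T\subseteq[p]}(-1)^{|T|}\mathbb{E}[g\mid\Xb_{-T}]$ is exactly the function $h(\Xb)=\sum_{I\subseteq[p]}(-1)^{p-|I|}\mathbb{E}[g(\Xb)\mid\Xb_I]$ that the paper constructs, and your appeal to the standard fact that commuting orthogonal projections compose to the orthogonal projection onto the intersection of their ranges neatly packages the paper's two explicit verifications (that $h$ lies in $W$ and that $\mathbb{E}[h^2]=0$). Both arguments hinge on the same use of independence to collapse iterated conditional expectations, so this is the same route with the final step abstracted rather than computed by hand.
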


\begin{proof}
Denote $Y:= g(\Xb)$. Define 
\begin{equation*}
    h(\Xb) := \sum_{I \subset [p]} (-1)^{p-|I|}\mathbb{E}[Y | \Xb_I]
\end{equation*}
We have that, for any $i \in [p]$, 
\begin{align*}
    \mathbb{E}[h(\Xb)|\Xb_{-i}] = & \sum_{I \subset [p]} (-1)^{p-|I|}\mathbb{E}[\mathbb{E}[Y | \Xb_I]|\Xb_{-i}]
    \\ = & \sum_{\substack{I \subset [p]\\ i\in I}} (-1)^{p-|I|}\mathbb{E}[\mathbb{E}[Y | \Xb_I]|\Xb_{-i}] + \sum_{\substack{I \subset [p]\\ i \notin I}} (-1)^{p-|I|}\mathbb{E}[\mathbb{E}[Y | \Xb_I]|\Xb_{-i}]
    \\ = & \sum_{\substack{I \subset [p]\\ i\in I}} (-1)^{p-|I|}\mathbb{E}[Y | \Xb_{I\setminus \{i\}}] + \sum_{\substack{I \subset [p]\\ i \notin I}} (-1)^{p-|I|}\mathbb{E}[Y | \Xb_I]
    \\ = & \sum_{J \subset [p]\setminus\{i\}} (-1)^{p-|J|-1}\mathbb{E}[Y | \Xb_{J}] + \sum_{\substack{I \subset [p]\\ i \notin I}} (-1)^{p-|I|}\mathbb{E}[Y | \Xb_I]
    \\ = & \; 0
\end{align*}
By assumption we thus get that $\mathbb{E}[h(\Xb)g(\Xb)] = 0$. We can decompose $h(\Xb)$ as follows:
\begin{align*}
    h(\Xb) = & \; g(\Xb) + \sum_{\substack{I \subset [p]\\ |I|\leq p-1}} (-1)^{p-|I|}\mathbb{E}[Y | \Xb_I ]
\end{align*}
Therefore we get:
\begin{align*}
    \mathbb{E}[h(\Xb)^2] = & \; \mathbb{E}[h(\Xb)(g(\Xb) - \sum_{\substack{I \subset [p]\\ |I| \leq p-1}} (-1)^{p-|I|}\mathbb{E}[Y | \Xb_I])]
    \\ = & \; -\mathbb{E}[h(\Xb)\sum_{\substack{I \subset [p]\\ |I| \leq p-1}} (-1)^{p-|I|}\mathbb{E}[Y | \Xb_I]]
    \\ = & \; -\sum_{\substack{I \subset [p]\\ |I| \leq p-1}}(-1)^{p-|I|}\mathbb{E}[h(\Xb) \mathbb{E}[Y | \Xb_I]]
    \\ = & \; -\sum_{\substack{I \subset [p]\\ |I| \leq p-1}}(-1)^{p-|I|}\mathbb{E}[\mathbb{E}[h(\Xb)| \Xb_I] \mathbb{E}[Y | \Xb_I]]
    \\ = & \; 0
\end{align*}
where we used the fact that $\mathbb{E}[h(\Xb)| \Xb_I] = \mathbb{E}[\mathbb{E}[h(\Xb)| \Xb_{-i}]| \Xb_I]$ for some $i\notin I$ as $|I|\leq p-1$. Therefore $h(\Xb) = 0$, and therefore
\begin{equation*}
    g(\xb) = \sum_{\substack{I \subset [p]\\ |I| \leq p-1}} (-1)^{p-|I|-1}\mathbb{E}[Y | \Xb_I = \xb_I] =  \sum_{i=1}^p g_{i}(\xb_{-i})
\end{equation*}
over the support of $\Xb$ for some choice of functions $g_i:\mathbb{R}^{p-1}\longrightarrow \mathbb{R}$.
\end{proof}

\end{document}